\documentclass[acmsmall]{acmart}

\usepackage{booktabs} 

\usepackage[ruled, vlined, linesnumbered]{algorithm2e} 

\SetAlFnt{\small}
\SetAlCapFnt{\small}
\SetAlCapNameFnt{\small}
\SetAlCapHSkip{0pt}
\IncMargin{-\parindent}

\usepackage{epstopdf}

\usepackage{multirow,multicol,enumitem,epsfig,url,array,makecell,balance,tabularx}
\usepackage{amsmath}
\usepackage{graphicx}
\usepackage[misc,geometry]{ifsym}

\def\done{}
\def\header{\vspace{2.5mm} \noindent}

\newcommand{\E}{{\mathbb E}}

\def\inN{N^{in}}
\def\outN{N^{out}}

\newcommand{\pf}{p_f}
\def\a{\alpha}

\def\eps{\epsilon}
\def\epi{\hat{\pi}}
\newcommand{\vect}[1]{\boldsymbol{#1}}
\def\rese{\pi^\circ}
\def\resi{r}

\def\cblue{\color{black}}
\def\rmax{r_{max}}

\def\ssppr{FORA}
\def\rsum{r_{sum}}
\def\e{\epsilon}

\citestyle{acmauthoryear}
\acmJournal{TODS}
\acmVolume{1}
\acmNumber{1}
\acmArticle{1}
\acmYear{2019}
\acmMonth{8}




\begin{document}

\begin{sloppy}

\title{Efficient Algorithms for Approximate Single-Source Personalized PageRank Queries}
\thanks{Sibo Wang is supported by CUHK Direct Grant No. 4055114, CUHK University Startup Grant No. 4930911 and No. 5501570, and a donation from Tencent. Xiaokui Xiao is supported by MOE, Singapore under grant MOE2015-T2-2-069, and by NUS, Singapore under an SUG. Zhewei Wei is supported by in part by National Natural Science Foundation of China (No. 61832017 and No. 61732014)}
\author{Sibo Wang}
\email{swang@se.cuhk.edu.hk,}
\affiliation{The Chinese University of Hong Kong}
\author{Renchi Yang}
\email{yang0461@ntu.edu.sg, }
\affiliation{Nanyang Technological University}
\author{Runhui Wang}
\email{runhui.wang@rutgers.edu, }
\affiliation{Rutgers University}
\author{Xiaokui Xiao}
\email{xkxiao@nus.edu.sg, }
\affiliation{National University of Singapore}
\author{Zhewei Wei}
\authornote{Corresponding author}
\email{zhewei@ruc.edu.cn, }
\affiliation{Renmin University of China}
\author{Wenqing Lin}
\email{edwlin@tencent.com, }
\affiliation{Tencent}
\author{Yin Yang}
\email{yyang@hkbu.edu.qa, }
\affiliation{Hamad Bin Khalifa University}
\author{Nan Tang}
\email{ntang@hkbu.edu.qa,}
\affiliation{Qatar Computing Research Institute, HBKU }

\begin{abstract}
	
	Given a graph $G$, a source node $s$ and a target node $t$, the \emph{personalized PageRank} (\emph{PPR}) of $t$ with respect to $s$ is the probability that a random walk starting from $s$ terminates at $t$. An important variant of the PPR query is \emph{single-source PPR} (\emph{SSPPR}), which enumerates all nodes in $G$, and returns the top-$k$ nodes with the highest PPR values with respect to a given source $s$. PPR in general and SSPPR in particular have important applications in web search and social networks, e.g., in Twitter's Who-To-Follow recommendation service. However, PPR computation is known to be expensive on large graphs, and resistant to indexing. Consequently, previous solutions either use heuristics, which do not guarantee result quality, or rely on the strong computing power of modern data centers, which is costly.
	
	Motivated by this, we propose effective index-free and index-based algorithms for approximate PPR processing, with rigorous guarantees on result quality. We first present {\ssppr}, an approximate SSPPR solution that combines two existing methods Forward Push (which is fast but does not guarantee quality) and Monte Carlo Random Walk (accurate but slow) in a simple and yet non-trivial way, leading to both high accuracy and efficiency. Further, {\ssppr} includes a simple and effective indexing scheme, as well as a module for top-$k$ selection with high pruning power. Extensive experiments demonstrate that the proposed solutions are orders of magnitude more efficient than their respective competitors.
	Notably, on a billion-edge Twitter dataset, FORA answers a top-500 approximate SSPPR query within 1 second, using a single commodity server.
\end{abstract}

\keywords{Personalized PageRank, Forward Push, Random Walk}
\begin{CCSXML}
<ccs2012>
<concept>
<concept_id>10002950.10003624.10003633.10010917</concept_id>
<concept_desc>Mathematics of computing~Graph algorithms</concept_desc>
<concept_significance>500</concept_significance>
</concept>
</ccs2012>
\end{CCSXML}

\ccsdesc[500]{Mathematics of computing~Graph algorithms}
\renewcommand{\shortauthors}{Wang et al.}
\maketitle


\section{Introduction} \label{sec:intro}

\emph{Personalized PageRank} (\emph{PPR}) is a fundamental operation first proposed by Google \cite{page1999pagerank}, a major search engine. Specifically, given a graph $G$ and a pair of nodes $s, t$ in $G$, the PPR value $\pi(s, t)$ is defined as the probability that a random walk starting from $s$ (called the \emph{source node}) terminates at $t$ (the \emph{target node}), which reflects the importance of $t$  with respect to $s$. One particularly useful variant of PPR is the \emph{single-source PPR} (\emph{SSPPR}), which takes as input a source node $s$ and a parameter $k$, and returns the top-$k$ nodes in $G$ with the highest PPR values with respect to $s$. According to a recent paper \cite{gupta2013wtf}, Twitter, a leading microblogging service, applies SSPPR in their Who-To-Follow application, which recommends to a user $s$ (who is a node in the social graph) a number of other users (with high PPR values with respect to $s$) that user $s$ might want to follow. Clearly, such an application computes SSPPR for every user in the social graph on a regular basis. Hence, accelerating PPR computation may lead to improved user experience (e.g., faster response time), as well as reduced operating costs (e.g., lower power consumption in the data center).

Similar to PageRank \cite{page1999pagerank}, PPR computation on a web-scale graph is immensely expensive, which involves extracting eigenvalues of a $n \times n$ matrix, where $n$ is the number of nodes that can reach millions or even billions in a social graph. Meanwhile, unlike PageRank, PPR values cannot be easily materialized: since each pair of source/target nodes lead to a different PPR value, storing all possible PPR values requires $O(n^2)$ space, which is infeasible for large graphs. For these reasons, much previous work focuses on \emph{approximate PPR} computation (defined in Section \ref{sec:problemdef}), which provides a controllable tradeoff between the execution time and result accuracy. Meanwhile, compared to heuristic solutions, approximate PPR provides rigorous guarantees on result quality.

However, even under the approximate PPR definition, SSPPR computation remains a challenging problem, since it requires sifting through all nodes in the graph. To our knowledge, the majority of existing methods (e.g., \cite{lofgren2014fast,lofgren2015personalized,WangTXYL16}) focus on approximate pair-wise (i.e., with given source and target nodes) PPR computations. A naive solution is to compute pair-wise PPR $\pi(s, v)$ for each possible target node $v$, and subsequently applies top-$k$ selection. Clearly, the running time of this approach grows linearly to the number of nodes in the graph, which is costly for large graphs.

Motivated by this, we propose {\em \ssppr} (short for {\bf FO}ward Push and {\bf RA}ndom Walks), an efficient algorithm for approximate SSPPR computation. The basic idea of {\ssppr} is to combine two existing solutions in a simple and yet non-trivial way, which are (i) Forward Push \cite{AndersenBCHMT07}, which can either computes the exact SSPPR results at a high cost, or terminate early but with no guarantee at all on the result quality, and (ii) Monte Carlo \cite{fogaras2005towards}, which samples and executes random walks and provides rigorous guarantees on the accuracy of SSPPR results, but is rather inefficient. In fact, this idea is so effective that even without any indexing, basic {\ssppr} already outperforms its main competitors BiPPR \cite{lofgren2015personalized} and HubPPR \cite{WangTXYL16}. Then, we describe a simple and effective indexing scheme for {\ssppr}, as well as a novel algorithm for top-$k$ selection. Extensive experiments using several real graphs demonstrate that { \ssppr} is more than two orders of magnitude faster than BiPPR, and more than an order of magnitude faster than HubPPR. In particular, on a billion-edge Twitter graph, { \ssppr} answers top-500 SSPPR query within 1 second, using a single commodity server.


\section{Background} \label{sec:prelim}
\subsection{Problem Definition} \label{sec:problemdef}

Let $G= (V, E)$ be a directed graph. In case the input graph is undirected, we simply convert it to a directed one by treating each edge as two directed edges of opposing directions. Given a source node $s \in V$ and a decay factor $\alpha$, a random walk (or more precisely, random walk with restart \cite{FujiwaraNYSO12}) from $s$ is a traversal of $G$ that starts from $s$ and, at each step, either (i) terminates at the current node with $\alpha$ probability, or (ii) proceeds to a randomly selected out-neighbor of the current node. For any node $v \in V$, the personalized PageRank (PPR) $\pi(s,v)$ of $v$ with respect to $s$ is then the probability that a random walk from $s$ terminates at $v$ \cite{page1999pagerank}.

A \emph{single-source PPR} (\emph{SSPPR}) query takes as input a graph $G$, a source node $s$, and a parameter $k$, and returns the top-$k$ nodes with the highest PPR values with respect to $s$, together with their respective PPR values. This paper focuses on approximate SSPPR processing, and we first define a simpler version of the approximate SSPPR without top-$k$ selection (called approximate whole-graph SSPPR), as follows.

\begin{definition}[Approximate Whole-Graph SSPPR]\label{def:ss}
Given a source node $s$, a threshold $\delta$, an error bound $\eps$, and a failure probability $\pf$, an approximate whole-graph SSPPR query returns an estimated PPR $\epi(s,v)$ for each node $v\in V$, such that for any $\pi(s,v)>\delta$,
\begin{equation}\label{eqn:ss-def}
|\pi(s,v)- \epi(s,v)| \le \eps\cdot \pi(s,v)
\end{equation}
holds with at least $1- \pf$ probability. \done
\end{definition}

The above definition is consistent with existing work, e.g., \cite{lofgren2014fast,lofgren2015personalized,WangTXYL16}. Next we define the approximate top-$k$ SSPPR, as follows.

\begin{definition}[Approximate Top-$k$ SSPPR]\label{def:topk}
Given a source node $s$, a threshold $\delta$, an error bound $\eps$, a failure probability $\pf$, and a positive integer $k$, an approximate top-$k$ SSPPR query returns a sequences of $k$ nodes, $v_1, v_2,\cdots, v_k$, such that with probability $1-\pf$, for any $i\in [1,k]$ with $\pi(s,v^*_i)> \delta$,
\begin{align}
\epi(s,v_i) \ge (1-\eps) \pi(s, v_i)\label{eqn:topk-self} \\
\pi(s,v_i)  \ge (1-\eps) \cdot \pi(s, v^*_i) \label{eqn:topk-k}
\end{align}
hold with at least $1-\pf$ probability, where $v^*_i$ is the node whose actual PPR with respect to $s$ is the $i$-th largest. \done
\end{definition}
Note that Equation~\ref{eqn:topk-self} ensures the accuracy of the estimated PPR values, while Equation \ref{eqn:topk-k} guarantees that the $i$-th result returned has a PPR value close to the $i$-th largest PPR score. This definition is consistent with previous work \cite{WangTXYL16}. Following previous work \cite{lofgren2015personalized,lofgren2014fast,WangTXYL16}, we assume that $\delta = O(1/n)$, where $n$ is the number of nodes in $G$. The intuition is that, we provide approximation guarantees for nodes with above-average PPR values.

In addition, most applications of personalized PageRank concern web graphs and social networks, in which case the underlying input graphs are generally {\em scale-free}. That is, for any $k \ge 1$, the fraction $f(k)$ of nodes in $G$ that have $k$ edges satisfies
\begin{equation}\label{eqn:sf}
f(k) = c\cdot k^{- \gamma},
\end{equation}
where $\gamma$ is a parameter with $2 \le \gamma \le 3$, and $c$ is a constant smaller than $1$. It can be verified that, in a scale-free graph with $2 \le \gamma \le 3$, the average node degree $m/n = O(\log n)$.
We will analyze the asymptotic performance of our algorithm on both general graphs and scale-free graphs.
Table \ref{tbl:notations} lists the frequently-used notations throughout the paper.

\header
{\bf Remark.} Our algorithms can also handle SSPPR queries where the source $s$ is not fixed but sampled from a node distribution. Interested readers are referred to Section~\ref{app:foraextension} for details.

\begin{table}%
\caption{Frequently used notations.}
\label{tbl:notations}
\begin{tabular}{|p{0.5in}|p{4in}|}
    \hline
    {\bf Notation} &  {\bf Description}\\

    \hline
    $G$=$(V,E)$   & The input graph $G$ with node set $V$ and edge set $E$\\
    \hline
    $n,m$   & The number of nodes and edges in $G$, respectively\\
    \hline
    \hline
    $\outN(v)$ & The set of  out-neighbors of node $v$ \\
    \hline
     $\inN(v)$ & The set of  in-neighbors of node $v$ \\
     \hline
    $\pi(s, t)$   & The exact PPR value of $t$ with respect to $s$\\
    \hline
    $\alpha$   & The probability that a random walk terminates at a step\\
    \hline
    $\delta, \epsilon, \pf$   & Parameters of an approximate PPR query, as in Definitions \ref{def:ss} and \ref{def:topk}\\
    \hline
    $\rmax$ &   The residue threshold for local update \\
    \hline
    $r(s,v)$ & The residue of $v$ during a local update process from $s$ \\
    \hline
    $\rese(s,v)$ & The reserve of $v$ during a local update process from $s$ \\
    \hline
    $\rsum$ & The sum of all nodes' residues during a local update process from $s$\\
    \hline
    $\pi(s,v_k^*)$ & The $k$-th largest PPR value with respect to $s$ \\ \hline
\end{tabular}
\end{table}%

\subsection{Main Competitors} \label{sec:existingsolutions}

\header
{\bf Monte-Carlo.} A classic solution for approximate PPR processing is the {\em Monte-Carlo (MC)} approach \cite{fogaras2005towards}. Given a source node $s$, MC generates $\omega$ random walks from $s$, and it records, for each node $v$, the fraction of random walks $f(v)$ that terminate at $v$. It then uses $f(v)$ as an estimation of the PPR $\epi(s, v)$ of $v$ with respect to $s$. According to \cite{fogaras2005towards}, MC satisfies Definition \ref{def:ss} with a sufficiently large number of random walks: $\omega = \Omega\left(\frac{\log{(1/\pf)}}{\eps^2 \delta}\right)$. According to Refs. \cite{lofgren2014fast,lofgren2015personalized,WangTXYL16} as well as our experiments in Section \ref{sec:exp}, MC is rather inefficient.
Specifically, the time complexity of { MC} is $O\left(\frac{\log{(1/\pf)}}{\eps^2 \delta}\right)$. As will be explained later in Section \ref{sec:ana}, when $\delta = O(1/n)$ and the graph is scale-free, in which case $m/n = O(\log n)$, this time complexity is a factor of $1/\eps$ larger than that of {\em \ssppr} even without indexing or top-$k$ pruning.

\header
{\bf BiPPR and HubPPR.} {\em BiPPR} \cite{lofgren2015personalized} and its successor {\em HubPPR} \cite{WangTXYL16} are currently the states of the art for answering \emph{pairwise} PPR queries, in which both the source node $s$ and the target node $t$ are given, and the goal is to approximate the PPR value $\pi(s, t)$ of $t$ with respect to $s$. The main idea of BiPPR is a bi-direction search on the input graph $G$. The forward direction simply samples and executes random walks, akin to MC described above. Unlike MC, however, BiPPR requires a much smaller number of random walks, thanks to additional information provided by the backward search.

The backward search in BiPPR (dubbed as {\em reverse push}) is originally proposed in \cite{AndersenBCHMT07}, and is rather complicated. In a nutshell, the reverse push starts from the target node $t$, and recursively propagate \emph{residue} and \emph{reserve} values along the reverse directions of edges in $G$. Initially, the residue is 1 for node $t$, and 0 for all other nodes. The original reverse push \cite{AndersenBCHMT07} requires complete propagation until the residues of all nodes become very small, which is rather inefficient as pointed out in \cite{lofgren2015personalized}. BiPPR performs the same backward propagations, but terminates early when the residues of all nodes are below a pre-defined threshold. Then, the method performs forward search, i.e., random walks, utilizing the residue and reserve information computed during backward search. The main tricky part in BiPPR is how to set this residue threshold to minimize computation costs, while satisfying Inequality \ref{eqn:ss-def}. Intuitively, if the residue threshold is set too high, then the forward search requires numerous random walks to reach the approximation guarantee; conversely, if the residue threshold is too low, then the cost of backward search dominates. Ref. \cite{lofgren2015personalized} provides a careful analysis, and reports that a residue threshold of $O\left(\eps\cdot\sqrt{\frac{m\cdot \delta}{n\cdot\log{(1/\pf)}}}\right)$ strikes a good balance between forward and backward searches, and achieves a low overall cost for pair-wise PPR computation.

\begin{algorithm}[t]
\caption{Forward Push} \label{alg:lp}
\BlankLine
\KwIn{Graph $G$, source node $s$, probability $\a$, residue threshold $\rmax$ }
\KwOut{$\rese(s,v)$, $r(s,v)$ for all $v\in V$}
$r(s, s) \gets 1; r(s,v) \gets 0$ for all $v \ne s$\;
$\rese(s,v) \gets 0$ for all $v$\;
\While {$\exists v\in V$ such that $r(s, v)/|\outN(v)| > r_{max}$}
{
    \For {each $u \in \mathcal{N}^{out}(v)$}
    {
       $r(s, u) \gets r(s, u) + (1-\alpha) \cdot \frac{ r(s, v)}{|\outN(v)|}$
    }
    $\rese(s,v) \gets \rese(s,v) + \alpha  \cdot r(s, v)$\;
    $r(s, v) \gets 0$\;
}
\end{algorithm}

To extend BiPPR to SSPPR, one simple method is to enumerate all nodes in $G$, and compute the PPR value for each of them with respect to the source node $s$. The problem, however, is that the residue threshold designed in \cite{lofgren2015personalized} is not optimized for SSPPR, leading to poor performance. To explain, observe that applying BiPPR for SSPPR involves one backward search at each node in $G$, but only one single forward search from $s$. Therefore, we improve the performance of BiPPR by tuning down overhead of each backward search at the cost of a less efficient forward search.
This optimization turns out to be non-trivial, and we present it in Section \ref{app:ssbippr}. Nevertheless, the properly optimized version of BiPPR still involves high costs since it either {\em (i)}  degrades to the Monte-Carlo approach if the residue threshold is large or {\em (ii)} incurs a large number of backward searches if the residue threshold is small. 

HubPPR \cite{WangTXYL16} is an index structure based on BiPPR that features an improved algorithm for top-$k$ queries. Since HubPPR inherits the deficiencies of the BiPPR, it is not suitable for SSPPR, either. We will demonstrate this in our experiments in Section \ref{sec:exp}.


\header
{\bf Forward Push.} {\em Forward Push} \cite{AndersenCL06} is an earlier solution that is not as efficient as BiPPR and HubPPR. We describe it in detail here since the proposed solution {\ssppr} uses its components. Specifically, Forward Push can compute the \emph{exact} PPR values at a high cost. It can also be configured to terminate early, but without any guarantee on result quality. Algorithm~\ref{alg:lp} shows the pseudo-code of Forward Push for whole-graph SSPPR processing. It takes as input $G$, a source node $s$, a probability value $\alpha$, and a threshold $\rmax$; its output consists of two values for each node $v$ in $G$: a {\em reserve} $\rese(s, v)$ and a {\em residue} $\resi(s, v)$. The reserve $\rese(s, v)$ is an approximation of $\pi(s, v)$, while the residue $\resi(s, v)$ is a by-product of the algorithm. In the beginning of the algorithm, it sets $\resi(s, s) = 1$ and $\rese(s, s) = 0$, and sets $\resi(s, v) = \rese(s, v) = 0$ for any $v \ne s$ (Lines 1-2 in Algorithm~\ref{alg:lp}). Subsequently, the residue of $s$ is converted into other nodes' reserves and residues in an iterative process (Lines 3-7).

Specifically, in each iteration, the algorithm first identifies every node $v$ with $\frac{\resi(s, v)}{|\outN(v)|} > \rmax$, where $\outN$ denotes the set of out-neighbors of $v$ (Line 3). After that, it {\em propagates} part of $v$'s residue to each $u$ of $v$'s out-neighbors, increasing $u$'s residue by $(1-\alpha)\cdot \frac{\resi(s, v)}{|\outN(v)|}$. Then, it increases $v$'s reserve by $\alpha \cdot \resi(s, v)$, and resets $v$'s residue to $\resi(s, v) = 0$. This iterative process terminates when every node $v$ has $\frac{\resi(s, v)}{|\outN(v)|} \le \rmax$ (Line 3).

Andersen et al.\ \cite{AndersenCL06} show that Algorithm~\ref{alg:lp} runs in $O(1/\rmax)$ time, and that the reserve $\rese(s, v)$ can be regarded as an estimation of $\pi(s, v)$. This estimation, however, does not offer any worst-case assurance in terms of absolute or relative error. As a consequence, Algorithm~\ref{alg:lp} itself is insufficient for addressing the problem formulated in Definitions \ref{def:ss} and \ref{def:topk}.

{\cblue
\header
{\bf TopPPR.} Most recently, {\em TopPPR} \cite{WeiHX0SW18} is proposed to combine the Forward Push, Monte-Carlo, and the backward search to process the top-$k$ queries. The main idea is to use the Filter-Refinement paradigm so as to accelerate the top-$k$ query processing. They first use the Forward Push and Monte-Carlo approach to derive the upper and lower bound of the PPR $\pi(s,v)$ for each target node $v$. It further maintains a set $C$ of candidates that are the potential top-$k$ answers by examining the upper and lower bound of each node. For example, if a node $v$ has an upper bound $\pi(s,v)$ that is larger than that of the $k$-th largest lower bound. Then we know this node will not be in the top-$k$ answer. When the candidate set is sufficiently small, backward search is started from these node and refine the upper bound and lower bound of the candidate nodes adaptively. The algorithm explores the power-law property and achieves a time complexity of $O(\frac{k^{1 \over 4}\cdot n^{3 \over 4}\cdot \log{n}}{\sqrt{gap_\rho)}})$, where  $gap_\rho$ is a value that quantifies the difference between the top-$k$ and non-top-$k$ PPR values, and $\rho$ is a precision parameter to guarantee that at least $\rho$  fraction of the returned nodes are among the true top-$k$ answers. Notice that $gap_\rho$ should be no larger than $\pi(s,v_k^*)$ where $\pi(s,v_k^*)$ is the $k$-th largest PPR with respect to $s$. Therefore, their time complexity can be written as $O(\frac{k^{1 \over 4}\cdot n^{3 \over 4}\cdot \log{n}}{\sqrt{\pi(s,v_k^*))}})$. On general graphs, the time complexity will degrade to $O(\frac{m+n\cdot \log{n}}{\sqrt{\pi(s,v_k^*)}})$.  As we will see in Section \ref{sec:topk-running-time-guarantee} and Section \ref{sec:exp}, our top-$k$ algorithm achieves both better theoretical result and practical performance.

\header
{\bf Comparison with the conference version \cite{Wang17}.} We make the following new contributions over the conference version.
\begin{itemize}
\item For whole-graph SSPPR queries, we revised the time complexity analysis to derive a refined bound (Section \ref{sec:ana}). Then, in Section \ref{sec:opt}, we further present optimization techniques for whole-graph SSPPR queries. With the new optimization technique, our index-free method improves over the solution in \cite{Wang17} by 2x. Our new index-based method improves over the index-based solution in \cite{Wang17} by at least 2x and up to 3x with 2x space consumption, which demonstrates the effectiveness of the new algorithm and a good trade-off of the new algorithms between the space consumption and query efficiency.

\item For top-$k$ SSPPR queries, the solution proposed in \cite{Wang17} has a worst time complexity similar to the whole-graph SSPPR queries. In this paper,  we derive a new top-$k$ algorithm in Section \ref{sec:topk-running-time-guarantee}, whose time complexity depends on the $k$-th largest PPR value, denoted as $\pi(s,v_k^*)$.  When $\pi(s,v_k^*)$ is a constant, we then improve over the whole-graph SSPPR query by $O(1/n)$; when $\pi(s,v_k^*)$ is $O(1/n)$, then the time complexity of the top-$k$ algorithm is identical to that of the whole-graph SSPPR algorithm. Since the $k$-th largest PPR is typically in between $O(1)$ and $O(1/n)$, the new proposed algorithm improves over the solution proposed in \cite{Wang17}, and is shown to outperform the solution in \cite{Wang17} by around $5x$ in our experimental evaluation.

\item In Section \ref{sec:extensions}, we further extend our results to global PageRank. The results show that it outperforms the classic Monte-Carlo approach and the Power-Iteration method. In addition, our new top-$k$ algorithm can be further used to return the top-$k$ nodes with the hightest global PageRank with a time complexity that linearly depends on the inverse of the $k$-th largest global PageRank.

\item In the experimental evaluation, we have added four game social networks from Tencent Games to examine the effectiveness of our algorithms in real applications and 2 large synthetic datasets to examine the scalability of our proposed algorithms. Extensive experiments demonstrate that our solution is also effective on real social networks and is scalable to huge graphs with up to 8.6 billion edges.
\end{itemize}

Table \ref{tbl:theoretical-guarantees} list the time complexity and space consumption of all approximate algorithms to provide $\epsilon$ relative error guarantee for PPR values no smaller than $\delta$ with at least $1-1/n$ probability. As we can see, the proposed FORA/FORA+ achieves the best time complexity for both the whole-graph SSPPR and top-$k$ queries.

\begin{table*} [!t]
\centering
\begin{small}
  \vspace{-4mm}
\caption{Comparison of approximate whole-graph SSPPR and top-$k$ algorithms  with $\boldsymbol{1 -  1/n}$ success probability (Ref. Table \ref{tbl:notations} for the definition of $\epsilon, \delta,$ and $\pi(s,v_k^*)$).}\label{tbl:theoretical-guarantees}
\vspace{-3mm}
 \begin{tabular} {|c|c|c|c|} \hline
    \multirow{2}{*}{\bf Algorithm}  &  \multirow{2}{*}{\bf Space Overhead}& \multicolumn{2}{c|}{\bf Query Time}\\ \cline{3-4}
    & &whole-graph & Top-$k $ \\\hline
   \multirow{2}{*}{MC} & \multirow{2}{*}{0}
                                         & \multirow{2}{*}{ $O\left({\frac{\log n}{\delta\cdot \e^2} } \right)$} & $O\left({\frac{\log n }{\pi(s,v_k^*) \e^2}} \right)$ \\
   ~\cite{fogaras2005towards} & & &  (Using our top-$k$ algorithm) \\ \hline
 {BiPPR}  & {0}
                                         & \multicolumn{2}{c|}{$O\left( \frac{1}{\eps}\sqrt{\frac{m n \cdot \log{n}}{\delta}}\right)$} \\
                                    ~\cite{lofgren2015personalized}     & &  \multicolumn{2}{c|}{}\\  \hline

   {HubPPR}  &  $O(n+m)$ & \multicolumn{2}{c|}{$O\left( \frac{1}{\eps}\sqrt{\frac{m n \cdot \log{n}}{\delta}}\right)$}\\
    ~\cite{WangTXYL16}& & \multicolumn{2}{c|}{}\\\hline
   \multirow{4}{*}{ TopPPR}  &  \multirow{4}{*}{0}
                                          & \multirow{4}{*}{N.A.} & $O(\frac{m+n\cdot \log{n}}{\sqrt{\pi(s,v_k^*)}})$  \\
     & & &     (general graphs) \\ \cline{4-4}
   \cite{WeiHX0SW18} & & &     $O(\frac{k^{1 \over 4}\cdot n^{3 \over 4}\cdot \log{n}}{\sqrt{\pi(s,v_k^*)}})$  \\
    & & &     (power-law graphs) \\ \hline
   FORA &0 & $O(\min \{ \frac{\sqrt{m \cdot \log{n}}}{ \eps \cdot \sqrt{\delta}},$ & $O( \min \{ \frac{\sqrt{m  \log{n}}}{ \eps \cdot \sqrt{\pi(s,v_k^*)}},$  \\ \cline{1-2}
   FORA+&{$O(\min \{ n+ \frac{1}{\eps\cdot \sqrt{\delta}}\sqrt{m \log{(1/\pf)}},m\})$}  &  $\frac{\log{n}}{  \eps^2 \cdot \delta}\})$ & $\frac{\log{n}}{\epsilon^2 \cdot \pi(s,v_k^*)}\})$ \\ \hline
 \end{tabular}
\end{small}
\vspace{-3mm}
\end{table*}

}


\section{\ssppr}\label{sec:ssppr}

This section presents the proposed { \ssppr} algorithm. We first describe a simpler version of { \ssppr} for whole-graph SSPPR (Definition \ref{def:ss}) without indexing in Sections \ref{sec:ss-rationale} and  Sections  \ref{sec:ana}. Then, we present the indexing scheme of { \ssppr} in Section \ref{sec:ehi}. We present optimization techniques for whole-graph SSPPR in Section \ref{sec:opt} and top-$k$ query in Section \ref{sec:topk}.

\subsection{Main Idea} \label{sec:ss-rationale}

As reviewed in Section \ref{sec:existingsolutions}, (i) MC is inefficient due to a large number of random walks required to satisfy the approximation guarantee, (ii) BiPPR and HubPPR either degrade to MC, or require fewer forward random walks but still incur high cost due to numerous backward search operations, and (iii) Forward Push with early termination provides no formal guarantee on result quality. The proposed solution { \ssppr} can be understood as a combination of these methods. In particular, {\ssppr} first performs Forward Push with early termination, and subsequently runs random walks. Similar to BiPPR and HubPPR, {\ssppr} utilizes information obtained through Forward Push to significantly cut down the number of required random walks while satisfying the same result quality guarantees. But unlike BiPPR and HubPPR, in {\ssppr} there is a single invocation of Forward Push starting from the source node $s$, while BiPPR and HubPPR invokes numerous backward search operations.

\header
{\bf Novelty.} The main difference between the proposed approach and BiPPR/HubPPR is that the former combines forward push and MC, whereas the latter uses MC with backward propagation. The proposed idea (i.e., using forward instead of backward push) is indeed natural, and highly effective as shown in experiments (Section \ref{sec:exp}). Intuitively, forward push limits the search to nodes in the vicinity of the source node, which is more efficient for SSPPR compared to backward propagation, since the latter involves numerous nodes far from the source, as explained in Sections \ref{sec:existingsolutions} (which explains BiPPR).

Combining forward push with MC, however, is far from straightforward. The devil is in the details. Specifically, in  backward propagation, the residue $r(v, t)$ for a given node $v$ is bounded by $r_{max}$, which is a controllable parameter ($t$ is the destination node). In Forward Push (Algorithm \ref{alg:lp}), the corresponding concept is $r(s, v)$ ($s$ being the source node), which depends on both $r_{max}$ and the out-degree of $v$, which can be as large as $O(n)$ in the worst case. The proposed solution addresses this challenge with a novel mechanism that utilizes $r_{sum}$ (Algorithm 2), whose correctness is rigorously established in this section. The novelty of our method lies in the fact that we realize a natural, effective and yet challenging combination of forward push and MC with a non-trivial algorithmic design, explained below.

\header
{\bf Details.}
Specifically, the reason that Forward Push with early termination fails to obtain any result quality guarantee is that it uses $\rese(s, v)$ to approximate $\pi(s, v)$, and yet, the two values are not guaranteed to be close. To mitigate this deficiency, we aim to utilize the residue $\resi(s, v)$ to improve the accuracy of $\rese(s, v)$. Towards this end, we utilize the following result from \cite{AndersenCL06}:
\begin{equation}\label{eqn:layerppr}
\pi(s,t) = \rese(s,t) + \sum_{v\in V} r(s,v) \cdot \pi(v,t),
\end{equation}
for any $s$, $t$, $v$ in $G$.
Our idea is to derive a rough approximation of $\pi(v, t)$ for each node $v$ (denoted as $\pi'(v, t)$), and then combine it with the reserve of each node to compute an estimation of $\pi(s, t)$:
\begin{equation*} 
\pi(s,t) = \rese(s,t) + \sum_{v\in V} r(s,v) \cdot \pi'(v,t).
\end{equation*}
In particular, we derive $\pi'(v, t)$ by performing a number of random walks from $v$, and set $\pi'(v, t)$ to the fraction of walks that ends at $t$.

It remains to answer two key questions in {\ssppr}: (i) how many random walks do we need for each node $v$? and (ii) how should we set the residue threshold $\rmax$ in Forward Push? It turns out that although the {\ssppr} algorithm itself is simple, deriving the proper values for its parameters is rather challenging, since they must optimize efficiency while satisfying the result quality guarantee. In the following, we first present the complete {\ssppr} and answer question (i); then we answer question (ii) in Section \ref{sec:ana}.

\begin{algorithm}[t]
\caption{{\em \ssppr} for Whole-Graph SSPPR}\label{alg:ss-algo}
\BlankLine
\KwIn{Graph $G$, source node $s$, probability $\a$, threshold $\rmax$, relative error threshold $\epsilon$ }
\KwOut{Estimated PPR $\epi(s,v)$for all $v\in V$}

Invoke Algorithm \ref{alg:lp} with input parameters $G$, $s$, $\a$, and $\rmax$\;
let $r(s,v_i), \rese(s,v_i)$ be the returned residue and reserve of node $v_i$\;
Let $\rsum = \sum_{v_i\in V}r(s,v_i)$ and $\omega=\rsum\cdot\frac{(2\eps/3+2)\cdot \log{(2/\pf)}}{\eps^2\cdot \delta}$\;
Let $\epi(s,v_i)=\rese(s,v_i)$ for all $v_i \in V$\;
\For{$v_i\in V$ with $r(s,v_i)>0$}
{
    Let $\omega_i = \lceil r(s,v_i)\cdot\omega/\rsum\rceil$\;
    Let $a_i = \frac{\resi(s,v_i)}{\rsum}\cdot\frac{\omega}{\omega_i}$\;
    \For{$i=1$ to $\omega_i$}
    {
        Generate a random walk $W$ from $v_i$\;
        Let $t$ be the end point of $W$\;
        $\epi(s,t) += \frac{a_i\cdot \rsum}{\omega}$\;
    }
}
return $\epi(s,v_1),\cdots, \epi(s,v_n)$\;
\end{algorithm}

Algorithm \ref{alg:ss-algo} illustrates the pseudo-code of { \ssppr}. Given $G$, a source node $s$, a probability value $\alpha$, and a residue threshold $\rmax$, { \ssppr} first invokes Algorithm \ref{alg:ss-algo} on $G$ to obtain a reserve $\rese(s, v_i)$ and a residue $\resi(s, v_i)$ for each node $v_i$ (Line 1 in Algorithm \ref{alg:ss-algo}). After that, it computes the total residue of all nodes $\rsum$, based on which it derives a value $\omega$ that will be used to decide the number of random walks required from each node $v_i$ (Line 2). Then, it initializes the PPR estimation of each $v_i$ to be $\epi(s, v_i) = \rese(s, v_i)$, and it proceeds to inspect the nodes whose residues are larger than zero (Line 3-4).

For each $v_i$ of those nodes, it performs $\omega_i$ random walks from $v_i$, where
$$\omega_i = \left\lceil \frac{\resi(s,v_i)}{\rsum}\cdot\omega\right\rceil.$$
If a random walk ends at a node $t$, then { \ssppr} increases $\epi(s, v_i)$ by $\frac{a_i \cdot \rsum}{\omega}$, where
$$a_i = \frac{\resi(s,v_i)}{\rsum}\cdot\frac{\omega}{\omega_i}.$$
After all $v_i$ are processed, the algorithm returns $\epi(s, v_i)$ as the approximated PPR value for $v_i$ (Line 11).

To explain why { \ssppr} can provide accurate results, let us consider the $\omega_i$ random walks that it generates from a node $v_i$. Let $X_j(t)$ be a Bernoulli variable that takes value $1$ if the $j$-th random walk terminates at $t$, and value $0$ otherwise. By definition,
$$\E[X_j] = \pi(v_i, t).$$
Then, based on the definition of $\omega$, $\omega_i$, and $a_i$, we have
\begin{equation}\label{eqn:connection}
\E\left[\frac{\rsum}{\omega}\cdot\sum_{j = 1}^{\omega_i} \left(a_i \cdot X_j\right)\right] = \resi(s, v_i) \cdot \pi(v_i, t).
\end{equation}
Observe that $\frac{\rsum}{\omega}\cdot\sum_{j = 1}^{\omega_i} \left( a_i \cdot X_j\right)$ is exactly the amount of increment that $\epi(s, t)$ receives when { \ssppr} processes $v_i$ (see Lines 7-10 in Algorithm \ref{alg:ss-algo}). We denote this increment as $\psi_i$. It follows that
\begin{equation} \label{eqn:ss-exp}
\E\left[\sum_{i = 1}^{n} \psi_i\right] = \sum_{i=1}^n\resi(s, v_i) \cdot \pi(v_i, t).
\end{equation}
Combining Equations \ref{eqn:layerppr} and \ref{eqn:ss-exp}, we can see that { \ssppr} returns, for each node $v$, an estimated PPR $\epi(s, v)$ whose expectation equals $\pi(s, v)$.
Next, we will show that $\epi(s, v)$ is very close to $\pi(s, v)$ with a high probability. For this purpose, we utilize the following concentration bound:
\begin{theorem}[\cite{ChungL06}]\label{thm:conc}
Let $X_1, \cdots, X_\omega$ be independent random variables with
$$\Pr[X_i = 1]=p_i \;\; \textrm{and} \;\; \Pr[X_i=0] = 1- p_i.$$
Let $X=\frac{1}{\omega}\cdot\sum^\omega_{i=1}a_i X_i$ with $a_i> 0$, and $\nu = \frac{1}{\omega}\sum^\omega_{i=1}a_i^2\cdot p_i$. Then,
\begin{equation*}
\Pr[|X-\E[X]|\ge \phi] \le 2\cdot \exp\left(-\frac{\phi^2\cdot \omega}{2\nu + 2a\phi/3}\right),
\end{equation*}
where $a=\max\{a_1,\cdots, a_\omega\}$. \done
\end{theorem}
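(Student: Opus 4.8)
The plan is to treat Theorem~\ref{thm:conc} as a Bernstein-type tail bound and to prove it by the exponential-moment (Chernoff) method. Writing $S = \sum_{i=1}^\omega a_i X_i = \omega X$, the event $|X - \E[X]| \ge \phi$ is identical to $|S - \E[S]| \ge \omega\phi$, so it suffices to bound the upper and lower tails of $S$ separately and to combine them by a union bound (this is where the leading factor $2$ comes from). First I would center the summands by setting $Z_i = a_i(X_i - p_i)$, so that $\E[Z_i] = 0$, the $Z_i$ are independent, each satisfies $|Z_i| \le a_i \le a$, and $\mathrm{Var}(Z_i) = a_i^2 p_i(1-p_i) \le a_i^2 p_i$; hence $\sum_i \mathrm{Var}(Z_i) \le \sum_i a_i^2 p_i = \omega\nu =: V$.

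The crux is a sharp single-variable moment-generating-function estimate. For each centered bounded variable I would use $\E[Z_i^k] \le a^{k-2}\,\E[Z_i^2]$ for $k \ge 2$ (since $|Z_i| \le a$), expand $\E[e^{\lambda Z_i}] = 1 + \sum_{k\ge 2}\lambda^k \E[Z_i^k]/k!$, and invoke the elementary inequality $k! \ge 2\cdot 3^{k-2}$ to sum the resulting tail as a geometric series. This yields, for $0 < \lambda < 3/a$,
\begin{equation*}
\E[e^{\lambda Z_i}] \le \exp\left(\mathrm{Var}(Z_i)\cdot \frac{\lambda^2/2}{1 - \lambda a/3}\right),
\end{equation*}
using $1 + x \le e^x$. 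The factor of $3$ appearing in the denominator $2\nu + 2a\phi/3$ of the theorem is dictated precisely by this step, so choosing the moment estimate and the $k! \ge 2\cdot 3^{k-2}$ trick correctly is the delicate part.

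By independence the exponential moments multiply and the variances add, giving $\E[e^{\lambda(S-\E[S])}] \le \exp\bigl(V\cdot\tfrac{\lambda^2/2}{1-\lambda a/3}\bigr)$. Applying Markov's inequality to $e^{\lambda(S-\E[S])}$ and then choosing $\lambda = \omega\phi/(V + a\omega\phi/3)$ --- the value that balances the linear and quadratic contributions in the exponent --- collapses the exponent to $-(\omega\phi)^2/(2V + 2a\omega\phi/3)$. Substituting $V \le \omega\nu$ and cancelling a factor of $\omega$ reproduces $\exp\bigl(-\phi^2\omega/(2\nu + 2a\phi/3)\bigr)$; repeating the argument with $-Z_i$ controls the lower tail, and the union bound supplies the leading $2$. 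I expect the main obstacle to be neither the tensorization nor the optimization over $\lambda$ (both routine), but rather pinning down the per-term MGF bound with the exact Bernstein constant, so that the denominator matches $2\nu + 2a\phi/3$ rather than a weaker one.
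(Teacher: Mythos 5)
Your proposal is correct, but note that the paper contains no proof of this statement at all: Theorem~\ref{thm:conc} is imported verbatim from the cited reference \cite{ChungL06} and used as a black box in Lemma~\ref{lem:approximation}. Your blind derivation is the standard Bernstein/Chernoff argument, and it is essentially the proof that underlies the cited result. The key steps all check out: centering gives $Z_i = a_i(X_i - p_i)$ with $|Z_i|\le a$ and $\sum_i \mathrm{Var}(Z_i) \le \omega\nu =: V$; the moment bound $\E[|Z_i|^k] \le a^{k-2}\E[Z_i^2]$ together with $k! \ge 2\cdot 3^{k-2}$ (valid for all $k\ge 2$ by induction) yields the per-term MGF bound $\exp\bigl(\mathrm{Var}(Z_i)\cdot\frac{\lambda^2/2}{1-\lambda a/3}\bigr)$ for $\lambda a<3$; and with $t=\omega\phi$ the choice $\lambda = t/(V+at/3)$ satisfies $\lambda a < 3$ and collapses the exponent exactly to $-t^2/(2V+2at/3)$, since $1-\lambda a/3 = V/(V+at/3)$. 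Two small points you handled implicitly but correctly: replacing $V$ by its upper bound $\omega\nu$ only weakens the exponent (the bound is monotone increasing in $V$), so the substitution is legitimate; and the lower tail follows by applying the identical argument to $-Z_i$, which has the same bound $a$ and the same variance, with the union bound supplying the factor $2$. So your proof is complete and matches the target constants exactly; the only divergence from the paper is that the paper never proves this theorem, whereas you have supplied the self-contained argument behind the citation.
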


To apply Theorem~\ref{thm:conc}, let us consider the $\omega' = \sum_{i=1}^n \omega_i$ random walks generated by { \ssppr}. Let $b_j = a_i$ if the $j$-th random walk starts from $v_i$. Then, we have $\max_{j} b_j = 1$, and $b_j^2 \le b_j$ for any $j$. In addition, let $Y_j(t)$ be the a random variable that equals $1$ if the $j$-th walk terminates at $t$, and $0$ otherwise. Then, by Theorem~\ref{thm:conc} and Equations \ref{eqn:layerppr} and \ref{eqn:ss-exp}, we have the following lemma.

\begin{lemma}\label{lem:approximation}
For any node $t$, given an arbitrary relative error threshold $\epsilon$, an arbitrary absolute threshold $\lambda$ we have that:
\begin{equation}\label{eqn:relative-error}
\Pr[|\pi(s,t) - \epi(s,t)|\ge \eps\cdot \pi(s,t)]\le 2\cdot \exp\left(-\frac{\eps^2\cdot \omega\cdot \pi(s,t)}{\rsum\cdot(2 + 2\eps/3)}\right).
\end{equation}
\begin{equation}\label{eqn:absolute-error}
\Pr[|\pi(s,t) - \epi(s,t)|\ge \lambda]\le 2\cdot \exp\left(-\frac{\lambda^2\cdot \omega}{\rsum\cdot(2\pi(s,t) + 2\lambda/3)}\right).
\end{equation}
\end{lemma}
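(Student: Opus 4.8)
The plan is to identify the random part of $\epi(s,t)$ as a weighted sum of independent Bernoulli trials and then invoke the Chung--Lu bound of Theorem~\ref{thm:conc} with a deviation $\phi$ chosen to match each target event. First I would collect the $\omega' = \sum_{i=1}^{n}\omega_i$ walks generated across all start nodes into one indexed family: writing $b_j = a_i$ for a walk started at $v_i$ and letting $Y_j(t)$ be the indicator that the $j$-th walk terminates at $t$ (so $\E[Y_j(t)] = \pi(v_{(j)},t) =: p_j$, where $v_{(j)}$ is the walk's start node), the stochastic increment to the estimate is
$$\epi(s,t) - \rese(s,t) = \frac{\rsum}{\omega}\sum_{j=1}^{\omega'} b_j\, Y_j(t).$$
Since these walks are generated independently, the $Y_j(t)$ are independent, and Equations~\ref{eqn:layerppr} and~\ref{eqn:ss-exp} give $\E[\epi(s,t)] = \pi(s,t)$, so $|\pi(s,t)-\epi(s,t)|$ equals the deviation of this sum from its mean.

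Next I would cast the sum into the form of Theorem~\ref{thm:conc} by applying it to $X = \frac{1}{\omega'}\sum_{j=1}^{\omega'} b_j Y_j(t)$, i.e.\ with the theorem's sample count taken to be $\omega'$. Two elementary facts drive the estimate. Because $\omega_i = \lceil r(s,v_i)\,\omega/\rsum\rceil \ge r(s,v_i)\,\omega/\rsum$, each weight satisfies $b_j = a_i \le 1$; hence $a = \max_j b_j \le 1$ and $b_j^2 \le b_j$. The latter yields $\nu = \frac{1}{\omega'}\sum_j b_j^2 p_j \le \E[X]$, and using $\rese(s,t)\ge 0$ I would further bound $\E[X] \le \frac{\omega}{\rsum\,\omega'}\,\pi(s,t)$. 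Substituting $a=1$ (legitimate because the right-hand side of Theorem~\ref{thm:conc} is monotone increasing in $a$) together with $\nu\le\E[X]$ produces a bound whose denominator is controlled purely by $\pi(s,t)$.

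Finally I would translate events on $X$ back to events on $\epi(s,t)$ through the scaling $\epi(s,t)-\rese(s,t) = \frac{\rsum\,\omega'}{\omega}\,X$, which gives $|\pi(s,t)-\epi(s,t)| = \frac{\rsum\,\omega'}{\omega}\,|X-\E[X]|$. For the relative bound I would set $\phi = \frac{\omega\,\eps\,\pi(s,t)}{\rsum\,\omega'}$ and for the absolute bound $\phi = \frac{\omega\,\lambda}{\rsum\,\omega'}$, so that $|X-\E[X]|\ge\phi$ coincides with the target event in each case. Plugging these choices into the exponent $-\frac{\phi^2\omega'}{2\nu + 2a\phi/3}$ and simplifying then yields Equations~\ref{eqn:relative-error} and~\ref{eqn:absolute-error}.

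The one point demanding care---and the reason the final bounds come out clean---is that the \emph{actual} number of walks $\omega'$ differs from the parameter $\omega$ because of the ceilings in $\omega_i = \lceil r(s,v_i)\,\omega/\rsum\rceil$. One must carry $\omega'$ through every quantity ($X$, $\E[X]$, $\nu$, and $\phi$) and verify that it cancels identically in the final exponent, leaving $\omega$ rather than $\omega'$ in the stated probabilities; this cancellation, rather than any deep probabilistic step, is the main obstacle to a careful write-up.
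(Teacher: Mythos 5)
Your proposal is correct and takes essentially the same approach as the paper's proof: the same indexed family of $\omega'$ walks with weights $b_j \le 1$, the same application of Theorem~\ref{thm:conc} with $\nu \le \E[X]$ and $a \le 1$ (the paper likewise defers replacing $a$ by $1$ to the end), the same rescaling identity $\pi(s,t)-\epi(s,t)=\frac{\rsum\,\omega'}{\omega}\left(\E[X]-X\right)$, and the same choice of $\phi$ for the relative bound, with $\omega'$ cancelling in the exponent exactly as you note. The only cosmetic difference is that you obtain Equation~\ref{eqn:absolute-error} by choosing $\phi=\omega\lambda/(\rsum\,\omega')$ directly, whereas the paper substitutes $\eps=\lambda/\pi(s,t)$ into Equation~\ref{eqn:relative-error}; the two computations are identical.
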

\begin{proof}
  Firstly, define $Y'= \frac{1}{\omega'}\sum_{j=1}^{\omega'} b_j Y_j(t)$, and $\nu= \frac{1}{\omega'}\sum_{j=1}^{\omega'}b_j^2\E[Y_j(t)]$. Let $a = \max\{b_1,\cdots, b_{\omega'}\}$. By definition, $b_j^2 \le 1$, and hence, $\nu \le \E[Y']$ and $a\le 1$.
  By Theorem \ref{thm:conc}, for any $\phi$, we have that $\Pr[|Y'-\E[Y']|\ge \phi] \le 2\cdot \exp\left(-\frac{\phi^2\cdot \omega'}{2\nu + 2a\phi/3}\right)$. Apply $\nu\le \E[Y']$, we have that:

$$\Pr[|Y'-\E[Y']|\ge \phi] \le 2\cdot \exp\left(-\frac{\phi^2\cdot \omega'}{2\E[Y'] + 2a\phi/3}\right).$$
Observe that $\frac{\omega' \cdot \rsum}{\omega}(\E[Y']-Y')=\pi(s,t)-\hat{\pi}(s,t)$, the above inequality can be rewritten as:
$$\Pr[|\pi(s,t) - \epi(s,t)|\ge \frac{\omega'\cdot \rsum}{\omega}\phi]\le 2\cdot \exp\left(-\frac{\phi^2\cdot \omega'}{2\E[Y'] + 2a\phi/3}\right).$$
Besides, by Equation \ref{eqn:connection}, we have that:
 $\E[Y'] \le \frac{\omega }{\omega'\cdot\rsum}\cdot\pi(s,t)$, it is satisfied that:
$$\Pr[|\pi(s,t) - \epi(s,t)|\ge \frac{\omega'\cdot \rsum}{\omega}\phi]\le 2\cdot \exp\left(-\frac{\phi^2\cdot \omega'}{2\frac{\omega\cdot \pi(s,t)}{\omega'\cdot \rsum} + 2a\phi/3}\right).$$
Let $\phi=\frac{\omega\cdot\eps\cdot \pi(s,t)}{\omega'\cdot \rsum}$, we have:
$$\Pr[|\pi(s,t) - \epi(s,t)|\ge \eps\cdot \pi(s,t)]\le 2\cdot \exp\left(-\frac{\eps^2\cdot \omega\cdot \pi(s,t)}{\rsum\cdot(2 + 2a\cdot\eps/3)}\right).$$
Since $a\le1$, we get that:
$$\Pr[|\pi(s,t) - \epi(s,t)|\ge \eps\cdot \pi(s,t)]\le 2\cdot \exp\left(-\frac{\eps^2\cdot \omega\cdot \pi(s,t)}{\rsum\cdot(2 + 2\eps/3)}\right).$$
By setting $\epsilon = \lambda /\pi(s,t)$, we further have that:
$$\Pr[|\pi(s,t) - \epi(s,t)|\ge \lambda]\le 2\cdot \exp\left(-\frac{\lambda^2\cdot \omega}{\rsum\cdot(2\pi(s,t) + 2\lambda/3)}\right).$$
This finishes the proof.
\end{proof}

\begin{lemma}\label{lem:layerppr-conc}
For any node $t$ with $\pi(s, t) > \delta$, Algorithm \ref{alg:ss-algo} returns an approximated PPR $\epi(s, t)$ that satisfies Equation~\ref{eqn:ss-def} with at least $1-\pf$ probability. \done
\end{lemma}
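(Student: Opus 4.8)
The plan is to derive this lemma as a direct consequence of Lemma~\ref{lem:approximation}, since the concentration argument has already been carried out there. The only remaining work is to substitute the specific value of $\omega$ chosen in Algorithm~\ref{alg:ss-algo} (Line~2) into the relative-error tail bound of Equation~\ref{eqn:relative-error}, and then invoke the hypothesis $\pi(s,t) > \delta$ to show the resulting failure probability does not exceed $\pf$.

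Concretely, I would start from Equation~\ref{eqn:relative-error}, namely the bound
\[
\Pr[|\pi(s,t) - \epi(s,t)|\ge \eps\cdot \pi(s,t)]\le 2\cdot \exp\left(-\frac{\eps^2\cdot \omega\cdot \pi(s,t)}{\rsum\cdot(2 + 2\eps/3)}\right),
\]
and plug in $\omega=\rsum\cdot\frac{(2\eps/3+2)\cdot \log{(2/\pf)}}{\eps^2\cdot \delta}$. The key observation is that this choice of $\omega$ is engineered precisely so that three factors cancel simultaneously: the $\rsum$ in the numerator against the $\rsum$ in the denominator, the $\eps^2$ against $\eps^2$, and the factor $(2\eps/3+2)$ against the identical $(2+2\eps/3)$. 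After cancellation the exponent collapses to $-\frac{\pi(s,t)}{\delta}\cdot\log(2/\pf)$, leaving the clean bound $2\exp\!\left(-\frac{\pi(s,t)}{\delta}\log(2/\pf)\right)$.

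The final step is to use the assumption $\pi(s,t) > \delta$, which gives $\pi(s,t)/\delta > 1$ and hence $\frac{\pi(s,t)}{\delta}\log(2/\pf) > \log(2/\pf)$. Since the exponential is decreasing, the failure probability is at most $2\exp(-\log(2/\pf)) = 2\cdot(\pf/2) = \pf$. Therefore Equation~\ref{eqn:ss-def} holds with probability at least $1-\pf$, as required.

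I do not expect any genuine obstacle here, as the heavy lifting resides in Lemma~\ref{lem:approximation}; the present statement is essentially a verification that the parameter $\omega$ was tuned correctly. The only point demanding minor care is confirming that the hypothesis is the strict inequality $\pi(s,t) > \delta$ (so that $\pi(s,t)/\delta > 1$ and the exponent is strictly large enough), and tracking the algebraic cancellations accurately so that the constant $2$ in front of the exponential is exactly absorbed by the $\log(2/\pf)$ term rather than leaving a stray factor. Once those are checked, the conclusion follows immediately.
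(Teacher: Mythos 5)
Your proposal is correct and follows essentially the same route as the paper's own proof: substitute the algorithm's choice $\omega=\rsum\cdot\frac{(2\eps/3+2)\cdot \log{(2/\pf)}}{\eps^2\cdot \delta}$ into the relative-error bound of Lemma~\ref{lem:approximation}, cancel the $\rsum$, $\eps^2$, and $(2\eps/3+2)$ factors, and use $\pi(s,t)>\delta$ to bound the failure probability by $2\exp(-\log(2/\pf))=\pf$. Your attention to the exact cancellation (including absorbing the leading factor of $2$ via $\log(2/\pf)$ rather than $\log(1/\pf)$) matches the paper precisely, which then merely adds the remark that $t$ was arbitrary so the per-node guarantee of Definition~\ref{def:ss} holds for every node.
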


\begin{proof}

Since $\omega=\rsum\cdot\frac{(2\eps/3+2)\cdot \log{(2/\pf)}}{\eps^2\cdot \delta}$, according to Lemma \ref{lem:approximation}, we have that:
\begin{align*}
  \Pr[|\pi(s,t) - \epi(s,t)|\ge \eps\cdot \pi(s,t)] & \le \exp\left(-\frac{\eps^2\cdot\pi(s,t)}{\rsum\cdot(2 + 2\eps/3)} \cdot  \rsum\cdot\frac{(2\eps/3+2)\cdot \log{(2/\pf)}}{\eps^2\cdot \delta}\right) \\
   & \leq 2\exp\left(-\frac{\pi(s,t)}{\delta}\cdot  \log{(2/\pf)} \right)
\end{align*}
Since $\pi(s,t)> \delta$, we have that:
$$\Pr[|\pi(s,t) - \epi(s,t)|\ge \eps\cdot \pi(s,t)] <2\exp(\log{(2/\pf)})=\pf .$$

Also notice that the target $t$ is arbitrarily chosen, and we can derive this bound for all nodes $t\in V$. Hence, the returned answer for the single-source PPR query satisfies Definition~\ref{def:ss}, which finishes the proof.
\end{proof}

\subsection{Choosing $\rmax$} \label{sec:ana}

Recall from Sections \ref{sec:existingsolutions} and \ref{sec:ss-rationale} that parameter $\rmax$ determines how quickly we can terminate Forward Push. A high value for $\rmax$ leads to low cost for Forward Push (since it can terminate early), but high cost for random walks (since a large number of them are required), and vice versa. Thus, finding the appropriate value of $\rmax$ requires modelling the overall running time of $\ssppr$. Recall that, the Forward Push runs in $O\left(\frac{1}{\rmax}\right)$ time. In addition, the expected time complexity of the random walk phase is $O\left(\rsum\cdot\frac{(2\eps/3+2)\cdot \log{(2/\pf)}}{\eps^2 }\right)$, since each random walk takes $O(1)$ expected time to generate. Observe that
\begin{align*}
 \rsum &= \sum_{v_i\in V}r(s,v_i)\le \sum_{v_i\in V} \rmax \cdot |\outN(v_i)|= m\cdot \rmax.
\end{align*}

Therefore, the expected running time of Algorithm \ref{alg:ss-algo} is
$$\textstyle O\left(\frac{1}{ \rmax}+ m \cdot \rmax \cdot\frac{(2\eps/3+2)\cdot \log{(2/\pf)}}{  \eps^2 \cdot \delta}\right).$$

Using the method of Lagrange multipliers, we can see that the above time complexity is minimized when
\begin{equation}\label{eqn:rmax}
\rmax = \frac{\eps}{\sqrt{m}} \cdot \sqrt{\frac{\delta}{(2\eps/3+2)\cdot \log{(2/\pf)}}}.
\end{equation}
Accordingly, the expected time complexity of Algorithm \ref{alg:ss-algo} becomes
$$\textstyle O\left( \frac{1}{ \eps \cdot \sqrt{\delta}}\sqrt{m \cdot (2\eps/3+2)\cdot \log{(2/\pf)}}\right).$$

However, we also note that $\rsum$ can be bounded by $1$. Therefore, we have to consider two cases.

\begin{itemize}
  \item {\bf Case 1: $ \mathbf{m\cdot \rmax \leq 1}$.} Then, it is easy to verify that $\frac{1}{ \eps \cdot \sqrt{\delta}}\sqrt{m \cdot (2\eps/3+2)\cdot \log{(2/\pf)}} \leq \frac{(2\eps/3+2)\cdot \log{(2/\pf)}}{  \eps^2 \cdot \delta}$. Then the time complexity can be bounded by $O(\frac{1}{ \eps \cdot \sqrt{\delta}}\sqrt{m \cdot (2\eps/3+2)\cdot \log{(2/\pf)}})$.
  \item {\bf Case 2: $\mathbf{ m \cdot \rmax > 1}$.} In this case, $\frac{1}{ \eps \cdot \sqrt{\delta}}\sqrt{m \cdot (2\eps/3+2)\cdot \log{(2/\pf)}} > \frac{(2\eps/3+2)\cdot \log{(2/\pf)}}{  \eps^2 \cdot \delta}$. Therefore, if we set $\rmax$ according to Equation \ref{eqn:rmax}, we will have sub-optimal performance. To remedy this issue, we set $\rmax$ to $\frac{  \eps^2 \cdot \delta}{(2\eps/3+2)\cdot \log{(2/\pf)}}$. Then the time complexity can be bounded by $O(\frac{(2\eps/3+2)\cdot \log{(2/\pf)}}{  \eps^2 \cdot \delta})$.
\end{itemize}

Therefore, combining the above two cases, the time complexity of FORA can be bounded by
$$
\textstyle O\left(\min \{ \frac{1}{ \eps \cdot \sqrt{\delta}}\sqrt{m \cdot \log{(2/\pf)}}, \frac{\log{(2/\pf)}}{  \eps^2 \cdot \delta}\}\right).
$$

When $\delta = O(1/n)$, $\pf = O(1/n)$, the above time complexity becomes
$$O\left(\frac{1}{ \eps}\min\{ \sqrt{m\cdot n \cdot \log{n}},\frac{ n\cdot\log{n}}{  \eps^2 } \}\right)$$
for general graphs.
When the graph is scale-free,  in which case $m/n=O(\log{n})$, the time complexity becomes $O\left(\frac{1}{ \eps}n\cdot \log{n}\right)$, improving over the {MC} approach by $1/\epsilon$.


\subsection{Indexing Scheme}\label{sec:ehi}

Based on { \ssppr}, we propose a simple and effective index structure to further improve the efficiency of whole-graph SSPPR queries. The basic idea is to pre-compute a number of random walks from each node $v$, and then store the destination of each walk. During query processing, if { \ssppr} requires performing $x$ random walks from $v$, we would inspect the set $S$ of random walk destinations pre-computed for $v$, and then retrieve the first $x$ nodes in $S$. As such, we avoid generating any random walks on-the-fly, which considerably reduces query overheads.

A natural question to ask is: how many random walks should we pre-compute for each node $v$? To answer this question, we first recall that, when the local update phase of { \ssppr} terminates, the residue of each node $v$ is at most $|\outN(v)| \cdot \rmax$. Combining this with Lemma~\ref{lem:layerppr-conc}, we can see that the number of random walks from $v$ required by { \ssppr} is
\begin{equation} \label{eqn:index-rwnum}
\omega_{max}(v) =\left\lceil|\outN(v)|\cdot\rmax \cdot \frac{(2\eps/3+2)\cdot \log{(2/\pf)}}{\eps^2\cdot \delta} \right\rceil.
\end{equation}

If we set $\rmax$ according to Equation~\ref{eqn:rmax} in which case $m\cdot \rmax \leq 1$, we have
\begin{align*}
\omega_{max}(v) = \left\lceil|\outN(v)|\cdot\frac{1}{\eps \cdot \sqrt{m\cdot \delta}} \cdot \sqrt{(2\eps/3+2)\cdot \log{(2/\pf)}} \right\rceil
\end{align*}

Otherwise, $\rmax$ is set to $\frac{\eps^2\cdot \delta}{(2\eps/3+2)\cdot \log{(2/\pf)}} $, we have
\begin{align*}
\omega_{max}(v) =|\outN(v)|.
\end{align*}

In summary, we pre-compute $\omega_{max}(v)$ random walks from each node $v$, and record the last nodes of those walks in our index structure. The total space overhead incurred is then bounded by
\begin{align*}
\sum_{v} \omega_{max}(v) & \leq \min\{ \sum_{v} \textstyle \left\lceil|\outN(v)|\cdot\frac{\sqrt{(2\eps/3+2)\cdot \log{(2/\pf)}}}{\eps \cdot \sqrt{m\cdot \delta}}  \right\rceil, \sum_v{|\outN(v)|} \}\\
& \le \min \{n + \frac{\sqrt{m}}{\eps \cdot \sqrt{\delta}} \cdot \sqrt{(2\eps/3+2)\cdot \log{(2/\pf)}}, m\}.
\end{align*}
Therefore, we have the following lemma.
\begin{lemma}\label{thm:idx-size}
The space consumption of our index structure is
\begin{equation}\label{eqn:idx-size}
\textstyle O\left(\min\{n+ \frac{1}{\eps}\sqrt{\frac{m \log{(1/\pf)}}{\delta}}, m\}\right).
\end{equation}
When $\delta = O(1/n)$, $\pf = O(1/n)$, and $m/n = O(\log n)$, the above space complexity becomes
$O\left(\min\{\frac{1}{ \eps}n\cdot \log{n}, m\}\right).$ \done
\end{lemma}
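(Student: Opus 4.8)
The plan is to read off the total index size directly as $\sum_{v} \omega_{max}(v)$, since the index stores exactly the endpoints of $\omega_{max}(v)$ pre-sampled walks at each node $v$, and then to bound this sum by substituting the two admissible settings of $\rmax$ from Section~\ref{sec:ana}. All of the analytic work is already contained in Equation~\ref{eqn:index-rwnum} and in the derivation of $\rmax$, so the remaining task is essentially a summation together with a clean case split; I therefore expect no conceptual obstacle, only bookkeeping.

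First I would recall that $\omega_{max}(v)=\lceil |\outN(v)|\cdot \rmax \cdot \frac{(2\eps/3+2)\log(2/\pf)}{\eps^2\cdot\delta}\rceil$ and observe that the bracketed multiplier of $|\outN(v)|$ depends only on $\rmax$ and the query parameters, not on $v$. In the regime where $\rmax$ is chosen by Equation~\ref{eqn:rmax} (equivalently $m\cdot\rmax\le 1$), this multiplier collapses to $c:=\frac{1}{\eps\sqrt{m\cdot\delta}}\sqrt{(2\eps/3+2)\log(2/\pf)}$, so that $\omega_{max}(v)=\lceil |\outN(v)|\,c\rceil$; in the complementary regime, $\rmax=\frac{\eps^2\delta}{(2\eps/3+2)\log(2/\pf)}$ makes the multiplier equal to $1$, giving $\omega_{max}(v)=|\outN(v)|$.

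Next I would perform the summation using two elementary facts: the handshake identity $\sum_{v}|\outN(v)|=m$, and the ceiling estimate $\lceil x\rceil\le x+1$, which over the $n$ nodes contributes an additive $n$. This turns the first regime into $\sum_v \omega_{max}(v)\le n+c\,m=n+\frac{\sqrt{m}}{\eps\sqrt{\delta}}\sqrt{(2\eps/3+2)\log(2/\pf)}$ and the second into $\sum_v\omega_{max}(v)=m$; since the two regimes are selected by the parameters, the index size is the minimum of these two expressions. Absorbing $(2\eps/3+2)=O(1)$ and writing $\log(2/\pf)=O(\log(1/\pf))$ then yields the claimed $O(\min\{n+\frac{1}{\eps}\sqrt{m\log(1/\pf)/\delta},\,m\})$.

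Finally I would specialize to $\delta=O(1/n)$, $\pf=O(1/n)$, and $m/n=O(\log n)$. Here $\log(1/\pf)=O(\log n)$ and $m=O(n\log n)$, so $\frac{1}{\eps}\sqrt{m\log(1/\pf)/\delta}=\frac1\eps\sqrt{m\,n\log n}=O(\frac1\eps n\log n)$, which dominates the additive $n$; hence the bound reduces to $O(\min\{\frac1\eps n\log n,\,m\})$, as stated. The only points needing care are keeping the two $\rmax$ regimes apart and correctly attributing the additive $n$ term to the per-node ceilings, which is precisely what produces the $\min$ structure of the final bound.
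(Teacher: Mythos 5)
Your proposal is correct and follows essentially the same route as the paper: the same case split on the two $\rmax$ regimes from Section~\ref{sec:ana}, the same per-node ceiling estimate $\lceil x\rceil \le x+1$ contributing the additive $n$, the identity $\sum_v |\outN(v)| = m$, and the same specialization under $\delta = O(1/n)$, $\pf = O(1/n)$, $m/n = O(\log n)$. No gaps; the bookkeeping, including attributing the $\min$ structure to the parameter-dependent choice of $\rmax$, matches the paper's derivation exactly.
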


{\bf Remark.} One may wonder whether we can also pre-compute the Forward Push result for each node, so that we can answer each query by a simple combination of pre-processed Forward Push and random walks, which could lead to higher query efficiency. However, we note that storing the Forward Push results for all nodes incurs significant space overheads. In particular, it requires $O\left(\min\{n, 1/\rmax\}\right)$ space for each node, where $\rmax$ is set according to Equation~\ref{eqn:rmax}. As such, the total space consumption for preprocessing Forward Push results is

\vspace{-2mm}
$$\textstyle O\left(\min\left\{n^2, \frac{n}{\epsilon}\cdot \sqrt{\frac{m\cdot \log{(1/\pf)}}{\delta}}\right\}\right),$$
which is prohibitive for large graphs. Therefore, we do not store Forward Push results in our index structure.

\section{Optimizations for whole-graph SSPPR queries}\label{sec:opt}
In this section, we present optimization techniques to reduce the index size or improve the query efficiency of our {\ssppr} algorithm on whole-graph SSPPR queries. In particular, in Section \ref{sec:opt-zero-hop}, we will present our technique to reduce the index size by avoiding zero-hop nodes in the index structure. In Section \ref{sec:opt-balanced}, we will present our technique to reduce the query time by balancing the forward push and the random walk costs.

\subsection{Pruning zero-hop random walks}\label{sec:opt-zero-hop}
Recall that in {\ssppr}, after the local update phase, we sample random walks from each source with non-zero residues. Our main observation is that: $\alpha$ portion of the random walks is expected to stop at the current node, and with $O(1)$ time we can immediately record the portion of such random walks and hence avoid simulating $\alpha$ portion of the total random walks. However, a question is that, can we still provide approximation guarantee while exploring this pruning strategy? We next demonstrate how to use the reduction of zero-hop random walk idea to ensure the approximation guarantee. We first define two random variables $\pi_0(s,t)$ and $\pi_1(s,t)$. We define $\pi_0(s,t)$ as the probability that a random walk from $s$ immediately stopped at node $t$, i.e., the length of the random walk is $0$; we further define $\pi_{1}(s,t)$ as the probability that a random walk from $s$ that stopped at node $t$ after traversing at least one node, i.e., the length of the random walk is at least $1$.
Then it is clear that the personalized PageRank $\pi(s,t)$ satisfies the following equation.
$$
\pi(s,t) = \pi_0(s,t) + \pi_{1}(s,t)
$$

We hence rewrite Equation \ref{eqn:layerppr} as follows:
$$
\pi(s,t) = \rese(s,t) + \sum_{v\in V} r(s,v) \cdot (\pi_0(v,t) + \pi_1(v,t)).
$$
Also notice that $\pi_0(s,t)$ either equals $\alpha$ or 0 depending on whether $s = t$ or not.
Therefore, the above equation can be further rewritten as:
$$
\pi(s,t) = \rese(s,t) + r(s,t)\cdot \alpha+ \sum_{v\in V} r(s,v) \cdot  \pi_1(v,t).
$$

We define a random variable $X_v$ as follows: we randomly select one of a out-neighbor $u$ of $v$, and then start a random walk from $u$. If the random walk stops at $t$, then $X_v=1$, otherwise, $X_v=0$. Then, it is not difficult to verify that $(1-\alpha)\cdot\E[X]= \pi_1(s,t)$.

$$
\pi(s,t) = \rese(s,t) + r(s,t)\cdot \alpha+ \sum_{v\in V} (1-\alpha)\cdot r(s,v) \cdot  \E[X_v].
$$

Define $r'(s,v) = (1-\alpha)\cdot r(s,v)$, we have that:
$$
\pi(s,t) = \rese(s,t) + r(s,t)\cdot \alpha+ \sum_{v\in V}  r'(s,v) \cdot  \E[X_v].
$$
With the above equation, we can then further apply the same technique proposed in Section \ref{sec:ss-rationale}. We define $\rsum' = \sum_{v\in V} r'(s,v)$. Then, by sampling $\omega'=\rsum'\cdot\frac{(2\eps/3+2)\cdot \log{(2/\pf)}}{\eps^2\cdot \delta}$ random walks, we can provide approximation guarantee for the whole-graph SSPPR queries. With this approach, when the same $\rmax$ is used as Algorithm \ref{alg:layer-topk}, the maximum number of random walks sampled from a node $v$ can be bounded by
$$
(1-\alpha)\cdot\rmax \cdot \outN(v)\cdot \frac{(2\eps/3+2)\cdot \log{(2/\pf)}}{\eps^2\cdot \delta},
$$
while previously, it requires
$$
\rmax \cdot \outN(v)\cdot \frac{(2\eps/3+2)\cdot \log{(2/\pf)}}{\eps^2\cdot \delta}.
$$

Therefore, with this optimization, we can also reduce the index size by $\alpha$ portion. Next, we further demonstrate our second optimization technique to improve the query efficiency.

\subsection{Balancing Forward Push and Random Walk Cost}\label{sec:opt-balanced}
Recall that in {\ssppr}, we set $\rmax$ according to Equation \ref{eqn:rmax} to minimize the time complexity of {\ssppr}. However, in practice, the derived $\rmax$ may not be the best choice since Equation \ref{eqn:rmax} considers the worst case while in practice the running time might be quite different. Of course, we may tune $\rmax$ for different dataset and choose $\rmax$ that derives the best piratical performance. However, the tuned $\rmax$ will typically be data-dependent. Here we are aiming to propose a solution that balances the forward push and random walk costs without any dependency on the datasets.

{\cblue
In \cite{Lofgrenthesis15}, they propose a balanced approach for {\em BiPPR} by doing the backward propagation and maintaining the largest residue using a max-heap. Since the total number of random walks depends linearly on the maximum residue, it estimates the running time of the random walk and stops the backward propagation as soon as the running time is around the same as the forward random walk. Our proposed balancing strategy shares the similar spirit as theirs. However, we do not maintain the priority queue since the number of random walks of FORA depends linearly on the total sum of the residues instead of the maximum residue. When we finish the forward push, we can accurately estimate the running time of the random walk part since {\em (i)} we know the total number of random walks; {\em (ii)} the average running time of one random walk depends only on $\alpha$, which is dataset-independent. Therefore, we can easily estimate the cost of a random walk and use it for random walk cost estimation no matter what dataset we are running on. Therefore, we propose the adaptive approach to balance the forward push and random walk cost as shown in Algorithm \ref{alg:ss-algo-opt}.

Initially, we start the forward push by setting $\rmax =1$ and  calculate the current accumulated forward push cost (Algorithm \ref{alg:ss-algo-opt} Line 11). Then, it checks if the forward push cost is still lower than the estimated random walk cost (Algorithm \ref{alg:ss-algo-opt} Line 4). If this is the case, the algorithm continue the forward push process, update the accumulated forward push cost, and update the estimated random walk cost (Algorithm \ref{alg:ss-algo-opt} Lines 5-11). The forward push terminates as soon as the estimated random walk cost is larger than the forward push cost (Algorithm \ref{alg:ss-algo-opt} Line 4). By this strategy, it guarantees that, when the forward push terminates, the cost will not differ from the random walk cost by a large margin. The random walk phase is similar to the one in Algorithm \ref{alg:ss-algo} except that here we prune the zero-hop random walks as mentioned in Section \ref{sec:opt-zero-hop}. In particular, we convert $\alpha$ portion of the residue $r(s,v_i)$ to its reserve (Algorithm \ref{alg:ss-algo-opt} Line 12), and the residue of node $v_i$ is reduced to $(1-\alpha)\cdot v_i$ (Algorithm \ref{alg:ss-algo-opt} Line 14). When we sample a random walk from each node $v_i$, we first randomly select one of its out-neighbor and then do random walk from these nodes thus avoiding the zero-hop random walks.
}

As we will see in our experimental evaluation, the balanced strategy can help reduce the average running time of whole-graph SSPPR queries by almost half, which demonstrates the effectiveness of the balancing strategy.
For the indexing version of our FORA and the top-$k$ algorithm, we tune $\rmax$ to evaluate the trade-off between the index size and the query performance in Section \ref{sec:exp}.

\begin{algorithm}[t]
\caption{{\em \ssppr} for Whole-Graph SSPPR with optimization}\label{alg:ss-algo-opt}
\BlankLine
\KwIn{Graph $G$, source node $s$, probability $\a$, relative error threshold $\epsilon$ }
\KwOut{Estimated PPR $\epi(s,v)$for all $v\in V$}

Let $rc$ be the cost of a random walk, and $FC$ be the running time of the forward push\;
Let $r(s,v_i), \rese(s,v_i)$ be the residue and reserve of node $v_i$ in forward push, and initially only $r(s,s)=1$ while all other values are zero\;
Let $FC=0, \rmax = 1, \rsum =1, \omega = \rsum\cdot\frac{(2\eps/3+2)\cdot \log{(2/\pf)}}{\eps^2\cdot \delta}$\;
\While {$\exists v\in V$ such that $r(s, v)/|\outN(v)| > r_{max}$ and $FC< \omega*rc$}
{
    \For {each $u \in \mathcal{N}^{out}(v)$}
    {
       $r(s, u) \gets r(s, u) + (1-\alpha) \cdot \frac{ r(s, v)}{|\outN(v)|}$
    }
    $\rese(s,v) \gets \rese(s,v) + \alpha  \cdot r(s, v)$\;
    $\rsum = \rsum - \alpha\cdot r(s,v)$\;
    $\omega=\rsum\cdot(1-\alpha)\cdot\frac{(2\eps/3+2)\cdot \log{(2/\pf)}}{\eps^2\cdot \delta}$\;
    $r(s, v) \gets 0$\;
    $FC \leftarrow$ current elapsed time\;
}

Let $\epi(s,v_i)=\rese(s,v_i)+ \alpha\cdot r(s,v_i)$ for all $v_i \in V$\;
\For{$v_i\in V$ with $r(s,v_i)>0$}
{
    Let $\resi(s,v_i)= (1-\alpha)\cdot\resi(s,v_i), \omega_i = \lceil r(s,v_i)(1-\alpha)\cdot\omega/\rsum\rceil$\;
    Let $a_i = \frac{\resi(s,v_i)}{\rsum}\cdot\frac{\omega}{\omega_i}$\;
    \For{$i=1$ to $\omega_i$}
    {
        Randomly select a out-neighbor $u$ of $v_i$ and generate a random walk $W$ from $u$\;
        Let $t$ be the end point of $W$\;
        $\epi(s,t) += \frac{a_i\cdot \rsum}{\omega}$\;
    }
}
return $\epi(s,v_1),\cdots, \epi(s,v_n)$\;
\end{algorithm}


\section{Top-$\boldsymbol{k}$ SSPPR} \label{sec:topk}

In this section, we discuss how { \ssppr} handles approximate top-$k$ SSPPR queries.

\header
{\bf Rationale.} A straightforward approach to answer a top-$k$ SSPPR query with { \ssppr} is to first apply it to perform a whole-graph SSPPR query, and then returns the $k$ nodes with the largest approximate PPR values. However, if we are to satisfy the accuracy requirement described in Definition~\ref{def:topk}, we would need to set the parameters of { \ssppr} according to the exact $k$-th largest PPR value $\pi(s, v^*_k)$, which is unknown in advance. To address this, a naive solution is to conservatively set $\pi(s, v^*_k) = 1/n$, which, however, would lead to unnecessary overheads.

To avoid the aforementioned overheads, we propose a trial-and-error approach as follows. We first assume that $\pi(s, v^*_k)$ is a large value (e.g., $1/k$), and we set the parameters of { \ssppr} accordingly to perform a whole-graph SSPPR query. After that, we inspect the results obtained to check whether the estimated PPR values are indeed large. If they are not as large as we have assumed, then we re-run { \ssppr} with more conservative parameters, and check the new results returned. This process is conducted iteratively, until we are confident that the results from { \ssppr} conform to the requirements in Definition~\ref{def:topk}. In this section, we first present a top-$k$ algorithm by iteratively refining the upper and lower bounds of the top-$k$ PPR results in Section \ref{sec:topk-bound-refine}. Nevertheless, it is expensive to calculate the upper and lower bounds for each node in each iteration, and it is unclear whether the algorithm will terminate with $\delta$ close to $\pi(s,v^*_k)$ or not. Therefore, the expected running time of Algorithm \ref{alg:topk-bound-refine} might be identical to that of invoking Algorithm \ref{alg:ss-algo} with $\delta=1/n$. Therefore, in Section \ref{sec:topk-running-time-guarantee}, we further propose a new top-$k$ query algorithm that provides guarantee on the expected running time, and that the algorithm has high probability to terminate with $\pi(s,v_k^*)/4 \leq \delta \leq \pi(s,v_k^*)$.

\subsection{Top-$\boldsymbol{k}$ with bound refinement}\label{sec:topk-bound-refine}

\begin{algorithm}[t]
\begin{small}
\caption{Top-$k$ {\ssppr} with bound refinement} \label{alg:topk-bound-refine}
\BlankLine
\KwIn{Graph $G$, source node $s$, probability $\a$}
\KwOut{$k$ nodes with the highest approximate PPR scores}
\For{$\delta = \frac{1}{k}, \frac{1}{2k}, \frac{1}{4k},\cdots, \frac{1}{n}$}{
    Invoke Algorithm 2 with $G$, $s$, $\a$, and $\rmax$ set by Equation \ref{eqn:rmax} and fail probability $\pf'=\frac{\pf}{n \cdot \log{n}}$\;
    Let $C=\{v'_1,\cdots,v'_k\}$ be the set that contains the $k$ nodes with the top-$k$ largest lower bounds (from Theorem~\ref{thm:conc})\;
    Let $LB(u)$ and $UB(u)$ be the lower and upper bounds of $\pi(s, u)$ (from Theorem~\ref{thm:conc})\;
    \If{$UB(v'_i)<(1+ \eps)\cdot LB(v'_i)$ for $i\in[1,k]$ \textbf{and} $LB(v'_k) \ge \delta$}{
        Let $U $ be the set of nodes $u\in V \setminus C$ such that $UB(u)>(1+\epsilon)\cdot LB(v'_k)$\;
        \If{$\not\exists u \in U \textbf{ such that }   UB(u)< (1+\epsilon) \cdot LB(u)/(1-\epsilon)$ }{
            return $v'_1, v'_2,\cdots, v'_k$ and their estimated PPR\;
        }
    }
}
\end{small}
\end{algorithm}

\header
{\bf Algorithm.} Algorithm~\ref{alg:topk-bound-refine} shows the pseudo-code of the top-$k$ extension of { \ssppr} with bound refinement. The algorithm consists of at most $\log n$ iterations. In the $i$-th iteration, we invoke Algorithm 2 with $\delta$ set to $\frac{1}{2^{i-1}\cdot k}$, and the failure probability set to $\pf' = \frac{\pf}{n\log{n}}$ (Lines 1-2 in Algorithm~\ref{alg:topk-bound-refine}). (The reason for this setting will be explained shortly). After we obtain the results from {\em \ssppr}, we compute an upper bound and a lower bound of each node's PPR value, and use them to decide whether the current top-$k$ results are sufficiently accurate (Lines 3-8). If the top-$k$ results are accurate, then we return them as the top-$k$ answers (Line 8); otherwise, we proceed to the next iteration. In the following, we elaborate how the upper and lower bounds of each node's PPR value is derived.

Define $LB_0(v) = 0$ and $UB_0(v)=1$ for any $v\in V$. We have the following theorem that establishes the lower bound $LB_j(v)$ and upper bound $UB_j(v)$ of $\pi(s,v)$ in the $j$-th iteration of Algorithm~\ref{alg:topk-bound-refine}:
\begin{theorem}\label{thm:bounds}
In the $j$-th iteration of Algorithm~\ref{alg:topk-bound-refine}, let $\omega_j$ be the $\omega$ calculated by {\it \ssppr} (Algorithm 2 Line 2) in this iteration, and $\rese_j(s,v)$ and $\epi_j(s, v)$ be the reserve and estimated PPR of $v$. Define
$$\eps_j= \sqrt{\frac{3\rsum\cdot\log{(2/\pf')}}{\omega_j \cdot \max\{ \rese_j(s,v), LB_{j-1}(s,v) \}}}, \textrm{and } \lambda_j = \frac{2/3\log{(2/pf')}}{2\omega_j}$$ $$+\frac{\sqrt{\frac{4}{9}\rsum^2\cdot\log^2{(2/\pf')}+8\rsum\cdot \omega_j\cdot \log{(2/\pf')}\cdot UB_{j-1}(v)}}{2\omega_j}$$
Then, with at least $1-\pf'$ probability, the following two inequalities hold simultaneously:
\begin{align*}
 \epi_j(s,v)/(1+\epsilon_j) \le &\pi(s,v) \le  \epi_j(s,v)/(1- \epsilon_j) \\
 \epi_j(s,v) - \lambda_i \le &\pi(s,v) \le  \epi_j(s,v) + \lambda_i.
\end{align*}
\end{theorem}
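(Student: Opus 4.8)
The plan is to obtain $\eps_j$ and $\lambda_j$ by \emph{inverting} the two tail bounds of Lemma~\ref{lem:approximation}: in each case I solve for the deviation that pushes the right-hand probability down to exactly $\pf'$, and then I replace the only unknown quantity in the resulting expression, namely $\pi(s,v)$ itself, by a quantity the algorithm can compute and that I can prove sits on the correct side of $\pi(s,v)$. Three preliminary facts drive everything. First, the reserve is always a lower bound, $\rese_j(s,v)\le\pi(s,v)$, which is immediate from Equation~\ref{eqn:layerppr} because every residue and every PPR value is nonnegative. Second, inductively the previous iteration's bounds are valid, $LB_{j-1}(s,v)\le\pi(s,v)\le UB_{j-1}(v)$, with base case $LB_0=0$, $UB_0=1$. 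Third, $\rsum\le1$.

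For the relative bound I start from the relative tail in Lemma~\ref{lem:approximation} and ask that its exponent $\eps^2\omega_j\pi(s,v)/(\rsum(2+2\eps/3))$ be at least $\log(2/\pf')$. Writing $\ell:=\max\{\rese_j(s,v),LB_{j-1}(s,v)\}$, the first two preliminary facts give $\ell\le\pi(s,v)$. I then define $\eps_j$ by the clean equation $\eps_j^2\omega_j\ell/(3\rsum)=\log(2/\pf')$, which is exactly the stated formula. Since $\ell\le\pi(s,v)$ we have $\eps_j^2\omega_j\pi(s,v)\ge 3\rsum\log(2/\pf')$, and as long as $\eps_j\le 3/2$ (so that $2+2\eps_j/3\le 3$) the true exponent is at least $\log(2/\pf')$; hence $\Pr[|\pi(s,v)-\epi_j(s,v)|\ge\eps_j\pi(s,v)]\le\pf'$, and rearranging the event gives the two-sided form $\epi_j/(1+\eps_j)\le\pi\le\epi_j/(1-\eps_j)$.

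For the absolute bound I invert the absolute tail instead: requiring $\lambda^2\omega_j/(\rsum(2\pi(s,v)+2\lambda/3))\ge\log(2/\pf')$ is a quadratic inequality in $\lambda$ whose positive root is $\lambda^*=\frac{1}{2\omega_j}\big(\tfrac23\rsum\log(2/\pf')+\sqrt{\tfrac49\rsum^2\log^2(2/\pf')+8\rsum\omega_j\log(2/\pf')\,\pi(s,v)}\big)$. Because $\lambda^*$ increases with $\pi(s,v)$, replacing $\pi(s,v)$ by $UB_{j-1}(v)$ under the radical (using $\pi\le UB_{j-1}$) and dropping the factor $\rsum\le1$ in the leading term only enlarges the value, giving $\lambda_j\ge\lambda^*$ and hence $\Pr[|\pi(s,v)-\epi_j(s,v)|\ge\lambda_j]\le\pf'$. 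Finally, the two inequalities are \emph{both} upper bounds on the single random quantity $|\pi(s,v)-\epi_j(s,v)|$, with thresholds $\eps_j\pi(s,v)$ and $\lambda_j$ that are deterministic once the forward-push output and the previous bounds are fixed; the two ``good'' events are therefore nested, so their intersection equals whichever is tighter and holds with probability at least $1-\pf'$, with no factor-two loss from a union bound.

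The main obstacle is not the concentration inequality, which Lemma~\ref{lem:approximation} already supplies, but the circularity that the deviation we want to control depends on the unknown $\pi(s,v)$. The work is in choosing the right surrogate in each case and checking the monotonicity direction of the corresponding tail bound so that the substitution can only \emph{increase} the deviation and therefore only \emph{decrease} the failure probability: a \emph{lower} surrogate ($\ell$) is needed for the relative bound but an \emph{upper} surrogate ($UB_{j-1}$) for the absolute bound. Getting these two directions right, together with the bookkeeping that $\rsum\le1$ absorbs the leading $\lambda_j$ term and that $\eps_j$ stays in the regime $\eps_j\le 3/2$ where the constant $3$ in $\eps_j$ is justified (outside that regime the relative bound is dominated by the absolute one), is where the care lies.
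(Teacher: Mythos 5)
Your proof is correct and follows essentially the same route as the paper's: invert the relative and absolute tail bounds of Lemma~\ref{lem:approximation}, lower-bounding $\pi(s,v)$ by the surrogate $\max\{\rese_j(s,v),LB_{j-1}(s,v)\}$ to obtain $\eps_j$ and upper-bounding it by $UB_{j-1}(v)$ inside the positive root of the quadratic to obtain $\lambda_j$. If anything you are more careful than the paper's appendix proof, which silently uses $2+2\eps/3\le 3$, handles the $\rsum$ factor inconsistently with the theorem statement, and never explains why both inequalities hold \emph{simultaneously} with probability $1-\pf$ rather than $1-2\pf$; your observation that the two good events are nested (both bound the same deviation $|\pi(s,v)-\epi_j(s,v)|$ by deterministic thresholds) supplies exactly that missing step.
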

\begin{proof}
  Ref. Appendix \ref{app:proofs}.
\end{proof}

Theorem~\ref{thm:bounds} enables us to derive tight lower and upper bounds of each node's PPR value in each iteration. In particular, we set
$$UB_j(v) = min\{1,  \epi_j(s,v)/(1-\epsilon), \epi_j(s,v) + \lambda_i\},$$
$$LB_j(v) = \max\{  \epi_j(s,v)/(1+\epsilon),\epi_j(s,v) - \lambda_i, 0 \}.$$
With these upper and lower bounds, the following theorem shows that if Lines 5 and 7 in Algorithm~\ref{alg:topk-bound-refine} holds, then Algorithm~\ref{alg:topk-bound-refine} returns the answer for the approximate top-$k$ SSPPR query.
\begin{theorem}[Approximate Top-$k$]\label{thm:layer-topk}
Let $v'_1, \cdots, v'_k$ be the $k$ nodes with the largest lower bounds in the $j$-th iteration of Algorithm~\ref{alg:topk-bound-refine}. Let $U $ be the set of nodes $u\in V \setminus C$ such that $UB_j(u)>(1+\epsilon)\cdot LB_j(v'_k)$, If $UB(v'_i)<(1+ \eps)\cdot LB(v'_i)$ for $i\in[1,k]$,  $LB_j(v'_k) \ge \delta$, and there exists no $u\in U$ such that $UB_j(u)<(1+\epsilon)\cdot LB_j(u)/(1-\epsilon)$,
then returning $v'_1, \cdots, v'_k$ and their estimated PPR values would satisfy the requirements in Definition~\ref{def:topk} with at least $1-j\cdot n\cdot\pf'$ probability.
\end{theorem}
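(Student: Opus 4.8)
The plan is to condition on a single deterministic event on which every confidence bound produced by Theorem~\ref{thm:bounds} is valid, and then argue that the three termination tests in Lines~5 and~7 of Algorithm~\ref{alg:topk-bound-refine} force both Equation~\ref{eqn:topk-self} and Equation~\ref{eqn:topk-k}. First I would dispose of the randomness: for a fixed node $v$ in a fixed iteration, Theorem~\ref{thm:bounds} certifies $LB_j(v)\le \pi(s,v)\le UB_j(v)$ with probability at least $1-\pf'$. Since the bounds at iteration $j$ are defined recursively through $LB_{j-1}$ and $UB_{j-1}$, I need them to hold in all iterations $1,\dots,j$ and for all $n$ nodes simultaneously; a union bound over these at most $j\cdot n$ events yields exactly the claimed success probability $1-j\cdot n\cdot\pf'$. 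Everything below is then a purely deterministic argument on this good event, where $LB_j(v)\le \pi(s,v)\le UB_j(v)$ for every $v$.

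Next I would establish Equation~\ref{eqn:topk-self}, i.e.\ $\epi_j(s,v'_i)\ge(1-\eps)\pi(s,v'_i)$ for each returned node. This is where the test $UB(v'_i)<(1+\eps)\cdot LB(v'_i)$ (Line~5) is used: since $\pi(s,v'_i)\le UB(v'_i)<(1+\eps)LB(v'_i)$ and each of the three terms defining $LB_j(v'_i)$ is at most the estimate $\epi_j(s,v'_i)$, I get $\pi(s,v'_i)<(1+\eps)\epi_j(s,v'_i)$, and then $1/(1+\eps)\ge 1-\eps$ gives the claim.

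The heart of the proof, and the step I expect to be the main obstacle, is Equation~\ref{eqn:topk-k}, $\pi(s,v'_i)\ge(1-\eps)\pi(s,v^*_i)$, for each rank $i$ with $\pi(s,v^*_i)>\delta$. I would fix such an $i$ and, by a pigeonhole argument, locate a true top-$i$ node $w\in\{v^*_1,\dots,v^*_i\}$ that is not among the first $i-1$ returned nodes $\{v'_1,\dots,v'_{i-1}\}$, so that $\pi(s,w)\ge\pi(s,v^*_i)$; it then suffices to prove $\pi(s,v'_i)\ge(1-\eps)\pi(s,w)$. Since $v'_1,\dots,v'_k$ are the $k$ nodes with the largest lower bounds (sorted so $LB(v'_1)\ge\cdots\ge LB(v'_k)$), I would split on the location of $w$. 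If $w$ is a returned node with index $\ge i$, then Line~5 applied to $w$ gives $\pi(s,w)\le UB(w)<(1+\eps)LB(w)\le(1+\eps)LB(v'_i)\le(1+\eps)\pi(s,v'_i)$. If $w\notin C$ and $w\notin U$, then by the definition of $U$, $\pi(s,w)\le UB(w)\le(1+\eps)LB(v'_k)\le(1+\eps)\pi(s,v'_i)$; in both cases $1/(1+\eps)\ge 1-\eps$ closes the argument. The delicate case is $w\in U$, where I must invoke the test on Line~7 together with $LB(w)\le LB(v'_k)\le LB(v'_i)\le\pi(s,v'_i)$ to control $\pi(s,w)\le UB(w)$ in terms of $\pi(s,v'_i)$. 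The real difficulty here is constant tracking: the certified ratio of $UB(w)$ to $LB(w)$ is governed by the per-iteration error $\eps_j$ (together with the absolute bound $\lambda_j$) from Theorem~\ref{thm:bounds} rather than by the target $\eps$, so I must verify that the Line~7 test is strong enough that the accumulated factors collapse to exactly $1-\eps$ rather than a weaker constant such as $(1-\eps)/(1+\eps)$; pinning down this final constant is the crux of the whole proof.

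Finally I would use the test $LB(v'_k)\ge\delta$ (Line~5) to tie the pieces together: it forces $\pi(s,v'_i)\ge LB(v'_i)\ge LB(v'_k)\ge\delta$ for every returned node, so the relative-error statements are applied only in the regime where they are meaningful and the restriction in Definition~\ref{def:topk} to ranks with $\pi(s,v^*_i)>\delta$ is respected. Combining this with the two inequalities established above shows that $v'_1,\dots,v'_k$ and their estimates meet both requirements of Definition~\ref{def:topk}, completing the proof on the good event.
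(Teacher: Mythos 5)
Your overall scaffolding matches the paper's: a union bound over at most $j\cdot n$ applications of Theorem~\ref{thm:bounds} gives the $1-j\cdot n\cdot\pf'$ success probability, and your derivation of Equation~\ref{eqn:topk-self} from the Line-5 test ($\epi_j(s,v'_i)\ge LB_j(v'_i)\ge UB_j(v'_i)/(1+\eps)\ge \pi(s,v'_i)/(1+\eps)\ge(1-\eps)\pi(s,v'_i)$) is exactly the paper's first step. Your pigeonhole case split for Equation~\ref{eqn:topk-k} is also a legitimate alternative to the paper's bubble-sort argument on the top-$k$ upper bounds, and your two cases ``$w$ returned with index $\ge i$'' and ``$w\notin C$, $w\notin U$'' go through. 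But there is a genuine gap: the case $w\in U$, which you yourself flag as the crux, is left unresolved, and the plan you sketch for it would fail. The Line-7 test asserts that \emph{no} $u\in U$ satisfies $UB_j(u)<(1+\eps)\cdot LB_j(u)/(1-\eps)$; equivalently, every $u\in U$ has $UB_j(u)\ge(1+\eps)\cdot LB_j(u)/(1-\eps)$. This is a \emph{lower} bound on $UB_j(u)$ relative to $LB_j(u)$, so it cannot be combined with $LB(w)\le LB(v'_k)$ to bound $\pi(s,w)\le UB(w)$ from above, no matter how carefully you track constants --- the whole direction of the inequality is wrong for that purpose.

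The paper's resolution is not a bound on $UB(w)$ but an \emph{exclusion} argument: the Line-7 condition certifies that no node of $U$ can be a true top-$k$ node, so the case $w\in U$ simply cannot arise. Concretely, the test $LB_j(v'_k)\ge\delta$ forces $\pi(s,v^*_k)\ge\pi(s,v'_k)\ge LB_j(v'_k)\ge\delta$, so every true top-$k$ node has PPR above the threshold $\delta$ with which Algorithm~\ref{alg:ss-algo} was invoked; on the good event such a node has an $\eps$-accurate estimate, which makes its confidence interval tight, i.e.\ it \emph{must} satisfy $UB_j(u)<(1+\eps)\cdot LB_j(u)/(1-\eps)$. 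Since Line~7 verified that no node of $U$ satisfies this, $U$ contains no true top-$k$ node, and your pigeonhole witness $w$ always falls into one of the two cases you already handle. So the fix is short, but it is a different idea from the constant-tracking you proposed, and without it the proof is incomplete: the three termination tests interlock (Line-5's $\delta$-test feeds the tightness criterion that Line-7 checks on $U$), and your proposal uses the $\delta$-test only at the end, to respect the restriction in Definition~\ref{def:topk}, rather than as the ingredient that disqualifies $U$.
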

\begin{proof}
  Ref. Appendix \ref{app:proofs}.
\end{proof}
Now recall that the number of iterations in Algorithm~\ref{alg:topk-bound-refine} is $\log n$, and in each iteration, we assume that the upper and lower bounds are correct. Hence, by applying union bound, the failure probability will be at most $n\log{n}\cdot \pf'$. Note that $\pf'= \frac{\pf}{n\log{n}}$. The failure probability is hence no more than $\pf$, and we guarantee that the returned answer has approximation with at least $1-\pf$ probability.

\subsection{Top-$\boldsymbol{k}$ with improved time complexity} \label{sec:topk-running-time-guarantee}
Despite the fact that Algorithm \ref{alg:topk-bound-refine} provides superb performance on top-$k$ query processing as shown in \cite{Wang17}, there is no guarantee that the running time will depends on $\delta =\pi(s,v_k^*)$ instead of $\delta = 1/n$.
Also, after each iteration, we need to re-calculate the lower- and upper-bounds for each node, which may take more than half of the query running time. This motivates us to propose our new top-$k$ algorithm that avoids the overheads of the bound-refinement, and provides running time guarantees with respect to $\pi(s,v_k^*)$.

\header
{\bf Algorithm.} Algorithm~\ref{alg:layer-topk} shows the pseudo-code of the top-$k$ extension of { \ssppr}. The algorithm consists of at most $\log_2{( n/k)}$ iterations. In the $i$-th iteration, we invoke Algorithm 2 with $\delta$ set to $\frac{1}{k\cdot 2^{i-1}}$, relative error threshold $\eps'=\eps/2$ and the failure probability set to $\pf' = \frac{\pf}{n\log_2{(n/k)}}$ (Lines 1-2 in Algorithm~\ref{alg:layer-topk}). (The reason for this setting will be explained shortly). After we obtain the results from {\em \ssppr}, we compute the estimated PPR scores for each node and get the $k$-th largest estimated PPR value. We compare the $k$-th estimated PPR score with $(1+\epsilon)\cdot \delta$ and use this as evidence to see whether the current top-$k$ results are sufficiently accurate (Line 4). If the top-$k$ results are accurate, then we return them as the top-$k$ answers (Lines 5-6); otherwise, we halve the value of $\delta$ and proceed to the next iteration. In the following, we analyze the approximation guarantee and time complexity of Algorithm \ref{alg:layer-topk}.
%
Firstly, we have the following lemma about the value of $\delta$ when Algorithm \ref{alg:layer-topk} terminates.
\begin{lemma}\label{lem:topk-delta}
Let $v_k$ be the node that has the $k$-th largest estimated PPR value and $\hat{\pi}(s,v_k)$ be the estimated PPR value for node $v_k$ with respect to $s$. Let $v_k^*$ be the node with the true $k$-th largest PPR. Then, when Algorithm \ref{alg:layer-topk} terminates,  it holds for $\delta$  that:
\begin{itemize}
  \item $\delta>\pi(s,v_k^*)$ with at most $n\cdot p_f'/8$ probability;
  \item $\delta\leq \pi(s,v_k^*)$ with at least $ 1-\log_2\frac{1}{\pi(s,v_k^*)}\cdot n\cdot \pf'/8$;
  \item $\delta \leq \pi(s,v_k^*)/2^{x+1}$ ($x=1,2,3\cdots$) with at most $\frac{\pf'}{2^{x}}$ probability.
\end{itemize}
\end{lemma}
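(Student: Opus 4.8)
\textbf{Proof plan for Lemma~\ref{lem:topk-delta}.}
The plan is to analyze the termination condition of Algorithm~\ref{alg:layer-topk}, namely the comparison of the $k$-th largest estimated PPR value $\hat{\pi}(s,v_k)$ against the threshold $(1+\epsilon)\cdot\delta$ in a given iteration. The whole argument rests on relating the event ``Algorithm terminates at the current $\delta$'' to a concentration statement about the estimates $\hat{\pi}(s,\cdot)$, for which I would invoke Lemma~\ref{lem:approximation} (the relative- and absolute-error concentration bounds). Since in the $i$-th iteration the algorithm runs {\ssppr} with failure probability $\pf'$ and relative error $\eps'=\eps/2$, each individual node's estimate satisfies the relative-error guarantee with probability at least $1-\pf'$; a union bound over the $n$ nodes controls the probability that \emph{all} estimates are simultaneously accurate in one iteration.

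First I would prove the first bullet, $\delta>\pi(s,v_k^*)$, which corresponds to the algorithm stopping \emph{too early}. If $\delta>\pi(s,v_k^*)$ then $v_k^*$ (and hence every node in the true top-$k$) has true PPR at most $\delta$, so for the termination test $\hat\pi(s,v_k)\ge(1+\epsilon)\cdot\delta$ to fire, some estimate must have overshot its true value by the relative factor governed by $\eps_j$. I would bound the probability of this overshoot using the relative-error half of Lemma~\ref{lem:approximation} (Equation~\ref{eqn:relative-error}) and union-bound over the at most $n$ candidate nodes, yielding a per-stage failure of at most $n\cdot\pf'$; the factor $1/8$ in the statement comes from the specific choice $\eps'=\eps/2$ tightening the exponent in the concentration bound so that the stated probability is $n\cdot\pf'/8$ rather than $n\cdot\pf'$. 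The third bullet, $\delta\le\pi(s,v_k^*)/2^{x+1}$, is the symmetric ``stops too late'' event and is handled the same way but summed over the geometric sequence of $\delta$ values that the algorithm would have had to skip past without terminating; because $\delta$ halves each iteration, the probabilities of failing to stop at each of those earlier, correctly-large thresholds form a geometric series, and summing gives the $\pf'/2^{x}$ bound. The second bullet then follows by a complementary union bound over the at most $\log_2(1/\pi(s,v_k^*))$ iterations between $1/k$ and $\pi(s,v_k^*)$, each contributing at most $n\cdot\pf'/8$.

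The main obstacle I anticipate is getting the constants and the direction of the inequalities exactly right in the first bullet. The termination test compares $\hat\pi(s,v_k)$, the $k$-th \emph{largest estimate}, against $(1+\epsilon)\delta$, and $v_k$ (the node achieving the $k$-th largest \emph{estimate}) need not coincide with $v_k^*$ (the node with the $k$-th largest \emph{true} value). I would therefore argue about the order statistics carefully: premature termination at $\delta>\pi(s,v_k^*)$ forces at least $k$ nodes to have estimates exceeding $(1+\epsilon)\delta$ while their true values are below $\delta$, and I must verify that the relative-error guarantee of Lemma~\ref{lem:approximation}, applied with $\eps'=\eps/2$ and the $\omega$ corresponding to this $\delta$, makes each such one-sided deviation improbable enough. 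The delicate point is that Lemma~\ref{lem:approximation}'s guarantee is stated per node for $\pi(s,t)>\delta$, so for nodes with true PPR below $\delta$ I would instead invoke the absolute-error bound (Equation~\ref{eqn:absolute-error}) with $\lambda=\epsilon\delta$, and reconcile the two regimes to obtain a clean $n\cdot\pf'/8$ bound. Tracking how the exponent $\eps^2\cdot\omega\cdot\pi(s,t)/(\rsum\cdot(2+2\eps/3))$ simplifies under $\eps'=\eps/2$ and the iteration-specific $\omega_j$ is the calculation I expect to consume most of the effort.
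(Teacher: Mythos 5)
Your proposal is correct and takes essentially the same route as the paper's proof: for the first bullet, the one-sided absolute-error bound of Lemma~\ref{lem:approximation} with $\lambda=\eps\cdot\delta$ (under $\eps'=\eps/2$, whose tightened exponent yields the $1/8$) union-bounded over the at least $n-k+1$ nodes with $\pi(s,v)<\delta$; for the second, a union bound over the at most $\log_2\frac{1}{\pi(s,v_k^*)}$ iterations with $\delta>\pi(s,v_k^*)$; and for the third, a lower-tail bound on $\hat{\pi}(s,v_i^*)$ for the true top-$k$ nodes, where the geometric decay in $x$ comes from $\omega\propto 1/\delta$ inflating the concentration exponent (the paper gets $\pf'/2^{x+1}$ at a single iteration rather than summing a series, but both give the claimed bound). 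One small imprecision: premature termination at $\delta>\pi(s,v_k^*)$ only forces at least \emph{one} node with true PPR below $\delta$ to have an estimate exceeding $(1+\eps)\delta$ (up to $k-1$ of the top estimates may belong to nodes with $\pi(s,v)\ge\delta$), not $k$ such nodes as you state; your planned union bound over all $n$ nodes absorbs this anyway, exactly as in the paper.
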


\begin{proof}
  We first consider the case when Algorithm \ref{alg:layer-topk} terminates with $\delta > \pi(s,v_k^*)$.

  \textbf{$\boldsymbol {\delta>\pi(s,v_k^*)}$.} If $\delta >\pi(s,v_k^*)$ when Algorithm \ref{alg:layer-topk} terminates, we know that there exists at least $n-k+1$ nodes such that $\pi(s,v)<\delta $. Denote $X$ as the set of nodes such that $v\in X$ if $\pi(s,v)<\delta$. For any of these nodes, we consider the probability that their PPR values with respect to $s$ is greater than $(1+\epsilon)\cdot \delta$. Note that according to Line 2 of Algorithm \ref{alg:layer-topk}, the number of random walks is set to $\frac{\rsum \cdot (\eps/2\cdot2/3+2)\cdot \log{(2/\pf')}}{(\eps/2)^2\cdot \delta}$.
  Let $t$ be a node in $X$, and $\lambda=\epsilon\cdot \delta$, according to Lemma \ref{lem:approximation}, we have that:
\begin{align}
    \Pr[\hat{\pi}(s,t)-\pi(s,t)>\lambda] \leq  & \exp\left( - \frac{\lambda^2}{\rsum \cdot (2\pi(s,t)+2\lambda/3)}\cdot \frac{4\rsum \cdot (2+\epsilon/3)\cdot \log{(2/\pf')}}{\epsilon^2\cdot \delta}  \right)\nonumber \\
    \leq & \exp\left ( - \frac{4\delta \cdot (2+\epsilon/3)}{2\cdot \pi(s,t)+2\epsilon\cdot \delta/3}  \cdot \log{(2/\pf')}\right) \nonumber \\
    \leq & \exp\left ( -\frac{4\delta \cdot (2+\epsilon/3)}{2\cdot \delta+2\epsilon\cdot \delta/3}\cdot \log{(2/\pf')} \right )   \text{ \quad (}\delta >\pi(s,t))\nonumber\\
    \leq & \exp\left ( -\frac{4 \cdot (2+\epsilon/3)}{2+2\epsilon/3} \cdot \log{(2/\pf')}\right )  \nonumber \\
    < &  \exp\left ( -3 \cdot \log{(2/\pf')}\right )< \pf'/8 \nonumber
\end{align}

So, by union bound,  it is satisfied that $\hat{\pi}(s,t)<\pi(s,t) + \epsilon \cdot \delta$ holds for any node $t\in X$ with at least $1-n\cdot \pf'/8$ probability. Also note that $\pi(s,t)<\delta$. This indicates that with at least probability $1-n\cdot \pf'/8$,  for all nodes $t\in X$,it is satisfied that
$\hat{\pi}(s,t) < (1+\epsilon)\cdot\delta$. Since there are at least $n-k+1$ nodes in $X$. It indicates that the returned $\hat{\pi}(s,v_k)$ must be no larger than $max_{v\in X} \hat{\pi}(s,v)$, which is less than $(1+\epsilon)\cdot \delta$. However, this contradicts to the fact that Algorithm \ref{alg:layer-topk} terminates when $\hat{\pi}(s,v_k)\geq (1+\epsilon)\cdot \delta$. As a result, with at most $n\cdot \pf'/8$ probability, the algorithm terminates when $\delta>\pi(s,v_k^*)$.

\begin{algorithm}[t]
\caption{Top-$k$ \ssppr} \label{alg:layer-topk}
\BlankLine
\KwIn{Graph $G$, source node $s$, probability $\a$}
\KwOut{$k$ nodes with the highest approximate PPR scores}
\For{$\delta = \frac{1}{k},\frac{1}{2k}\cdots, \frac{1}{n}$}{

    Invoke Algorithm 2 with $G$, $s$, and $\a$ with $\epsilon'=\epsilon/2$, failure probability $\pf'=\frac{\pf}{n\log_2{(n/k)}}$, and $\rmax$ set  by Equation \ref{eqn:rmax}\;

    Let $\hat\pi(s,v_k)$ be the $k$-th largest estimated PPR score returned by Algorithm 2\;
    \If{$ \hat\pi(s,v_k)\ge (1+\epsilon)\cdot \delta$}{
            Let $v'_1, v'_2,\cdots, v'_k$ be the $k$ nodes with the top-$k$ largest PPR values\;
            return $v'_1, v'_2,\cdots, v'_k$ and their estimated PPR values\;
    }
}
\end{algorithm}
Next, we consider the probability when Algorithm \ref{alg:layer-topk} terminates with $\delta \leq \pi(s, v_k^*)$.

  \textbf{$\boldsymbol {\delta\leq \pi(s,v_k^*)}$.} Since in Algorithm \ref{alg:layer-topk}, there are at most $\log_2{\frac{1}{k\cdot \pi(s,v_k*)}}$ iterations such that $\delta>\pi(s,v_k^*)$. The probability that the algorithm terminates when $\delta>\pi(s,v_k^*)$ is bounded by $\log_2{\frac{1}{k\cdot \pi(s,v_k^*)}}n\cdot \pf'/8$. Hence, with at least  probability $1-\log_2{\frac{1}{k\cdot \pi(s,v_k^*)}}n\cdot \pf'/8$, the algorithm terminates with $\delta \leq \pi(s, v_k^*)$.

Finally, we consider whether $\delta$ will be too small when Algorithm \ref{alg:layer-topk} terminates.

\textbf{$\boldsymbol{\delta \leq \pi(s,v_k^*)/2^{x+1}}$.} Consider the $k$ nodes with the $k$ largest PPR values. Denote these nodes as $v_1^*, v_2^*, v_3^*,\cdots,v_k^*$. Consider the estimated PPR of $v_i^*$ with respect to $s$ ($1\leq i \leq k$).
According to Lemma \ref{lem:approximation}, it is satisfied that:
\begin{align*}
  \Pr[\hat{\pi}(s,v_i^*) \leq (1+\epsilon)\cdot \delta ] \leq & \Pr[\hat{\pi}(s,v_i^*) \leq (1+\epsilon)\cdot \pi(s,v_k^*)/2^{x+1} ]  \text{ \quad (}\delta \leq \pi(s,v_k^*)/2^{x+1})\\
  \leq & \Pr[\hat{\pi}(s,v_i^*) \leq (1+\epsilon)\cdot \pi(s,v_i^*)/2^{x+1} ]  \text{ \quad (} \pi(s,v_k^*) \leq \pi(s,v_i^*) )\\
  \leq & \Pr[\hat{\pi}(s,v_i^*) \leq (1-(1-1/2^x)\cdot\epsilon)\cdot \pi(s,v_i^*) ] \\
  &(1+\epsilon)/2^{x+1} \leq 1- (1-1/2^x)\cdot \epsilon \text{ for }  0<\epsilon <1 \\
  \leq & \exp\left(-\frac{((1-1/2^x)\cdot\epsilon)^2\cdot \pi(s,v_i^*)}{\rsum\cdot(2 + 2(1-1/2^x)\cdot\epsilon/3)} \cdot  \frac{4\rsum \cdot (2+\epsilon/3)\cdot \log{(2/\pf')}}{\epsilon^2\cdot \delta} \right) \\
  \leq & \exp \left(- {2(1-1/2^x)^2\cdot 2^{x+1}}\cdot \frac{2(2+\epsilon/3)}{2+2(1-1/2^x)\cdot\epsilon} \cdot \log{(2/\pf')} \right) \\
  & 2(1-1/2^x)^2>1, \quad \frac{2(2+\epsilon/3)}{2+(1-1/2^x)}>1  \\
  \leq &\exp\left( - 2^{x+1}\cdot \log{(2/\pf')} \right)< \frac{(\pf')^2}{2^{x+1}}
\end{align*}

By union bound, the probability that $\hat{\pi}(s,v_i^*)\geq (1+\epsilon)\cdot \delta$ holds for any $1\leq i\leq k$ simultaneously is at least $1- k\cdot \frac{(\pf')^2}{2^{x+1}}\geq 1- \frac{\pf'}{2^{x+1}}$. Since there are $k$ estimations no smaller than $(1+\epsilon)\cdot \delta$, Algorithm \ref{alg:layer-topk} will terminate. Therefore, Algorithm \ref{alg:layer-topk} terminates when $\delta \leq \pi(s,v_i^*)/2^{i+1}$ with at most $\frac{\pf'}{2^{x+1}}$ probability, which finishes the proof.
\end{proof}

Lemma \ref{lem:topk-delta} indicates several desired properties of our top-$k$ algorithm. Firstly, the algorithm stops with $\delta > \pi(s,v_k^*)$ with low probability. Besides, it terminates with $\delta \leq \pi(s,v_k^*)$ with high probability, which is an important condition for providing approximate top-$k$ answer. Thirdly, when the algorithm terminates, $\delta$ will not deviate from $\pi(s,v_k^*)$ by a large margin. The larger the margin is between $\delta$  and $\pi(s,v_k^*)$, the lower the probability it is. By leveraging \ref{lem:topk-delta}, we further have the following lemma on the time complexity of our top-$k$ algorithm.

\begin{lemma}\label{lem:topk-complexity}
The expected running time of Algorithm \ref{alg:layer-topk} can be bounded by $$ O\left( \min \{ \frac{\sqrt{m \cdot \log{(2/\pf)}}}{ \eps \cdot \sqrt{\pi(s,v_k^*)}}, \frac{\log{(2/\pf)}}{\epsilon^2 \cdot \pi(s,v_k^*)}\}\right).$$
\end{lemma}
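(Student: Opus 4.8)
The plan is to write the total cost of Algorithm~\ref{alg:layer-topk} as a sum of per-iteration costs and then take expectations, using the termination guarantees of Lemma~\ref{lem:topk-delta}. First I would record the cost of a single iteration. The $i$-th iteration invokes Algorithm~\ref{alg:ss-algo} with $\delta_i = \frac{1}{k\cdot 2^{i-1}}$, relative error $\eps/2$, and failure probability $\pf'$, so by the whole-graph analysis of Section~\ref{sec:ana} it runs in time
$$T(\delta_i) = O\left(\min\left\{\tfrac{\sqrt{m\log(2/\pf')}}{\eps\sqrt{\delta_i}},\ \tfrac{\log(2/\pf')}{\eps^2\,\delta_i}\right\}\right).$$
Since $\delta_i \ge 1/n$ and $\pf' = \pf/(n\log_2(n/k))$, we have $\log(2/\pf') = O(\log(2/\pf) + \log n)$, which equals $O(\log(2/\pf))$ in the regime $\pf = O(1/n)$ of Table~\ref{tbl:theoretical-guarantees}; I will fold this polylogarithmic factor into the constants and write $T(\delta)$ with $\log(2/\pf)$.

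Second, I would show that the cumulative cost through iteration $J$ is $O(T(\delta_J))$, i.e.\ dominated by its last term. Both branches of the $\min$ are decreasing in $\delta$, and halving $\delta$ multiplies the first branch by $\sqrt2$ and the second by $2$; a short case analysis (covering the single scale at which the active branch of the $\min$ switches) shows $T(\delta_{i+1}) \ge \sqrt2\,T(\delta_i)$ in every iteration. Hence the sequence $\{T(\delta_i)\}$ grows at least geometrically with ratio $\sqrt2$, and $\sum_{i=1}^{J} T(\delta_i) = O(T(\delta_J))$. Consequently the expected running time is $O\big(\E[T(\delta_{\mathrm{stop}})]\big)$, where $\delta_{\mathrm{stop}}$ denotes the (random) value of $\delta$ at which the loop exits.

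Third, I would bound $\E[T(\delta_{\mathrm{stop}})]$ by splitting on the size of $\delta_{\mathrm{stop}}$ relative to $\pi(s,v_k^*)$, which is exactly what Lemma~\ref{lem:topk-delta} controls. On the event $\delta_{\mathrm{stop}} > \pi(s,v_k^*)/4$ --- which by the third bullet of Lemma~\ref{lem:topk-delta} (taking $x=1$) holds with probability at least $1-\pf'/2$ --- monotonicity of $T$ gives $T(\delta_{\mathrm{stop}}) \le T(\pi(s,v_k^*)/4) = O(T(\pi(s,v_k^*)))$, the claimed main term. For the tail I would sum over the dyadic scales $x\ge1$: the event $\delta_{\mathrm{stop}}\le \pi(s,v_k^*)/2^{x+1}$ occurs with probability at most $\pf'/2^x$, while on it $T(\delta_{\mathrm{stop}})$ is $O\big(2^{(x+1)/2}T(\pi(s,v_k^*))\big)$ in the first branch and $O\big(2^{x+1}T(\pi(s,v_k^*))\big)$ in the second; pairing each grid value $\delta_i$ with its dyadic band and applying Abel summation then collapses the tail into the series below.

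The step I expect to be the main obstacle is controlling this tail in the second, $\frac{\log(2/\pf)}{\eps^2\delta}$, branch: there the cost grows by a factor $2$ per scale while the termination probability decays only by a factor $2$, so each product is $\frac{\pf'}{2^x}\cdot 2^{x+1}T(\pi(s,v_k^*)) = O(\pf'\,T(\pi(s,v_k^*)))$ and does not decay. The resolution is that the loop has at most $\log_2(n/k)$ iterations and $\pf'\cdot\log_2(n/k) = \pf/n \le 1$, so the finite tail sum is still $O(T(\pi(s,v_k^*)))$; in the first branch the extra factor $2^{-x/2}$ makes the series converge directly. Combining the main term with the tail gives
$$\E[T(\delta_{\mathrm{stop}})] = O\left(\min\left\{\tfrac{\sqrt{m\log(2/\pf)}}{\eps\sqrt{\pi(s,v_k^*)}},\ \tfrac{\log(2/\pf)}{\eps^2\,\pi(s,v_k^*)}\right\}\right),$$
which is the stated bound.
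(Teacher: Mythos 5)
Your proposal is correct and follows essentially the same route as the paper's proof: bound the cumulative cost of all iterations by a geometric sum dominated by the final term, then take the expectation by splitting on the dyadic scale of the terminating $\delta$ via Lemma~\ref{lem:topk-delta}, treating the two branches of the $\min$ separately. You are in fact slightly more explicit than the paper on two points it glosses over --- the conversion $\log(2/\pf') = O(\log(2/\pf))$ when $\pf = O(1/n)$, and the non-decaying tail in the $\frac{\log(2/\pf)}{\eps^2\delta}$ branch, which you correctly resolve by combining the at most $\log_2(n/k)$ iterations with $\pf'\cdot\log_2(n/k) \le \pf/n$.
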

\begin{proof} Let $c$ be the constant factor of the time complexity of Algorithm \ref{alg:ss-algo} and $\delta^*$ be the value of $\delta$ when Algorithm \ref{alg:layer-topk} terminates.  Suppose that the algorithm terminates after $i+1$ iterations, and note that the time complexity of each iteration is bounded by $O\left(\frac{c}{ \eps \cdot \sqrt{\delta}}\sqrt{m \cdot (2\eps/3+2)\cdot \log{(2/\pf)}}\right)$.  Then, the cost $\mathcal{C}$ of {\ssppr} can be bounded by
\begin{align*}
& \mathcal{C}= \sum_{\delta= 1/k}^{1/(2^i \cdot k)}\frac{c}{ \eps \cdot \sqrt{\delta}}\sqrt{m \cdot (2\eps/3+2)\cdot \log{(2/\pf)}}  \\
   & = \frac{c}{ \eps }\sqrt{m \cdot (2\eps/3+2)\cdot \log{(2/\pf)}} \cdot \sum_{\delta = 1/k}^{1/(2^i \cdot k)}\frac{1}{\sqrt{\delta}} \\
   & \leq \frac{c}{ \eps }\sqrt{m \cdot (2\eps/3+2)\cdot \log{(2/\pf)}} \cdot \frac{1+1/\sqrt{\delta^*}}{1-1/\sqrt{2}}\\
   & \leq \frac{c}{ \eps }\sqrt{m \cdot (2\eps/3+2)\cdot \log{(2/\pf)}} \cdot \frac{4}{\sqrt{\delta^*}}
\end{align*}
Denote $$\phi = \frac{4c}{ \eps }\sqrt{m \cdot (2\eps/3+2)\cdot \log{(2/\pf)}} .$$ Next, we further consider the expected cost of Algorithm \ref{alg:layer-topk}.
\begin{align*}
  \E[\mathcal{C}]= & \Pr[\delta>\pi(s,v_k^*)]\cdot \mathcal{C}_{\delta>\pi(s,v_k^*)}+ \Pr[\pi(s,v_k^*)/4< \delta \leq \pi(s,v_k^*)]\cdot \mathcal{C}_{\pi(s,v_k^*)/4<\delta\leq \pi(s,v_k^*)} \\
   &+\sum_{x=1}^{\log_2{(n/\pi(s,v_k^*))}}\Pr[\pi(s,v_k^*)/2^{i+2}< \delta \leq \pi(s,v_k^*)/2^{i+1}]\cdot \mathcal{C}_{\pi(s,v_k^*)/2^{i+2}< \delta \leq \pi(s,v_k^*)/2^{i+1}} \\
  & <1\cdot \frac{\phi}{\sqrt{\pi(s,v_k^*)}} + 1\cdot \frac{ \phi}{\sqrt{\pi(s,v_k^*)/4}} + \sum_{i=1}^{\log_2{(n/\pi(s,v_k^*))}}\frac{\pf'}{2^{i+1}}\frac{\phi}{\sqrt{\pi(s,v_k^*)/2^{i+1}}}\\
  &< \frac{4\phi}{\sqrt{\pi(s,v_k^*)}} = \frac{16}{ \eps\cdot \sqrt{\pi(s,v_k^*)} }\sqrt{m \cdot (2\eps/3+2)\cdot \log{(2/\pf)}}
\end{align*}

Besides, note that the time complexity of each iteration can also be bounded by $O\left(\frac{(2\eps/3+2)\cdot \log{(2/\pf)}}{\delta}\right)$. Still let $c$ denote the constant in the time complexity. Then, the cost $\mathcal{C}$ of {\ssppr} can be bounded by

\begin{align*}
& \mathcal{C}= c\cdot\frac{(2\eps/3+2)\cdot \log{(2/\pf)}}{\epsilon^2} \cdot\sum_{\delta= 1/k}^{1/(2^i \cdot k)} \frac{1}{\delta} \leq c \cdot\frac{(2\eps/3+2)\cdot \log{(2/\pf)}}{\epsilon^2}\cdot \frac{2}{\delta}  \\
\end{align*}
Let $\phi = 2c \cdot\frac{(2\eps/3+2)\cdot \log{(2/\pf)}}{\epsilon^2}$, the expected cost of Algorithm \ref{alg:layer-topk} then can be further bounded by:
\begin{align*}
  \E[\mathcal{C}]= & \Pr[\delta>\pi(s,v_k^*)]\cdot \mathcal{C}_{\delta>\pi(s,v_k^*)}+ \Pr[\pi(s,v_k^*)/4< \delta \leq \pi(s,v_k^*)]\cdot \mathcal{C}_{\pi(s,v_k^*)/4<\delta\leq \pi(s,v_k^*)} \\
   &+\sum_{x=1}^{\log_2{(n/\pi(s,v_k^*))}}\Pr[\pi(s,v_k^*)/2^{i+2}< \delta \leq \pi(s,v_k^*)/2^{i+1}]\cdot \mathcal{C}_{\pi(s,v_k^*)/2^{i+2}< \delta \leq \pi(s,v_k^*)/2^{i+1}} \\
  & <1\cdot \frac{\phi}{\pi(s,v_k^*)} + 1\cdot \frac{ \phi}{\pi(s,v_k^*)/4} + \sum_{i=1}^{\log_2{(n/\pi(s,v_k^*))}}\frac{\pf'}{2^{i+1}}\cdot\frac{\phi}{\pi(s,v_k^*)/2^{i+1}}\\
  &< \frac{6\phi}{\pi(s,v_k^*)} = \frac{12c\cdot(2\eps/3+2)\cdot \log{(2/\pf)}}{\epsilon^2\cdot \pi(s,v_k^*)}
\end{align*}
Therefore, the expected time complexity of Algorithm \ref{alg:layer-topk} can be bounded by $$ O\left( \min \{ \frac{\sqrt{m \cdot \log{(2/\pf)}}}{ \eps \cdot \sqrt{\pi(s,v_k^*)}}, \frac{\log{(2/\pf)}}{\epsilon^2\cdot\pi(s,v_k^*)}\}\right),$$
which finishes the proof.
\end{proof}

It still remains to clarify whether Algorithm \ref{alg:layer-topk} returns approximate top-$k$ answers. The following lemma shows that our algorithm returns approximate top-$k$ answer with high probability.
\begin{lemma}\label{lem:topk-approximation}
Algorithm \ref{alg:layer-topk} returns an $\epsilon$-approximate top-$k$ answer with at least $1-\pf$ probability.
\end{lemma}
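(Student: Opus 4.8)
The plan is to isolate a single global ``good'' event $\mathcal{E}$ on which the entire execution behaves correctly, and to bound $\Pr[\mathcal{E}^c]$ by one union bound that closes at exactly $\pf$. Algorithm~\ref{alg:layer-topk} runs at most $\log_2(n/k)$ iterations, and each invokes Algorithm~\ref{alg:ss-algo} with relative error $\eps/2$ and failure probability $\pf'=\pf/(n\log_2(n/k))$. I would let $\mathcal{E}$ be the event that, in \emph{every} iteration $j$ (threshold $\delta_j$) and for \emph{every} node $v$, the estimate $\epi_j(s,v)$ satisfies the appropriate bound from Lemma~\ref{lem:approximation}: the relative bound $|\pi(s,v)-\epi_j(s,v)|\le(\eps/2)\pi(s,v)$ when $\pi(s,v)>\delta_j$, and the one-sided bound $\epi_j(s,v)<\pi(s,v)+\eps\,\delta_j$ (from Equation~\ref{eqn:absolute-error}, exactly as in the proof of Lemma~\ref{lem:topk-delta}) when $\pi(s,v)\le\delta_j$. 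The crucial observation is that each node needs only \emph{one} of these, depending on whether $\pi(s,v)>\delta_j$, and each fails with probability at most $\pf'$; so a union bound over the at most $n\log_2(n/k)$ node--iteration pairs gives $\Pr[\mathcal{E}^c]\le n\log_2(n/k)\cdot\pf'=\pf$.

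Next I would show that, on $\mathcal{E}$, the terminating threshold $\delta^*$ satisfies $\delta^*\le\pi(s,v_k^*)$ --- the deterministic version of the first bullet of Lemma~\ref{lem:topk-delta}. If some iteration had $\delta_j>\pi(s,v_k^*)$, then at least $n-k+1$ nodes obey $\pi(s,v)<\delta_j$, hence on $\mathcal{E}$ each has $\epi_j(s,v)<\pi(s,v)+\eps\delta_j<(1+\eps)\delta_j$; thus fewer than $k$ nodes exceed $(1+\eps)\delta_j$, the $k$-th largest estimate stays below $(1+\eps)\delta_j$, the test on Line~4 fails, and the algorithm does not stop. Therefore it halts only once $\delta^*\le\pi(s,v_k^*)$, which forces $\pi(s,v_i^*)\ge\pi(s,v_k^*)\ge\delta^*$ for every $i\in[1,k]$.

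I would then verify the two requirements of Definition~\ref{def:topk} for the returned nodes $v_1,\dots,v_k$ (the top-$k$ by estimate, so $\epi(s,v_i)\ge\epi(s,v_k)\ge(1+\eps)\delta^*$). On $\mathcal{E}$ every returned node has $\pi(s,v_i)>\delta^*$: otherwise the one-sided bound would give $\epi(s,v_i)<(1+\eps)\delta^*$, a contradiction. Hence the relative bound applies to $v_i$, yielding $\epi(s,v_i)\ge(1-\eps/2)\pi(s,v_i)\ge(1-\eps)\pi(s,v_i)$, i.e.\ Equation~\ref{eqn:topk-self}. For Equation~\ref{eqn:topk-k}, fix $i$ with $\pi(s,v_i^*)>\delta^*$ and use pigeonhole: among the true top-$i$ nodes $v_1^*,\dots,v_i^*$, at least one $v_j^*$ falls outside the estimate-top-$(i-1)$ set $\{v_1,\dots,v_{i-1}\}$, so $\epi(s,v_i)\ge\epi(s,v_j^*)$. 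Since $\pi(s,v_j^*)\ge\pi(s,v_i^*)>\delta^*$, the relative bound gives $\epi(s,v_j^*)\ge(1-\eps/2)\pi(s,v_i^*)$, while $\epi(s,v_i)\le(1+\eps/2)\pi(s,v_i)$; chaining yields $\pi(s,v_i)\ge\frac{1-\eps/2}{1+\eps/2}\pi(s,v_i^*)\ge(1-\eps)\pi(s,v_i^*)$, the last step being the elementary inequality $\frac{1-\eps/2}{1+\eps/2}\ge1-\eps$.

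Combining the three steps, on $\mathcal{E}$ --- which holds with probability at least $1-\pf$ --- the algorithm stops with $\delta^*\le\pi(s,v_k^*)$ and the returned list satisfies Equations~\ref{eqn:topk-self} and~\ref{eqn:topk-k} for all relevant $i$, which is exactly Definition~\ref{def:topk}. I expect the main obstacle to be the probability bookkeeping: the bound closes at precisely $\pf$ only because each node contributes a single one-sided guarantee and, critically, the ``stops too early'' failure is \emph{absorbed} into $\mathcal{E}$ rather than charged separately (naively combining Lemma~\ref{lem:topk-delta} for termination with Lemma~\ref{lem:approximation} for accuracy would double-count and overshoot $\pf$). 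The rank guarantee~\ref{eqn:topk-k} is the other delicate piece, since it needs both the pigeonhole witness $v_j^*$ and the deduction that every returned node provably has true PPR above $\delta^*$.
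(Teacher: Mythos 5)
Your proof is correct, and while it assembles the same ingredients as the paper's (the one-sided absolute bound $\epi<\pi+\eps\delta$ for nodes below the current threshold, the relative bound for nodes above it, and an order-statistics argument locating a true top-$i$ witness inside the estimate ranking), it organizes them along a genuinely different decomposition. The paper conditions on the termination event of Lemma~\ref{lem:topk-delta} (charged $\log_2\frac{1}{\pi(s,v_k^*)}\cdot n\pf'/8$), then per iteration on the event that the true top-$k$ nodes are well estimated (its condition $C2$), and handles Equation~\ref{eqn:topk-k} by a bespoke tail bound on the joint event $\{\epi(s,v_i)>(1-\eps/2)\pi(s,v_i^*)\}\cap\{\pi(s,v_i)<(1-\eps)\pi(s,v_i^*)\}$, instantiating Lemma~\ref{lem:approximation} with the auxiliary error $\eps'=\frac{(1-\eps/2)\cdot\pi(s,v_i^*)}{\pi(s,v_i)}-1$ at cost $\pf'/2$ per node per iteration. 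You replace all of this by one global good event $\mathcal{E}$ paid for once, with the union bound closing at exactly $n\log_2(n/k)\cdot\pf'=\pf$, after which everything downstream is deterministic: termination with $\delta^*\le\pi(s,v_k^*)$ is re-derived on $\mathcal{E}$ rather than imported probabilistically, the key fact $\pi(s,v_i)>\delta^*$ for every returned node follows by contradiction with the Line-4 test via the one-sided bound, and Equation~\ref{eqn:topk-k} follows from the elementary chain $(1-\eps/2)\pi(s,v_i^*)\le\epi(s,v_j^*)\le\epi(s,v_i)\le(1+\eps/2)\pi(s,v_i)$ together with $\frac{1-\eps/2}{1+\eps/2}\ge 1-\eps$, with no extra concentration computation; Equation~\ref{eqn:topk-self} comes for free from the same relative bound. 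Your route buys tighter, auditable bookkeeping: the paper's final accounting $\log_2\frac{1}{\pi(s,v_k^*)}\cdot n\pf'/8+\log(n/\pi(s,v_k^*))\cdot n\pf'$ over-counts iterations (the loop runs at most $\log_2(n/k)$ times) and only closes at $\pf$ up to loose constants---precisely the double-counting you flag---whereas your bound is exact by construction. What the paper's route buys is that Lemma~\ref{lem:topk-delta} remains a standalone distributional statement about where $\delta^*$ lands, which the running-time analysis of Lemma~\ref{lem:topk-complexity} genuinely needs and which your single-event framing deliberately does not supply. One blind spot you share with the paper: neither argument treats the final iteration $\delta=1/n$, where the Line-4 test may never fire; this is harmless under the standing assumption $\delta=O(1/n)$ in Definition~\ref{def:topk}, but merits a sentence.
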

\begin{proof}
  Let $v_1, v_2,\cdots,v_k$ be the returned $k$ nodes by Algorithm \ref{alg:layer-topk}, and $R=\{v_1^*, v_2^*, \cdots ,v_k^*\}$ be the $k$ nodes with the real top-$k$ largest PPR values. According to Lemma \ref{lem:topk-delta}, Algorithm \ref{alg:layer-topk} terminates with $\delta \leq \pi(s,v_k^*)$ (denoted as Condition $C1$) with at least $ 1-\log_2\frac{1}{\pi(s,v_k^*)}\cdot n\cdot \pf'/8$ probability.
   When $C1$ holds, we note that for $v_i^*$, it is satisfied that, with $1-\pf'/2$ probability, $\hat{\pi}(s,v_i^*)-\pi(s,v_i^*)>\frac{\epsilon}{2}\cdot \pi(s,v_i^*)$.
As a result,
\begin{equation}\label{eqn:condition}
\hat{\pi}(s,v_i^*)-\pi(s,v_i^*)>\frac{\epsilon}{2}\cdot \pi(s,v_i^*) \text{ for any $1\leq i \leq k$}
\end{equation}
holds  with at least $1-k\cdot \pf'/2$ probability. We denote this condition as $C2$.

We next consider when conditions $C1$ and $C2$ both hold, the probability that the single source {\ssppr} fails to provide an $\epsilon$-approximate top-$k$ answer.
When $C1$ holds, we know that $\hat{\pi(s,v_i^*)} > (1-\epsilon/2)\cdot \pi(s,v_i^*)$. With this condition, $\hat{\pi}(s,v_i)$ must be larger than $(1-\epsilon/2)\cdot \pi(s,v_i^*)$ since its estimation is $i$-th largest and we know that there are at least $i$ nodes with estimated PPR larger than $(1-\epsilon/2)\cdot \pi(s,v_i^*)$, i.e., $\hat{\pi}(s,v_1^*), \hat{\pi}(s,v_2^*), \cdots, \hat{\pi}(s,v_i^*)$. We say a query fails if there exists a returned node $v_i$ such that:
\begin{enumerate}
  \item $\hat{\pi}(s,v_i)>(1- \epsilon/2)\cdot\pi(s,v_i)$,
  \item $\pi(s, v_i)< (1-\epsilon)\cdot \pi(s, v_i^*)$.
\end{enumerate}

Next, we prove that $v_i$ fails with very low probability.
Let $\epsilon' =\frac{(1-\epsilon/2)\cdot \pi(s,v_i^*)}{\pi(s,v_i)} -1$.
$$
\Pr[\hat{\pi}(s,v_i) > (1-\epsilon/2)\cdot \pi(s,v_i^*)] = \Pr[\hat{\pi}(s,v_i) > (1+\epsilon')\cdot \pi(s,v_i)]
$$
Since $\pi(s,v_i)<(1-\epsilon)\pi(s,v_i^*)$, we have that $\epsilon' > \frac{\epsilon/2}{1-\epsilon}$. Also note that
$$
\epsilon'\cdot \frac{\pi(s,v_i)}{\pi(s,v_i^*)} = (1-\epsilon/2) - \frac{\pi(s,v_i)}{\pi(s,v_i^*)} > (1-\epsilon/2)- (1-\epsilon) = \epsilon/2 .
$$
Then, according to Lemma \ref{lem:approximation}, it is satisfied that:
\begin{align*}
  \Pr[\hat{\pi}(s,v_i)&  > (1-\epsilon/2)\cdot \pi(s,v_i^*)] = \Pr[(1+\epsilon')\cdot \pi(s,v_i)] \\
   & \leq \exp\left( -\frac{(\epsilon')^2\cdot \pi(s,v_i)}{\rsum \cdot (2+2\epsilon'/3)} \cdot \frac{\rsum \cdot(2+\epsilon/3)\cdot \log{(2/\pf')}}{(\epsilon/2)^2\cdot \delta} \right) \\
   & \leq \exp\left( -\frac{\epsilon'}{(2+2\epsilon'/3)} \cdot \frac{\epsilon' \cdot \pi(s,v_i)}{\pi(s,v_i^*)} \cdot \frac{\pi(s,v_i^*)}{\delta}\cdot   \frac{ (2+\epsilon/3)\cdot \log{(2/\pf')}}{(\epsilon/2)^2} \right) \\
   &(\frac{\epsilon'}{2+2\epsilon'/3} \text{ is monotonically increasing,}  \frac{\epsilon'\cdot \pi(s,v_i)}{\pi(s,v_i^*)}>\epsilon/2, \frac{\pi(s,v_i^*)}{\delta} >1 ) \\
   &\leq \exp \left( -\frac{\epsilon/(2-2\epsilon)}{2+\epsilon/(3-3\epsilon)}\cdot \epsilon/2 \cdot 1\cdot \frac{ (2+\epsilon/3)\cdot \log{(2/\pf')}}{(\epsilon/2)^2} \right) \\
   & \leq \exp \left( -\frac{\epsilon/3+2}{2-5\epsilon/3}\cdot \log{(2/\pf')} \right) \quad (\frac{\epsilon/3+2}{2-5\epsilon/3}>1) \\
   &\leq \exp(-\log{(2/\pf')})
   = \pf'/2.
\end{align*}

As a result, when $C1$ holds, the probability that the query does not fail on any node is at least $1- n\cdot \pf'$ by applying union bound on the events that no $v_i$ fails and conditio $C2$ holds. Since in the worst case, there exists $\log{(n/\pi(s,v_k^*))}$ iterations when $C1$ holds. Therefore, we further have that the query returns $\epsilon$-approximate answer with at least
$$1-(\log_2\frac{1}{\pi(s,v_k^*)}\cdot n\cdot \pf'/8+\log{(n/\pi(s,v_k^*))}\cdot n\cdot \pf' )\geq 1-\pf$$
probability. This finishes the proof.
\end{proof}

\section{Extensions}\label{sec:extensions}

\subsection{Extending {\em BiPPR} to Whole-Graph SSPPR}\label{app:ssbippr}
Recall from Section \ref{sec:existingsolutions} that in BiPPR, it includes both a forward phase and a backward phase. It is proved in \cite{AndersenBCHMT07} that the amortized time complexity for the backward phase is $O\left(\frac{m}{n \cdot \rmax}\right)$, and in \cite{lofgren2015personalized}, it shows that the forward phase requires $O\left( \frac{\rmax \cdot \log{(1/\pf)}}{\epsilon^2\cdot \delta} \right)$ time, given the backward phase threshold $\rmax$. Afterwards, they choose $\rmax = O\left(\epsilon\cdot\sqrt{\frac{m\cdot\delta}{n\cdot \log{(1/\pf)}}}\right)$ to minimize the time complexity for the pairwise PPR query, which is $O\left( \frac{1}{\eps}\sqrt{\frac{m \cdot \log{(1/\pf)}}{n\cdot \delta}}\right)$. To apply {BiPPR} for whole-graph SSPPR queries, a straightforward approach is to use it to answer $n$ point-to-point PPR queries (i.e., from $s$ to every other node). This, however, leads to a total time complexity of $O\left( \frac{1}{\eps}\sqrt{\frac{m n \cdot \log{(1/\pf)}}{\delta}}\right)$, which is a factor of $\sqrt{n}$ larger than that of the whole-graph SSPPR {\ssppr}.

To improve this, we observe that the $n$ point-to-point PPR queries share the same forward phase, and hence, we can conduct the forward phase once and then re-use its results for all $n$ backward phase. In addition, to reduce the total cost of $n$ backward phases, we can set $\rmax$ to a larger value; although it would require more random walks to be generated in the forward phase, the tradeoff is still favorable as the overhead of the forward phase has been significantly reduced by the re-usage of results. Since the backward phase (for all target nodes) has a cost of $O\left(\frac{m}{\rmax}\right)$, it can be verified that, by setting  $\rmax=O\left(\eps\cdot\sqrt{\frac{m\cdot \delta}{\log{(1/\pf)}}}\right)$, the expected time complexity of this optimized version of { BiPPR} (for SSPPR queries) is $$O\left( \frac{1}{ \eps \cdot \sqrt{\delta}}\sqrt{m \cdot \log{(1/\pf)}}\right),$$ which is identical to that of single-source { \ssppr}.

However, as we show in Section~\ref{sec:exp}, the optimized {\em BiPPR} is significantly outperformed by Whole-Graph SSPPR { \ssppr}. The reason is that, even after the aforementioned optimization, { BiPPR} either degrades to { MC} when $\rmax$ is large, or still requires performing a backward phase from each node $v$ in $G$, even if $\pi(s, v)$ is extremely small and can be omitted. In contrast, single-source { \ssppr} does not suffer from these deficiencies, and avoid examining nodes with very small PPR values.
Instead, it performs a forward search phase, followed by a number of random walks from the nodes visited in the search; this process tends to avoid examining nodes with very small PPR values, since those nodes are unlikely to be visited by the forward push or the random walks.

\subsection{Extending {\ssppr} to Source Distributions}\label{app:foraextension}
In many real applications of the personalized PageRank, the source $s$ can be a distribution (e.g., on a set of bookmark pages) instead of a single node. We show that our algorithms for single-source-node {\em \ssppr} can be extended to the case of arbitrary source distributions. 

Let $\sigma$ be the node distribution that the source node $s$ is sampled from. 
For any target node $t$, its personalized PageRank with respect to $\sigma$ is defined as \cite{Haveliwala02,lofgren2015personalized}
$$\pi(\sigma, t)=\sum_{v\in V} \sigma(v) \cdot \pi(v,t),$$
where $\sigma(v)$ is the probability that a sample from $\sigma$ equals $v$.
To apply our algorithms, we modify Line 1 of Algorithm~\ref{alg:lp} to set the initial residue of each node $v$ as $\sigma(v)$. Let $\rese(\sigma,v)$ (resp.\ $r(\sigma,v)$) denote the reserve (resp.\ residue) of node $v$ in the modified version of Forward Push. Then, it is easy to prove that the following invariant holds for the modified version of Forward Push:
$$\pi(\sigma, t) =\rese(\sigma,t) + \sum_{v\in V} r(\sigma, v)\cdot \pi(v,t).$$
In particular, the initial states satisfy the above invariant, and by induction, it can be proved that the invariant still holds after every push operation. Given the above invariant, our algorithms can be applied to compute $\pi(\sigma, t)$ without compromising their asymptotic guarantees. Besides, the indexing scheme presented in Section~\ref{sec:ehi} is still applicable, since the maximum number of random walks required for each node is identical to that in the single-source-node algorithms.

{\cblue

\subsection{Extending {\ssppr} to Global PageRank}\label{app:foraextension}

Global PageRank can be regarded as the personalized PageRank with a source distribution of $(1/n, 1/n,\cdots, 1/n)$. According to our discussion in Section \ref{app:foraextension}, {\ssppr} can further be used to calculate the global PageRank. The classic solution for PageRank is the Power-Method, which takes a running time of $O(m\cdot \log{\frac{1}{\delta\cdot\epsilon}})$, to provide $\epsilon$-approximation for PageRank scores above the threshold $\delta$. To apply the Monte-Carlo method, we can first randomly sample source node, record the number of random walks stops at a node $v$, and use the fraction of random walks stopped at $v$ as the estimated PageRank. To derive $\epsilon$-approximation, the running time will be $O\left(\frac{(2\eps/3+2)\cdot \log{(2/\pf)}}{  \eps^2 \cdot \delta}\right)$. When $\epsilon$ is moderate, and $m > \frac{3}{\epsilon^2\cdot \delta}$, the Monte-Carlo approach achieves a better time complexity.

With the proposed FORA framework, the running time can be bounded by:
$$
\textstyle O\left(\min \{ \frac{\sqrt{m \cdot \log{(2/\pf)}}}{ \eps \cdot \sqrt{\delta}}, \frac{ \log{(2/\pf)}}{  \eps^2 \cdot \delta}\}\right),
$$
which actually achieves the best asymptotic performance compared to the two existing solutions when $\epsilon$ is moderate and $m > \frac{3}{\epsilon^2\cdot \delta}$.

Besides, our top-$k$ algorithm can be further extended to output the top-$k$ nodes with the highest PageRank scores with a running time of:
$$ O\left( \min \{ \frac{\sqrt{m \cdot \log{(2/\pf)}}}{ \eps \cdot \sqrt{\pi(v_k^*)}}, \frac{\log{(2/\pf)}}{\epsilon^2\cdot\pi(v_k^*)}\}\right),$$

where $\pi(v_k^*)$ is the node with the $k$-th largest PageRank score.
}

\section{Other Related Work} \label{sec:other-related}

Apart from the methods discussed in Section~\ref{sec:existingsolutions} 
, there exists a plethora of techniques for whole-graph and top-$k$ SSPPR queries. Those techniques, however, are either subsumed by {\em BiPPR} and {\em HubPPR} or unable to provide worst-case accuracy guarantees. In particular, a large number of techniques adopt the {\em matrix-based} approach, which formulates PPR values with the following equation:
\begin{equation}\label{eqn:powerit}
\vect{\pi_s} = \a \cdot \vect{e_s} + (1-\a)\cdot \vect{\pi_s}\cdot D^{-1}A,
\end{equation}
where $\vect{\pi_s}$ is a vector whose $i$-th element equals $\pi(s, v_i)$, $A \in \{0, 1\}^{n\times n}$ is the adjacency matrix of $G$, and $D\in R^{ n \times n}$ is a diagonal matrix in which each $i$-th element on its main diagonal equals the out-degree of $v_i$. Matrix-based methods typically start from an initial guess of $\vect{\pi_s}$, and then iteratively apply Equation~\ref{eqn:powerit} to refine the initial guess, until converge is achieved. Recent work that adopts this approach \cite{FujiwaraNYSO12,maehara2014computing,ZhuFCY13,ShinJSK15} propose to decompose the input graph into tree structures or sub-matrices, and utilize the decomposition to speed up the PPR queries.  The state-of-the-art approach for the single-source and top-$k$ PPR queries in this line of research work is {\em BEAR} proposed by Shin et al.\ \cite{ShinJSK15}. However, as shown in \cite{WangTXYL16}, the best of these methods is still inferior to {\em HubPPR} \cite{WangTXYL16} in terms of query efficiency and accuracy.

There also exist methods that follow similar approaches to the forward search method \cite{AndersenCL06} described in Section~\ref{sec:existingsolutions}. Berkin et al.\ \cite{Berkhin05} propose to pre-compute the Forward Push results from several important nodes, and then use these results to speed up the query performance. Ohsaka et al.\ \cite{OhsakaMK15} and Zhang et al.\ \cite{ZhangLG16} further design algorithms to update the stored Forward Push results on dynamic graphs. Jeh et al.\ \cite{JehW03} propose the backward search algorithm, which (i) is the reverse variant of the Forward Push method, and (ii) can calculates the estimated PPRs from all nodes to a target node $t$. Zhang et al.\ \cite{ZhangLG16} also design the algorithms to update the stored backward push results on dynamic graphs. Nonetheless, none of these solutions in this category provide approximation guarantees for single-source or top-$k$ PPR queries on directed graphs.  

In addition, there are techniques based on the Monte-Carlo framework. Fogaras et al.\ \cite{fogaras2005towards} propose techniques to pre-store the random walk results, and use them to speed up the query processing. Nonetheless, the large space consumption of the technique renders it applicable only on small graphs. Bahmani et al.\ \cite{BackstromL11}, Sarma et al.\ \cite{SarmaMPU13} and Lin et al.\ \cite{Lin19} investigate the acceleration of the Monte-Carlo approach in distributed environments.  Lofgren et al.\ propose {\em FastPPR} \cite{lofgren2014fast}, which significantly outperforms the Monte-Carlo method in terms of query time. However, {\em FastPPR} in turn is subsumed by {\em BiPPR} \cite{lofgren2015personalized} in terms of query efficiency. In \cite{Lofgrenthesis15}, Lofgren further proposes to combine a modified version of Forward Push, random walks, and the backward search algorithm to reduce the processing time of pairwise PPR queries. Nevertheless, the time complexity of the method remains unclear, since Lofgren does not provide any theoretical analysis on the asymptotic performance of the method in \cite{Lofgrenthesis15}. Wang et al. \cite{WangT18} also consider combine the forward random walks and the back search to accelerate the heavy hitter queries in personalized PageRank;  Wei et al. \cite{WeiHX0LDW19} find connections between SimRank and PPR, combine the forward random walks with backward search, and propose the PRSim algorithm that can answer SimRank queries with sublinear time on power-law graphs.

Finally, a plethora of research work \cite{lofgren2015personalized,GuptaPC08,AvrachenkovLNSS11,BahmaniCX11,FujiwaraNYSO12,FujiwaraNSMO13sigmod} study how to efficiently process the top-$k$ PPR queries. Gupta et al.\  \cite{GuptaPC08}  propose to use Forward Push to return the top-$k$ answers. However, their solutions do not provide any approximation guarantee. Avrachenkov et al.\ \cite{AvrachenkovLNSS11} study how to use Monte-Carlo approach to find the top-$k$ nodes. Nevertheless, the solution does not return estimated PPR values and does not provide any worst-case assurance. Fujiwara et al.\ \cite{FujiwaraNYSO12,FujiwaraNSMO13sigmod} and Shin et al.\cite{ShinJSK15} investigate how to speed up the top-$k$ PPR queries with the matrix decomposition approach. These approaches either cannot scale to large graphs or do not provide approximation guarantees.

{\cblue
Most recently, Wei et al. propose the index-free TopPPR \cite{WeiHX0SW18}, which combines the Forward Push, random walk, and the backward propagation to answer top-$k$ PPR queries with precision guarantees. However, as we will see in our experiments, our FORA+ actually achieves a better performance than TopPPR when we set $\rho=0.99$ on large datasets and achieves a better trade-off among the space consumption, query efficiency, and query accuracy.
The main reason is that FORA+ can benefit from the indexing scheme while TopPPR cannot. It is also difficult for TopPPR to benefit from indexing scheme since {\em (i)} their random walk adopts the $\sqrt{\alpha}$-random walk, and all the nodes visited will be used to estimate the PPR scores; {\em (ii)} the sources of random walks in Monte-Carlo phase are generated randomly in TopPPR and it may results in poor cache performances. In contrast, FORA+ only need to scan the index structures in order and only stores the destinations in index structure, making it light-weighted and cache-friendly.
}


\section{Experiments} \label{sec:exp}

In this section, we experimentally evaluate our methods for whole-graph SSPPR queries and top-$k$ SSPPR queries. For whole-graph (resp. top-$k$) SSPPR queries, we include our index free methods {\em \ssppr}, which includes the optimizations as mentioned in Section \ref{sec:opt} (resp. Section \ref{sec:topk-running-time-guarantee}),  and their index-based variant, referred to as {\em \ssppr+}, against the states of the art. All experiments are conducted on a Linux machine with an Intel 2.9GHz CPU and 200GB memory.

\begin{table}[t]
\centering
\renewcommand{\arraystretch}{1.1}
\caption{Datasets. ($K=10^3, M=10^6, B=10^9$)}
\label{tbl:exp-data}
\begin{tabular}{|l|r|r|r|c|}
    \hline
    {\bf Name} & \multicolumn{1}{c|}{$\boldsymbol{n}$} & \multicolumn{1}{c|}{$\boldsymbol{m}$} & \multicolumn{1}{c|}{\bf Type}  & \multicolumn{1}{c|}{\bf Linking Site}\\
    \hline
    {\em DBLP}   &613.6K    & 2.0M          & undirected & www.dblp.com\\
    \hline
    {\em Web-St}      & 281.9K    & 2.3M      & directed & www.stanford.edu\\
    \hline
    {\em Pokec}      & 1.6M    &30.6M      & directed & pokec.azet.sk \\
    \hline
    {\em LJ} & 4.8M & 69.0M    & directed &  www.livejournal.com \\
    \hline
    {\em Orkut}      & 3.1M    & 117.2M      & undirected  & www.orkut.com \\
    \hline
    {\em Twitter}   & 41.7M & 1.5B & directed & twitter.com\\
    \hline\hline
    {\em RMAT-1}   & 41.7M & 1.5B & directed & -\\
    \hline
    {\em RMAT-2}   & 128M & 8.6B & directed & -\\

    \hline\hline
    {\em X0}   & 26.1M & 485.6M  & undirected & tencent.com\\
    \hline
    {\em X1}   &50.1M & 792.0M & undirected & tencent.com\\
    \hline
    {\em X2}   &58.2M & 1.1B & undirected & tencent.com\\
    \hline
    {\em X3}   & 74.3M & 1.5B & undirected & tencent.com\\
  \hline

\end{tabular}
\end{table}

\subsection{Experimental Settings}

{\bf Datasets and query sets.} We use 6 real graphs: {\em DBLP}, {\em Web-St}, {\em Pokec},  {\em LJ}, {\em Orkut}, and {\em Twitter}, which are public benchmark datasets used in recent work \cite{lofgren2015personalized,WangTXYL16}. We further generate two synthetic datasets using the RMAT random graph generator \cite{ChakrabartiZF04}, denoted as {\em RMAT-1} and {\em RMAT-2} to examine the scalability of our proposed top-$k$ algorithms. Note that {\em RMAT-2} includes 8.6 billion edges. Moreover, we test our methods on four different game social networks from Tencent Games. Table \ref{tbl:exp-data} summarizes the statistics of the datasets. For each dataset, we choose $50$ source nodes uniformly at random, and we generate an SSPPR query from each chosen node. In addition, we also generate $5$ top-$k$ queries from each source node, with $k$ varying in $\{100, 200, 300, 400, 500\}$. Note that the maximum $k$ is set to $500$ in accordance to Twitter's Who-To-Follow service \cite{gupta2013wtf}, whose first step requires deriving top-$500$ PPR results.

\header
{\bf Methods.} For whole-graph SSPPR queries, we compare our proposed {\em \ssppr} and {\em \ssppr+} against three methods: (i) the {\em Monte-Carlo} approach, dubbed as {\em MC}; (ii) the optimized {\em BiPPR} for SSPPR queries described in Section \ref{app:ssbippr}; (iii) {\em HubPPR}, which is the indexed version of {\em BiPPR}.
We further compare our {\em FORA} and {\em FORA+} against the version without the optimization techniques (Ref. Section \ref{sec:opt}), dubbed as {\em FORA-Basic} and {\em FORA-Basic+} for the index-free and index-based methods, respectively.

For top-$k$ SSPPR queries, we compare our algorithm with the existing approximate solutions: the single-source {\em BiPPR}, the top-$k$ algorithm for {\em HubPPR} in \cite{WangTXYL16}, and the {\em TopPPR} \cite{WeiHX0SW18}. For {\em BiPPR} and {\em HubPPR}, we use the same $\epsilon$, $\delta$, and $\pf$ as FORA. For {\em TopPPR}, we follow the settings in \cite{WeiHX0SW18} and set $\rho=0.99$. We further extend our top-$k$ algorithm (Ref. Algorithm \ref{alg:layer-topk}) to the Monte-Carlo approach, and denote this algorithm as {\em MC-Topk}. We also include the {\em Forward Push} \cite{AndersenCL06} and {\em TPA} \cite{yoon2018tpa} as a baseline for top-$k$ SSPPR queries, and we tune their accuracy control parameters on each dataset separately, so that their precisions for top-$k$ PPR queries are the same as {\em \ssppr} on each dataset. Besides, we also compare our {\em FORA} and {\em FORA+} against the version without the top-$k$ optimization techniques (Ref. Section \ref{sec:topk-running-time-guarantee}), dubbed as {\em FORA-Basic} and {\em FORA-Basic+} for the index-free and index-based algorithms, respectively.

\header
{\bf Parameter setting.} Following previous work \cite{lofgren2014fast,lofgren2015personalized,WangTXYL16}, we set $\delta = 1/n, \pf = 1/n$, and $\eps=0.5$. 
For our {\em FORA} and {\em FORA+}, note that the performance and / or the index size depends on the choice of $\rmax$. On the whole-graph queries, for {\em FORA-Basic} and {\em FORA-Basic+}, $\rmax$ is set according to Section \ref{sec:ana}; we then use the balanced strategy to auto-decide the choice of $\rmax$ for {\em FORA}; for {\em FORA+}, we include the optimization technique in Section \ref{sec:opt-zero-hop} and tune $\rmax$ varying from $\rmax^*$ to $7\rmax^*$ where $\rmax^*$ is the choice of $\rmax$ set according to Equation \ref{eqn:rmax}. We find that $\rmax=2\rmax^*$ strikes the best trade-off, and therefore use this setting for {\em FORA+} on the whole-graph queries.
For top-$k$ PPR query, we also tune $\rmax$ to find the best index size for our top-$k$ queries and vary $\rmax$ from $\rmax^*$ to $7\rmax^*$. As we show in the experiment, when $\rmax$ is set to $\rmax^*$, it achieves the best trade-off, and therefore we use this setting in our top-$k$ evaluation.
For fair comparison, the index size of {\em HubPPR} is set to be the same as that of {\em \ssppr+} for top-$k$ SSPPR processing and also {\em \ssppr-Basic+} (for whole-graph SSPPR processing).

\subsection{Whole-Graph SSPPR Queries} \label{sec:exp-ss}

In our first set of experiments, we evaluate the efficiency of each method for whole-graph SSPPR queries on the 6 public datasets.
Table \ref{tbl:ppr-query} reports the average query time of each method. Observe that both {\em \ssppr} and {\em BiPPR} achieve better query performance than {\em MC}, which is consistent with our analysis that the time complexity of {\em \ssppr} and {\em BiPPR} is better than that of {\em MC}. Moreover, {\em \ssppr} is at least $4$ times faster than {\em BiPPR} on most of the datasets. 
The reason, as we explain in Section \ref{sec:ana}, is that {\em BiPPR} either degrades to the {\em MC} approach when the backward threshold is large, or requires conducting a backward search from each node $v$ in $G$, even if $\pi(s, v)$ is extremely small. In contrast, {\em \ssppr} avoids degrading to {\em MC} and tends to omit nodes with small PPR values, which helps improve efficiency. In addition, {\em \ssppr+} achieves significant speedup over {\em \ssppr}, and is around 10 times faster than the latter on most of the datasets. 
The {\em HubPPR} also improves over {\em BiPPR}, but the improvement is far less than what {\em \ssppr+} achieves over {\em \ssppr}. 
Moreover, even without any index, {\em \ssppr} is still more efficient than {\em HubPPR}. 

As we can observe from Table \ref{tbl:ppr-query}, with our optimization techniques introduced in Section \ref{sec:opt}, the index-free method {\em FORA} improves over {\em FORA-Basic} by up to 1.8x. Apart from the 6 public datasets, we further test the effectiveness of our methods on the four social networks from company X. We omit the results for the baseline methods (MC, BiPPR, and HubPPR) since they incur prohibitive processing costs.  As shown in Table \ref{tbl:fora-ss-opt}, {\em FORA} still improves over {\em FORA-Basic} by more than 2x almost on all datasets, which demonstrates the effectiveness of our proposed optimization technique for online algorithms. For the index-based method, {\em FORA} improves over {\em FORA+} by at least twice on almost all datasets, and up to 2.8x. As shown in Table \ref{tbl:exp-space-consumption},  the space consumption required by {\em FORA+} is twice as that of {\em FORA-Basic+}, which demonstrates that our optimization technique achieves a good trade-off between query time and space consumption.

\begin{table}[!t]
\centering
\caption{Whole-graph SSPPR performance (s) (i). ($K=10^3$)}
\label{tbl:ppr-query}
  \begin{tabular}{ | l | r | r|r|r |r|r|r|}
    \hline
        & {\em MC} & {\em BiPPR} & {\em HubPPR} & {\em FORA-Basic} & {\em FORA} & {\em FORA-Basic+}&   {\em \ssppr+} \\
    \hline
 {\em DBLP}  &14.2 &	3.8&  	2.8&	0.8& 0.6	& 0.09 & 0.05 \\
    \hline
{\em Web-St}  & 5.4	&3.7&  	1.6	& 0.03& 0.02	&  0.01& 0.01 \\
    \hline
 {\em Pokec} &69.1	&24.9& 19.6& 11.26&	6.3 & 0.9 & 0.4	 \\
    \hline
 {\em LJ}&163.5	& 61.4& 50.8	& 15.5 & 9.9	&1.2 & 0.6 \\
    \hline
{\em Orkut} &230.6	& 158.2 &126.3	&  40.1& 26.4	& 4.8	& 1.7 \\
    \hline
{\em Twitter}   &4.3K	&3.1K& 	 2.4K &  513.8& 283.1	&63.3 &	 29.8\\
    \hline
  \end{tabular}
\end{table}

\begin{table}[!t]
\centering
\caption{Whole-graph SSPPR performance (s) (ii). ($K=10^3$)}
\label{tbl:fora-ss-opt}
  \begin{tabular}{ | l | r | r|r|r|}
    \hline
        & {\em FORA-Basic} & {\em FORA} & {\em FORA-Basic+}&   {\em \ssppr+}\\
    \hline
{\em X0}   & 527.47  & 259.9	&66.7 &	31.2 \\
    \hline
{\em X1}   & 957.4 &  504.9 	&130.7 &	60.4 \\
    \hline
{\em X2}   & 1072.5 & 530.4	& 152.4 &	63.7 \\
    \hline
{\em X3}   & 2340.6 &  1163.3 	&216.8 & 97.2	 \\
    \hline
  \end{tabular}
\end{table}

\begin{figure*}[!t]
 \centering
    \begin{small}
    \begin{tabular}{cc}
      	\multicolumn{2}{c}{\hspace{-8mm} \includegraphics[height=10mm]{./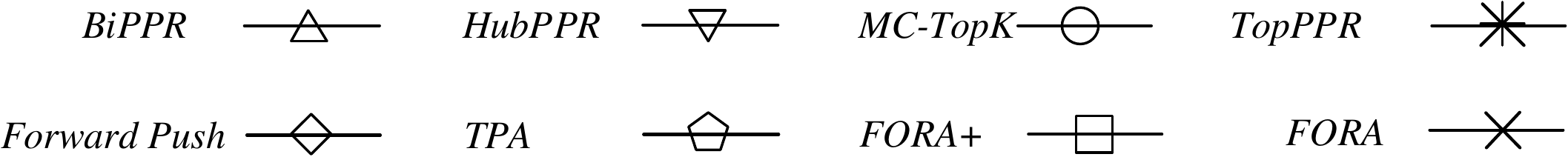}}  \\
      	
        \hspace{-2mm} \includegraphics[height=35mm]{./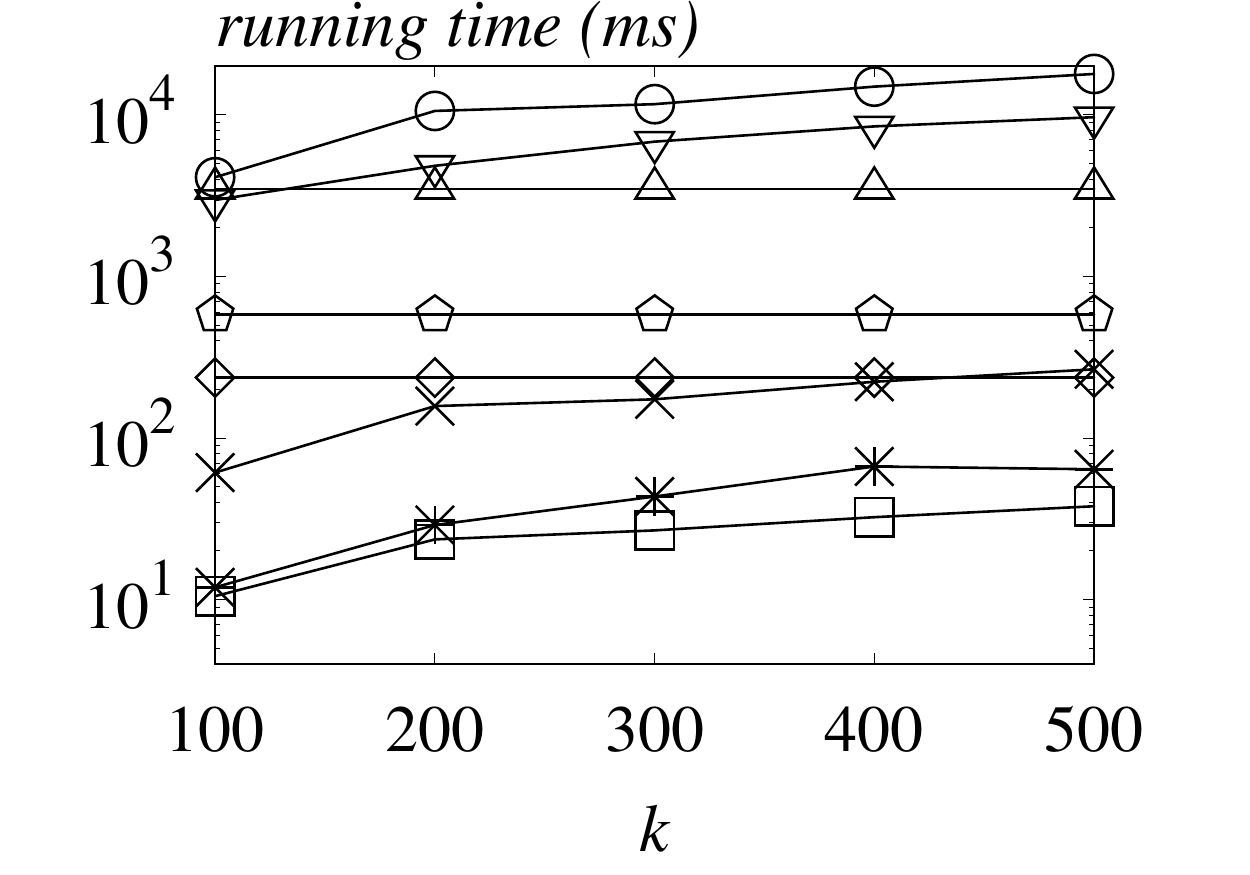} &
        \hspace{-2mm} \includegraphics[height=35mm]{./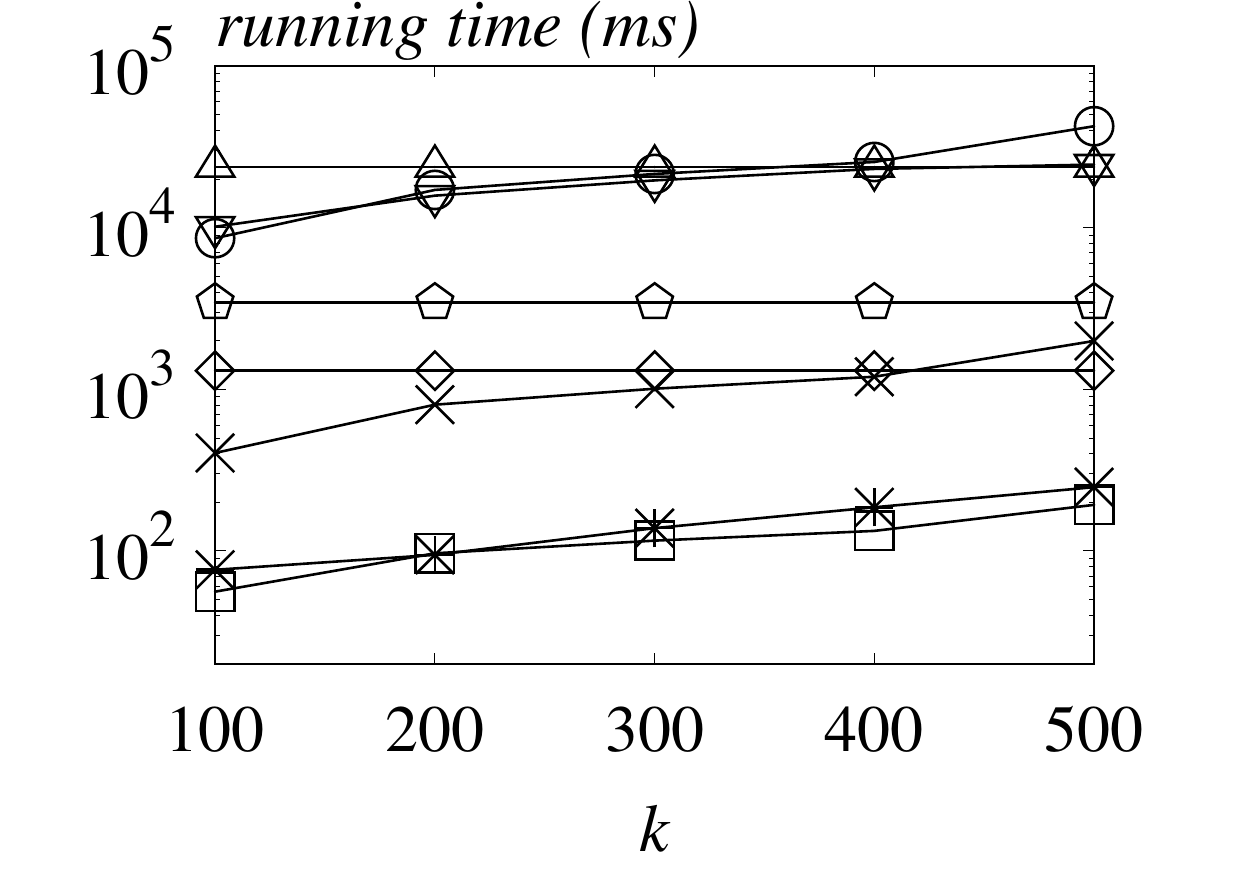} \\
        \hspace{-2mm} (a) DBLP  &
        \hspace{-2mm} (b) Pokec \\

        \hspace{-2mm} \includegraphics[height=35mm]{./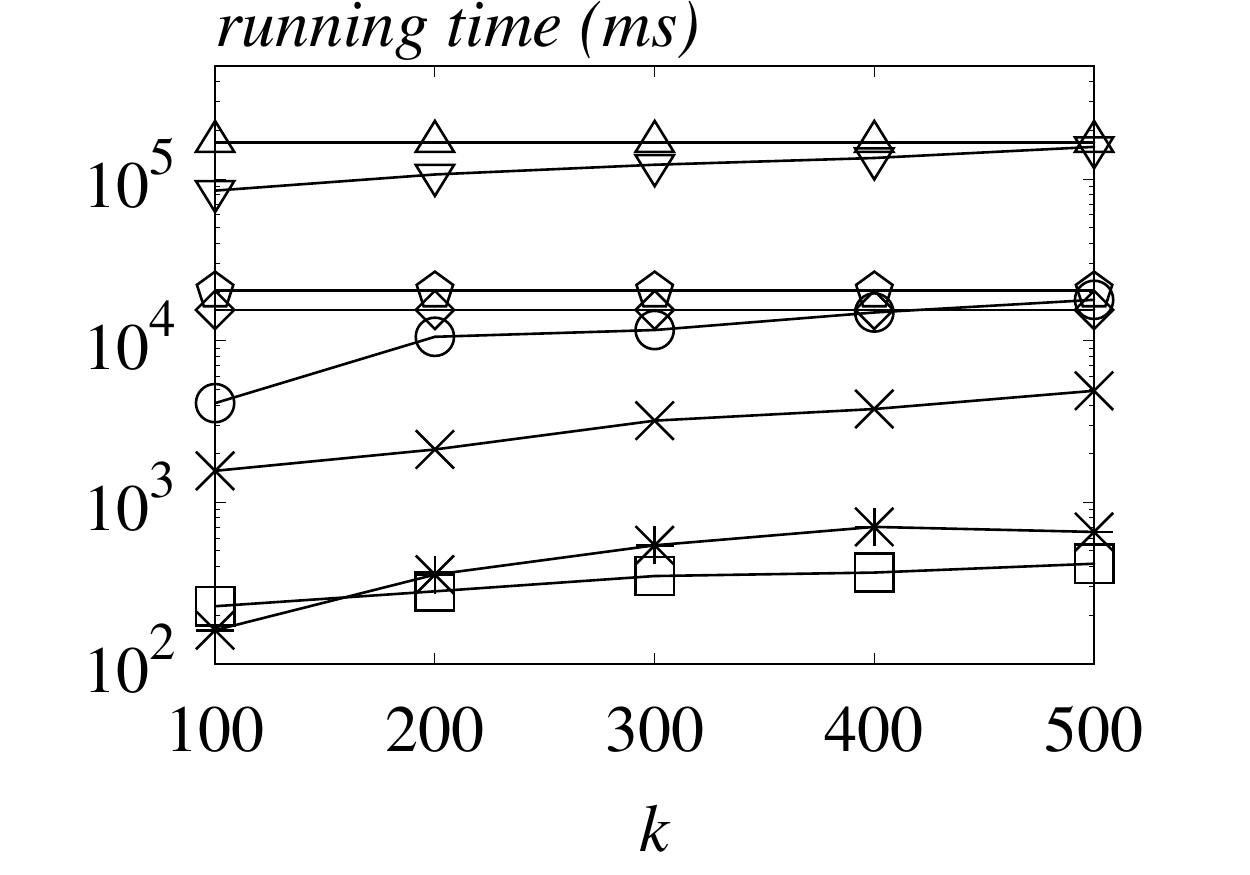} &
        \hspace{-2mm} \includegraphics[height=35mm]{./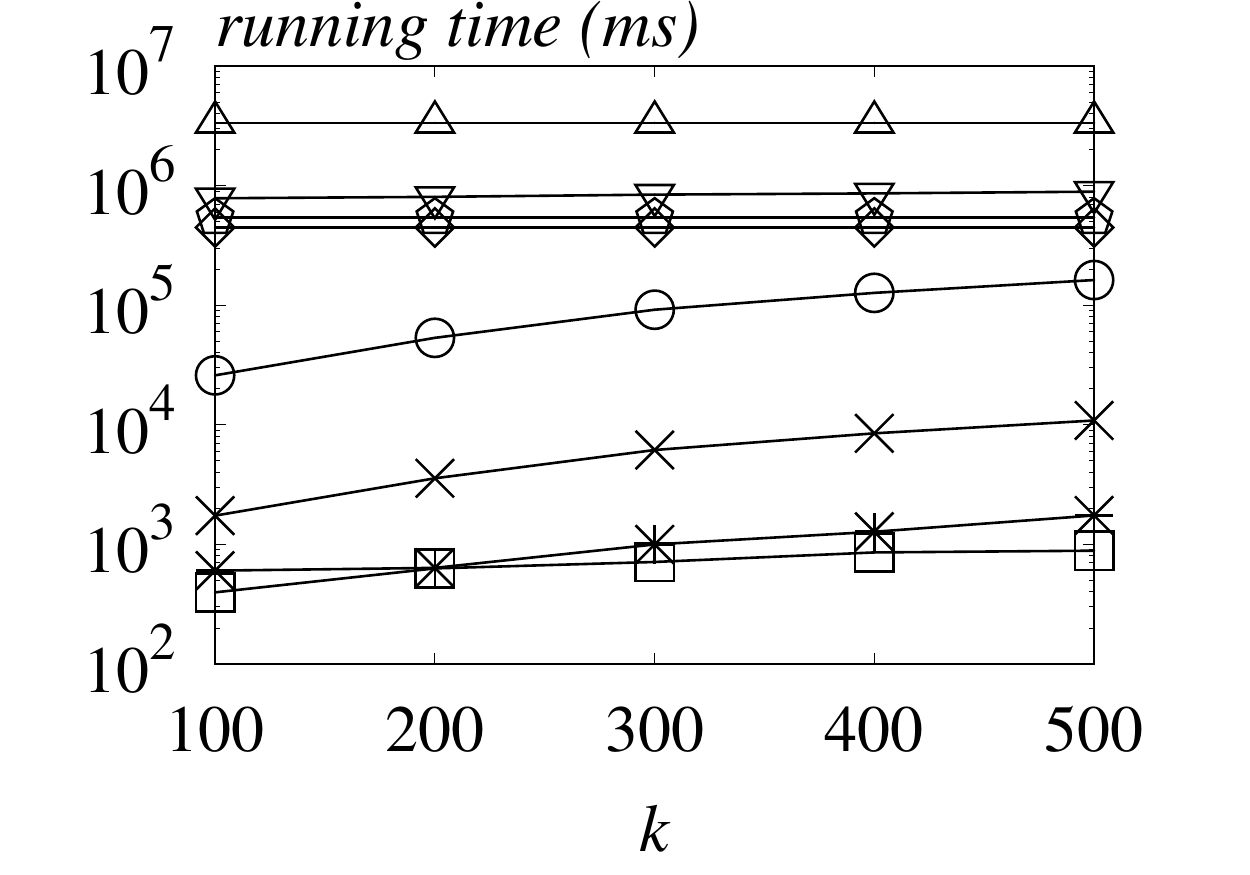} \\
        \hspace{-2mm} (c) Orkut &
        \hspace{-2mm} (d) Twitter \\

 \end{tabular}
 \caption{Top-$k$ SSPPR query efficiency: varying $k$.} \label{exp:topk-time}
\end{small}
\end{figure*}
\begin{figure*}[!t]
 \centering
\begin{small}
    \begin{tabular}{cc}
        \multicolumn{2}{c}{\hspace{-8mm} \includegraphics[height=10mm]{./figure/Revision-new-figures/topk_query_legend-eps-converted-to.pdf}}  \\

        \hspace{-2mm} \includegraphics[height=35mm]{./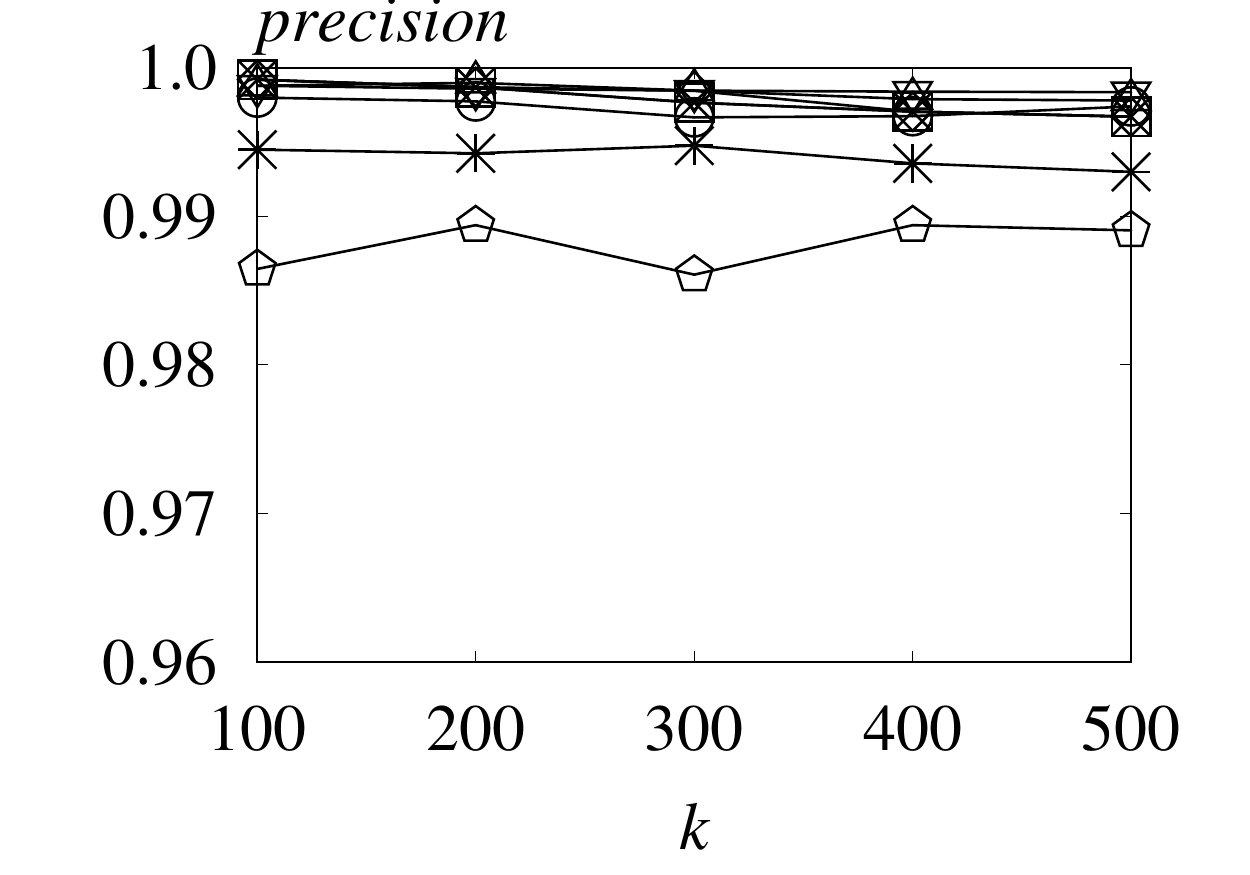} &
        \hspace{-2mm} \includegraphics[height=35mm]{./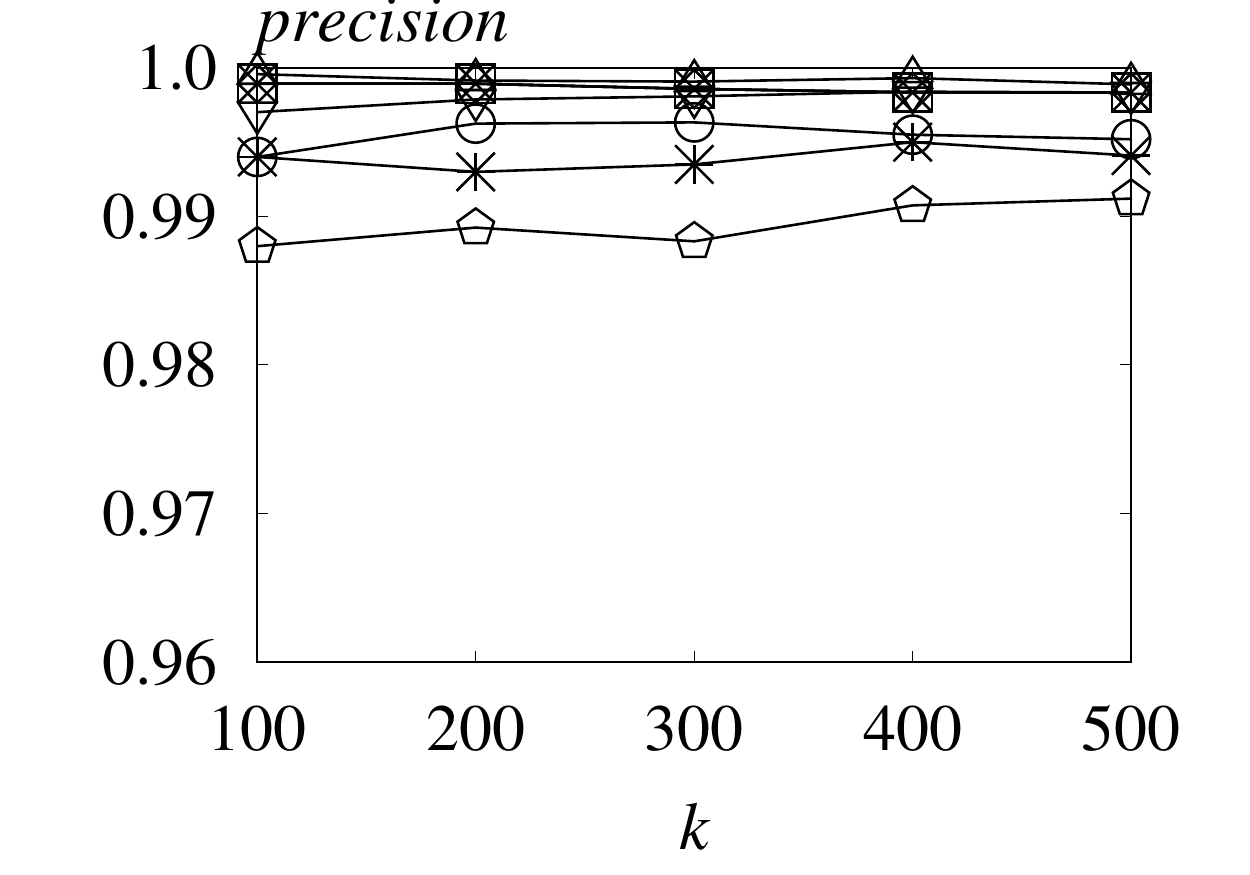}\\
        \hspace{-2mm} (a) DBLP  &
        \hspace{-2mm} (b) Pokec \\
	 	\hspace{-2mm} \includegraphics[height=35mm]{./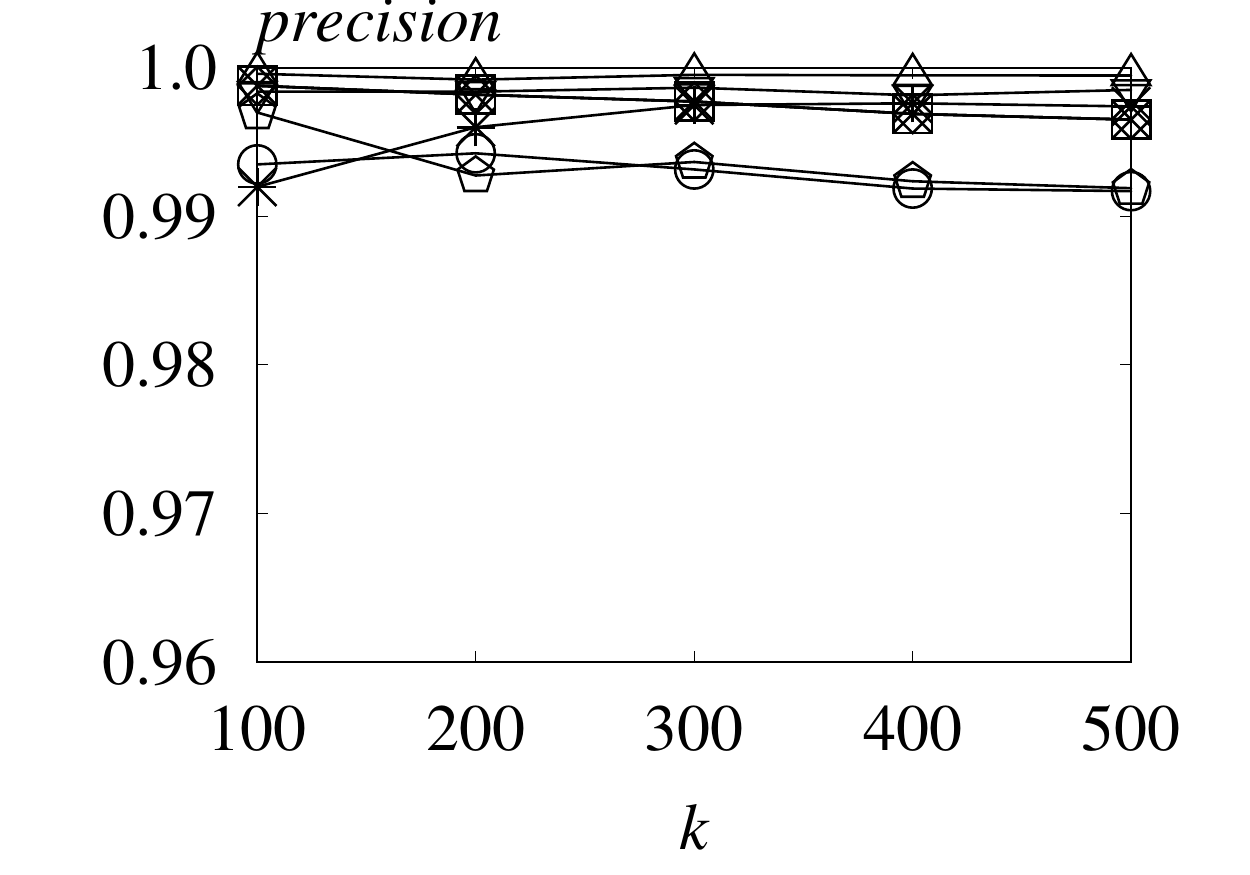} &
		\hspace{-2mm} \includegraphics[height=35mm]{./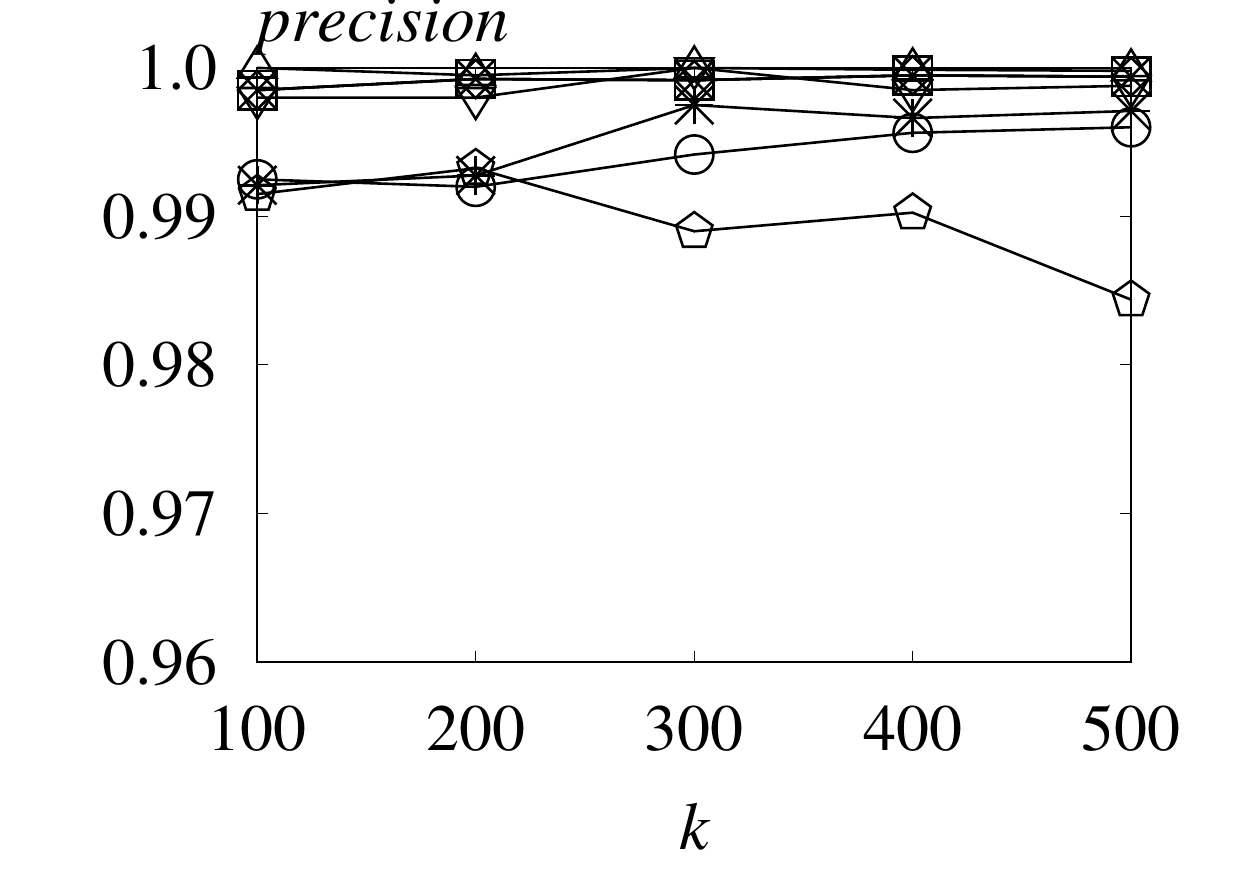}\\
        \hspace{-2mm} (c) Orkut &
        \hspace{-2mm} (d) Twitter \\
 \end{tabular}
 \caption{Top-$k$ SSPPR query accuracy: varying $k$.} \label{exp:topk-acc}
\end{small}
\end{figure*}

\begin{figure*}[!t]
	\centering
	\begin{small}
		\begin{tabular}{cc}
			\multicolumn{2}{c}{\hspace{-8mm} \includegraphics[height=10mm]{./figure/Revision-new-figures/topk_query_legend-eps-converted-to.pdf}}  \\

			\hspace{-2mm} \includegraphics[height=35mm]{./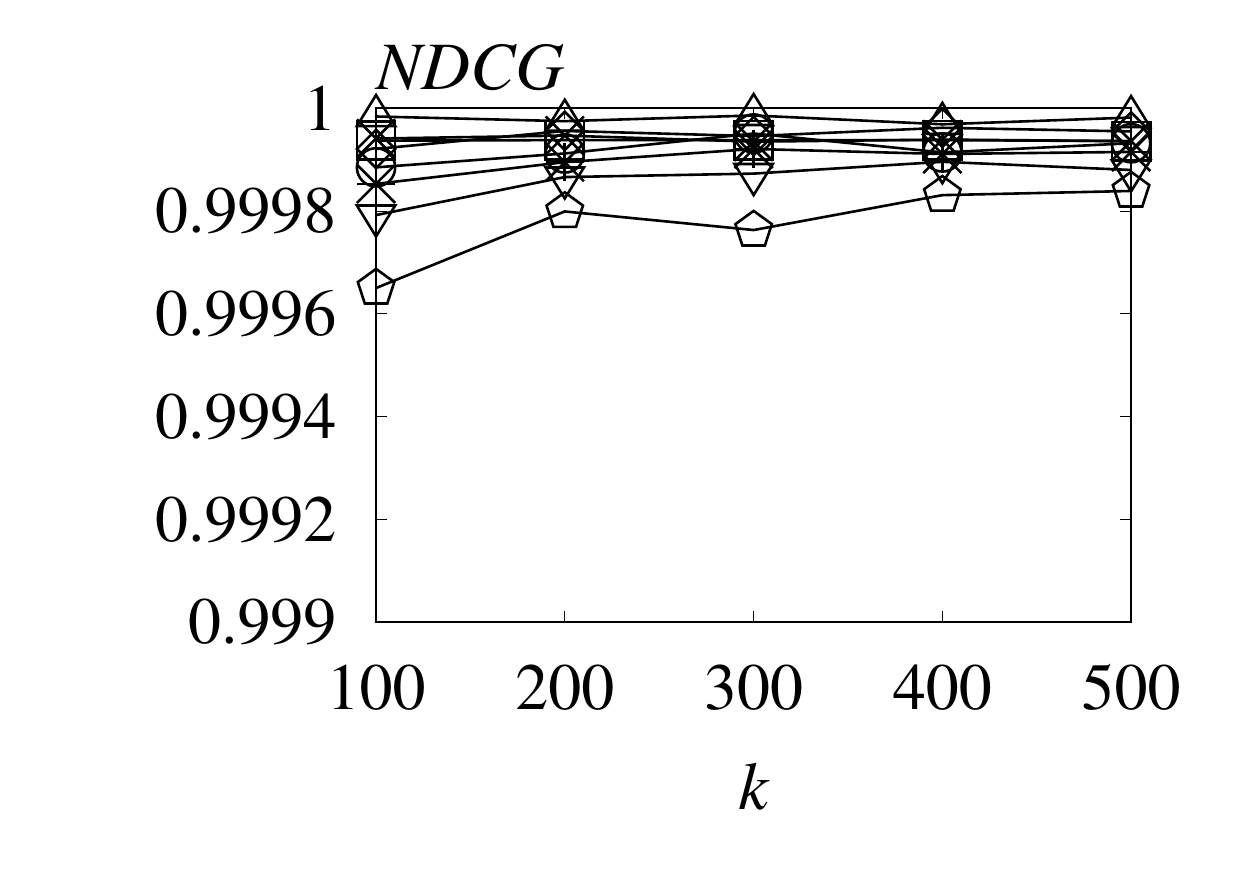} &
			\hspace{-2mm} \includegraphics[height=35mm]{./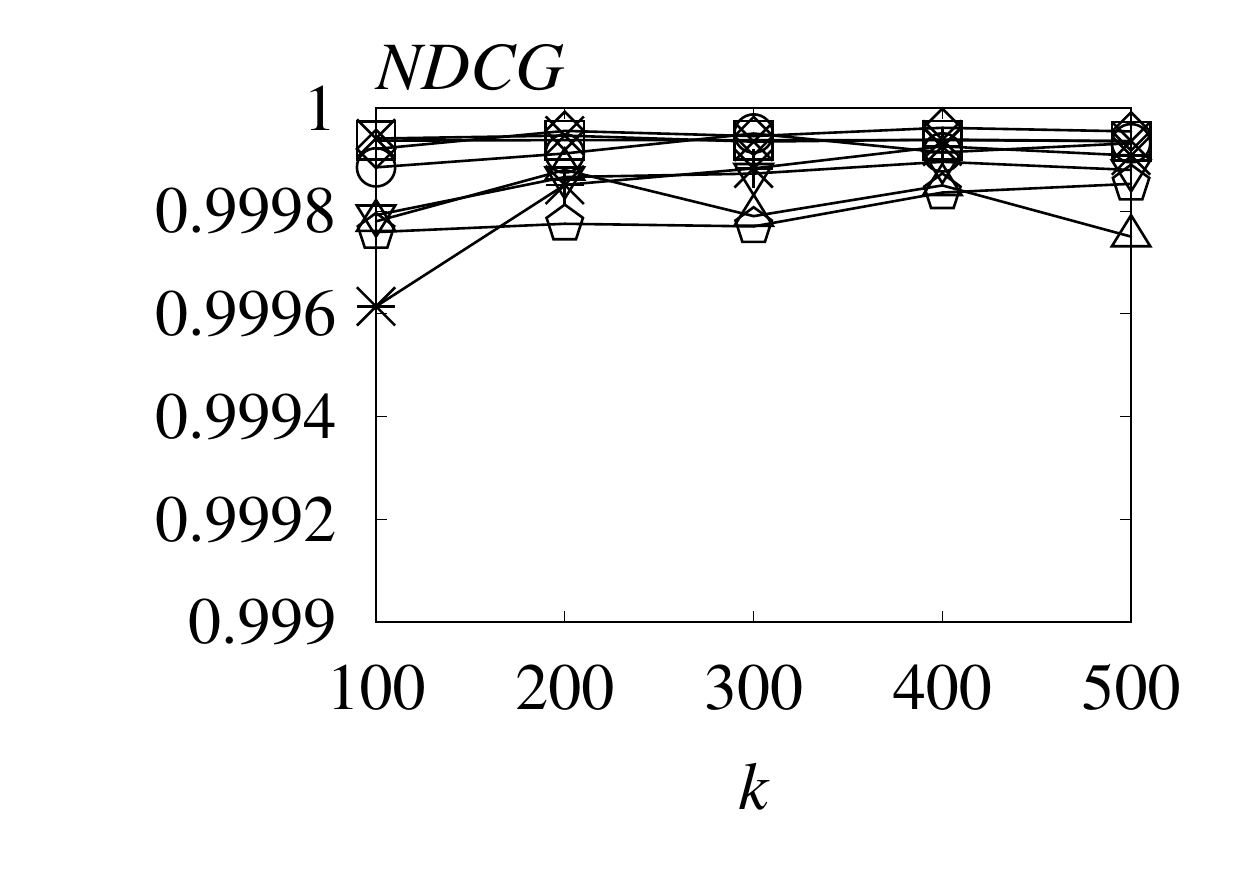}\\
			\hspace{-2mm} (a) DBLP  &
			\hspace{-2mm} (b) Pokec \\
			\hspace{-2mm} \includegraphics[height=35mm]{./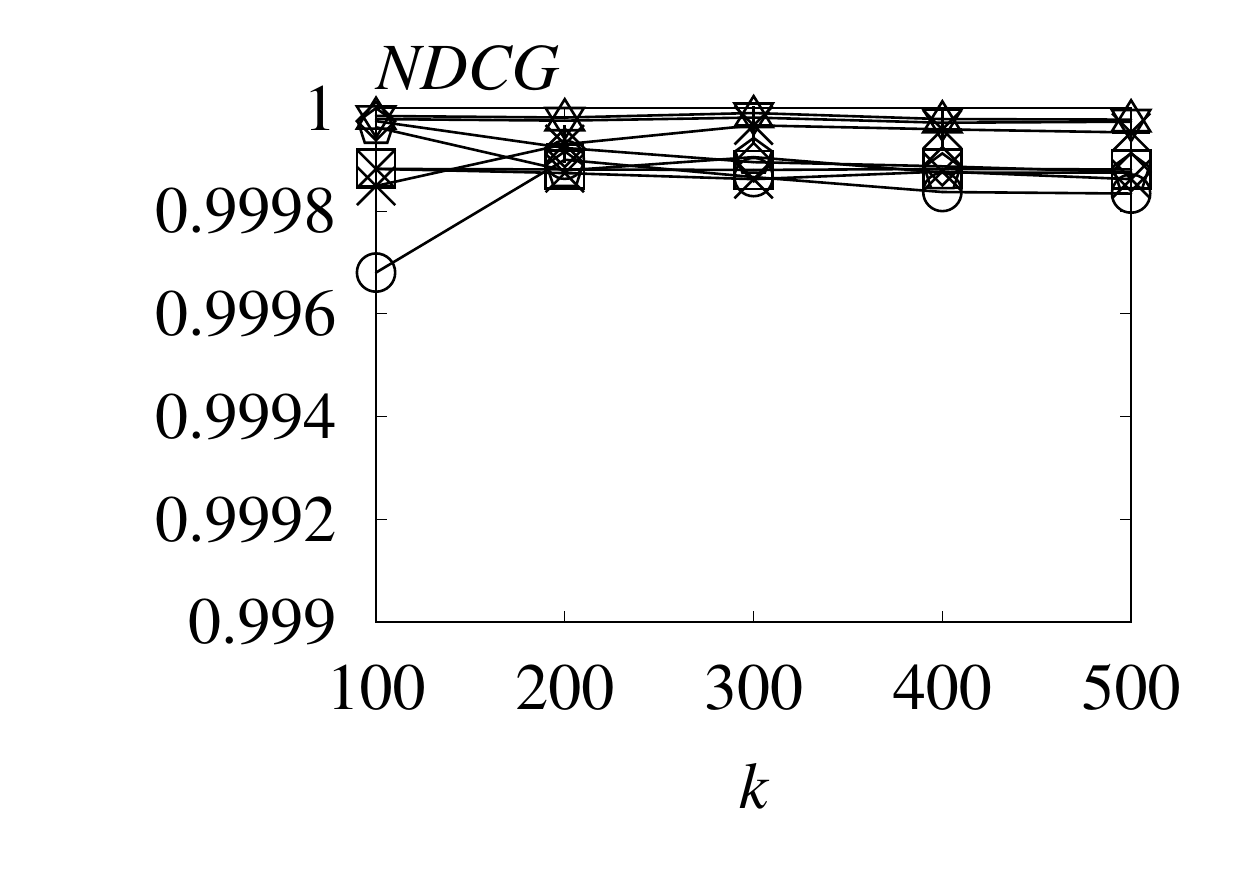} &
			\hspace{-2mm} \includegraphics[height=35mm]{./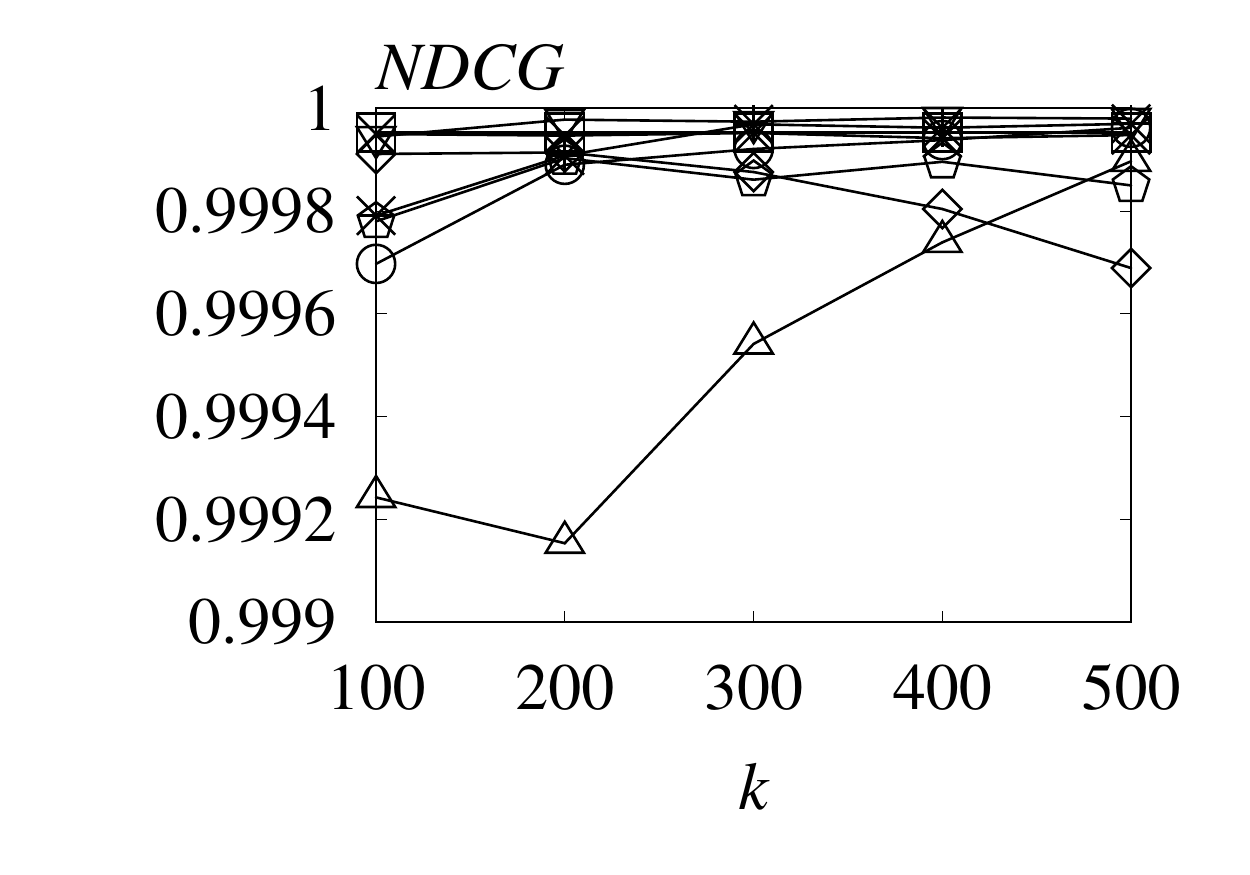}\\
			\hspace{-2mm} (c) Orkut &
			\hspace{-2mm} (d) Twitter \\
		\end{tabular}
		\caption{Top-$k$ SSPPR query NDCG: varying $k$.} \label{exp:topk-ndcg}
	\end{small}
\end{figure*}

\subsection{Top-$\boldsymbol k$ SSPPR Queries} \label{sec:exp-topk}
In our second set of experiments, we evaluate the efficiency and accuracy of each method for top-$k$ SSPPR queries. For our methods, we only include our {\em FORA} and {\em FORA+} that includes the optimizations presented in Section \ref{sec:topk-running-time-guarantee} to avoid the figures being too crowded. We will examine the effectivess of our optimization techniques in the next set of experiments.

{\cblue
\subsubsection{Top-$\boldsymbol k$ query efficiency}

Figures \ref{exp:topk-time} reports the average query time of each method on four representative datasets: {\em DBLP}, {\em Pokec}, {\em Orkut}, and {\em Twitter}. (The results on the other two datasets are qualitatively similar, and are omitted due to the space constraint.) Note that the y-axis is in log-scale.
Recall that {\em Forward Push} provides no approximation guarantee, and we tune the $\rmax$ on each dataset separately, so that it provides the same precision for top-500 SSPPR queries as our {\em \ssppr} algorithm does. Similarly, for TPA \cite{yoon2018tpa}, we tune their parameters so that it provides the best possible precision.

The main observation is that our {\em \ssppr+} achieves the best performance on all the datasets. For instance, on Twitter dataset with $k=500$, our {\em \ssppr+} provides similar or better accuracy and NDCG as competitors, and runs 2x faster than TopPPR, two order of magnitude faster than MC-Topk, and more than 500x faster than Forward Push, TPA, BiPPR, and HubPPR.
This is expected since our FORA+ applies an iterative approach to refine the top-$k$ answers and terminates immediately whenever the answer could provide the desired approximation guarantee; it further uses the indexing scheme to reduce the expensive costs of random walks.

Our FORA is the fastest online algorithm except TopPPR since TopPPR uses a more advanced filter-refinement paradigm to answer the top-$k$ queries. However, their proposed approach does not benefit from indexing scheme, and is outperformed by our FORA+. It is also difficult for TopPPR to benefit from indexing scheme since {\em (i)} their random walk adopts the $\sqrt{\alpha}$-random walk, and all the nodes visited will be used to estimate the PPR scores; {\em (ii)} the sources of random walks in Monte-Carlo phase are generated randomly in TopPPR and it may results in poor cache performances. In contrast, FORA+ only need to scan the index structures in order and only stores the destinations in index structure, making it light-weighted and cache-friendly.

Another observation is that after extending the MC approach with our top-$k$ algorithm, MC-Topk achieves more than 20x speedup on Twitter over its single-source alternative. This further demonstrates the effectiveness of our top-$k$ algorithm proposed in Section \ref{sec:topk-running-time-guarantee}, which improves the time complexity to $O(\frac{\log{n}}{\epsilon^2\cdot \pi(s,v_k^*)})$. However, our FORA and FORA+ are still far more efficient than MC-Topk since FORA and FORA+ explores the forward push to reduce the random walk costs.

Finally, with our index structure, {\em \ssppr+} further improves over {\em \ssppr} by an order of magnitude, which demonstrates the effectiveness of the index structure. Notably, on the Twitter social network with 1.5 billion edges, our {\em FORA+} can answer the top-500 query in 0.8 second.

\subsubsection{Top-$\boldsymbol k$ query accuracy}

To compare the accuracy of the top-$k$ results returned by each method, we first calculate the ground-truth answer of the top-$k$ queries using the {\em Power Iteration} \cite{page1999pagerank} method with 100 iterations. Afterwards, we evaluate the top-$k$ results of each algorithm by their precision and NDCG \cite{JarvelinK00} with respect to the ground truth.
Note that the precision and recall are the same for the top-$k$ SSPPR queries, and is the fraction of nodes returned by the top-$k$ algorithm that are real top-$k$ nodes.  For NDCG, let $s$ be the query node, $v_1,v_2,\cdots, v_k$ be the $k$ nodes returned by the top-$k$ algorithm, and $v_1^*, v_2^*,\cdots, v_k^*$ be the true top-$k$ nodes. Then, the NDCG is defined as $\frac{1}{Z_k}\sum_{i=1}^k{\frac{2^{\pi(s,v_i)}-1}{\log{(i+1)}}}$, where $Z_k=\sum_{i=1}^k{\frac{2^{\pi(s,v_k^*)-1}}{\log{(i+1)}}}$.

Figure~\ref{exp:topk-acc} (resp. Figure~\ref{exp:topk-ndcg}) show the accuracy (resp. NDCG) of the top-$k$ query algorithms on four datasets: {\em DBLP}, {\em Pokec}, {\em Orkut}, and {\em Twitter}. Observe that all methods except TPA consistently provide high precisions. In the meantime, notice that our FORA+ consistently provides similar precision as TopPPR and sometimes even slightly better precision than TopPPR on all the tested datasets. In terms of NDCG, all methods achieve very high NDCG scores, above 0.999 on all datasets.

\subsubsection{Scalability of Top-$\boldsymbol k$ algorithms}
In this section, we examine the scalability of our FORA+ using synthetic datasets. We first generate a synthetic dataset with the same size as Twitter and then generate a graph with 8.6 billion edges. The results are reported in Table \ref{tbl:rmat-scalability}. As we can see, our FORA+ achieves 3x improvement over TopPPR on both datasets while providing the same accuracy and NDCG. As we will see in Section \ref{sec:exp-preprocessing}, the space consumption of FORA+ is only 2x and 1.5x that of TopPPR on RMAT-1 and RMAT-2, respectively. This demonstrates that FORA+ achieves a better trade-off between query efficiency and space consumption when providing the same accuracy for the top-$k$ queries.
}

\subsection{Effectiveness of the Top-$\boldsymbol k$ optimization}
In this set of experiment, we evaluate the effectiveness of the new top-$k$ algorithm proposed in Section \ref{sec:topk-running-time-guarantee}, which provides improved time complexity, and reduces the time to calculate the bounds for each node. We present the results for 6 representative datasets: {\em DBLP}, {\em Pokec}, {\em Orkut}, {\em Twitter}, {\em X2}, and {\em X3}.

Figures \ref{exp:top-opt-time}(a)-(f) demonstrate the running time of {\em FORA} and {\em FORA+}, against the versions without the optimization techniques in Section \ref{sec:topk-running-time-guarantee}, referred to as {\em FORA-Basic} and {\em FORA-Basic+} for the index-free and index-based solution, respectively. As we can observe, with the new algorithm, {\em FORA} (resp. {\em FORA+} improves over {\em FORA-Basic} (resp. {\em FORA-Basic+}) by a large margin. For instance, on Twitter dataset, {\em FORA} (resp. {\em FORA+}) improves over {\em FORA-Basic} (resp. {\em FORA-Basic+}) by around 4x (resp. 6x). The main reason for the significant improvement are two-fold: (i) the time complexity of {\em FORA} and {\em FORA+} depend on $\frac{1}{\pi(s,v^*_k)}$ while {\em FORA-Basic} and {\em FORA-Basic+} depend on $\frac{1}{n}$; (ii) {\em FORA} and {\em FORA+} avoids the expensive bound calculation part that are required by {\em FORA-Basic} and {\em FORA-Basic+}.

Next, we report the accuracy of the four methods on the 6 datasets as shown in Figures \ref{exp:top-opt-acc}(a)-(f). As we can observe, all four methods provide similarly high accuracy for the top-$k$ queries on all datasets. To explain, all four methods share the similar spirit by adaptively refine the top-$k$ answer and return the approximate answer with theoretical guarantees. Therefore, all the four methods provide similarly high accuracy.

\begin{table}[!t]
	\centering
	\caption{Scalability test on synthetic datasets. ($K=500$)}
	\label{tbl:rmat-scalability}
	\begin{tabular}{ | l | r | r|r|r|r|r|}
		\hline
		\multirow{2}{*}{}&  \multicolumn{2}{c|}{\bf Query Time} & \multicolumn{2}{c|}{\bf Precision} & \multicolumn{2}{c|}{\bf NDCG}\\
		\cline{2-7}
        & {RMAT-1} & {RMAT-2} & {RMAT-1} & {RMAT-2} & {RMAT-1} & {RMAT-2} \\ \hline
		{\em FORA+}   & 17.4 & 149.5	& 0.993 &	0.995 &0.9999 & 0.9999 \\
		\hline
		{\em TopPPR}   &  51.1&  481.0	& 0.993 & 0.995	 & 0.9999 & 0.9999 \\
		\hline
	\end{tabular}
\end{table}

\begin{figure*}[!t]
 \centering
    \begin{small}
    \begin{tabular}{cc}
          \multicolumn{2}{c}{\hspace{-2mm} \includegraphics[height=3mm]{./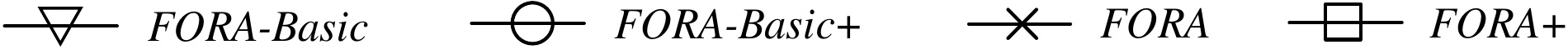}}  \\[0mm]
        \hspace{-2mm} \includegraphics[height=35mm]{./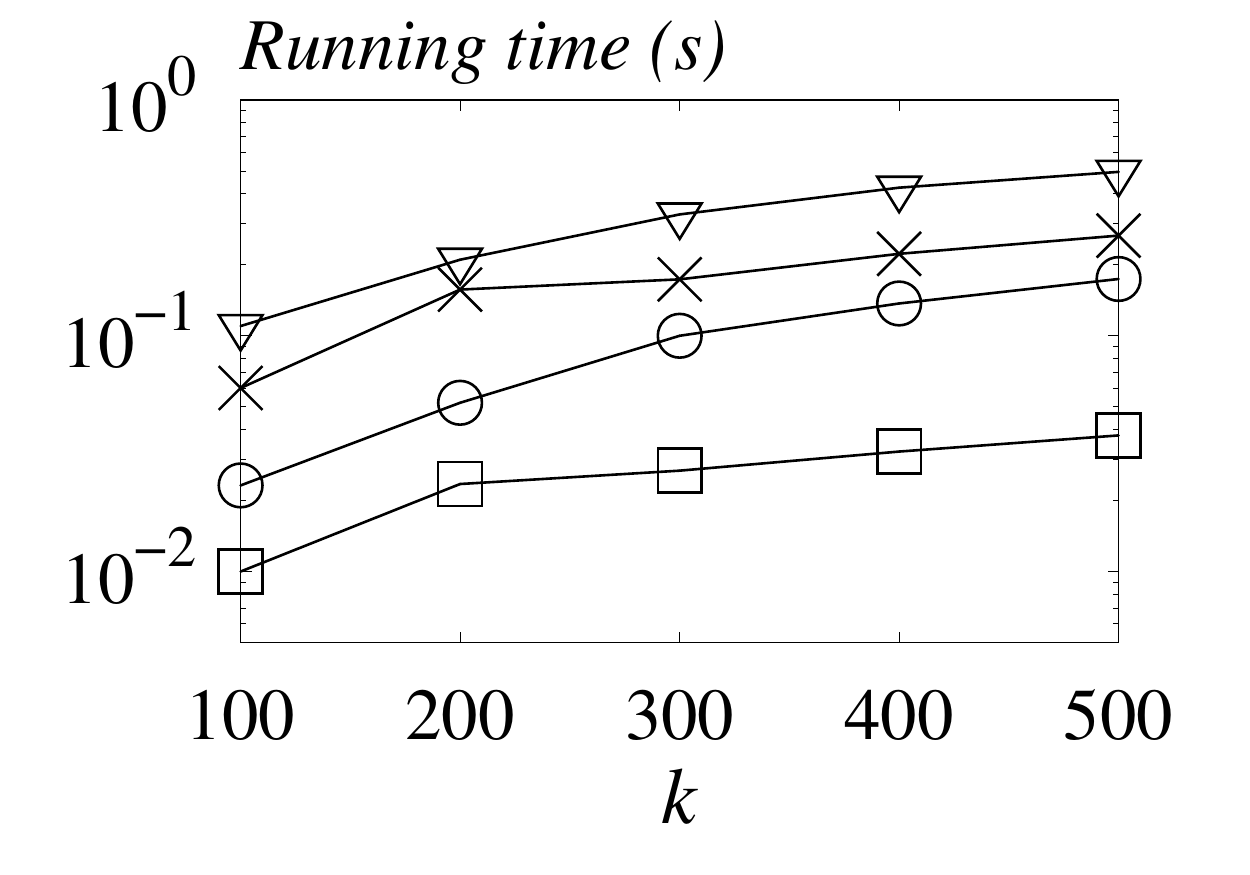} &
        \hspace{-2mm} \includegraphics[height=35mm]{./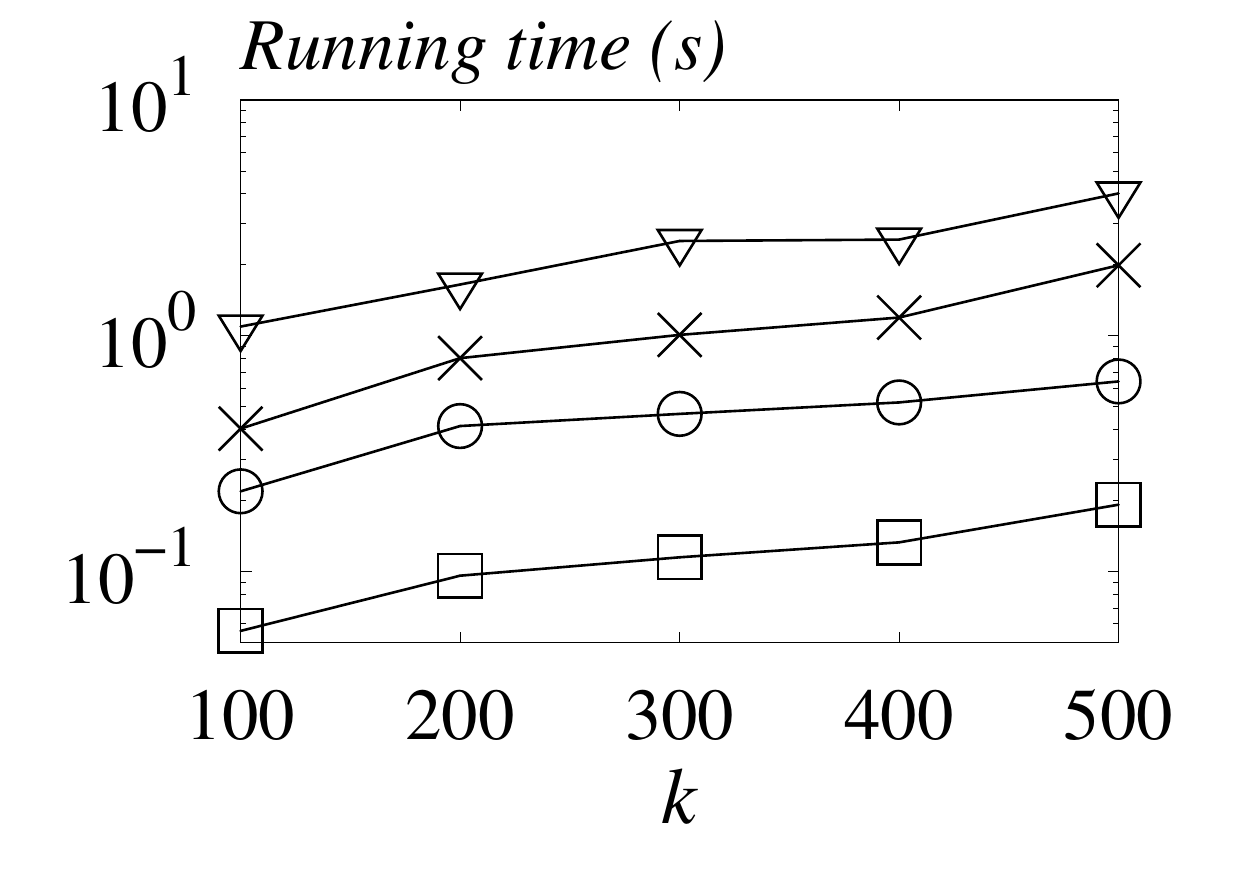} \\
        \hspace{-2mm} (a) DBLP  &
        \hspace{-2mm} (b) Pokec \\
        \hspace{-2mm} \includegraphics[height=35mm]{./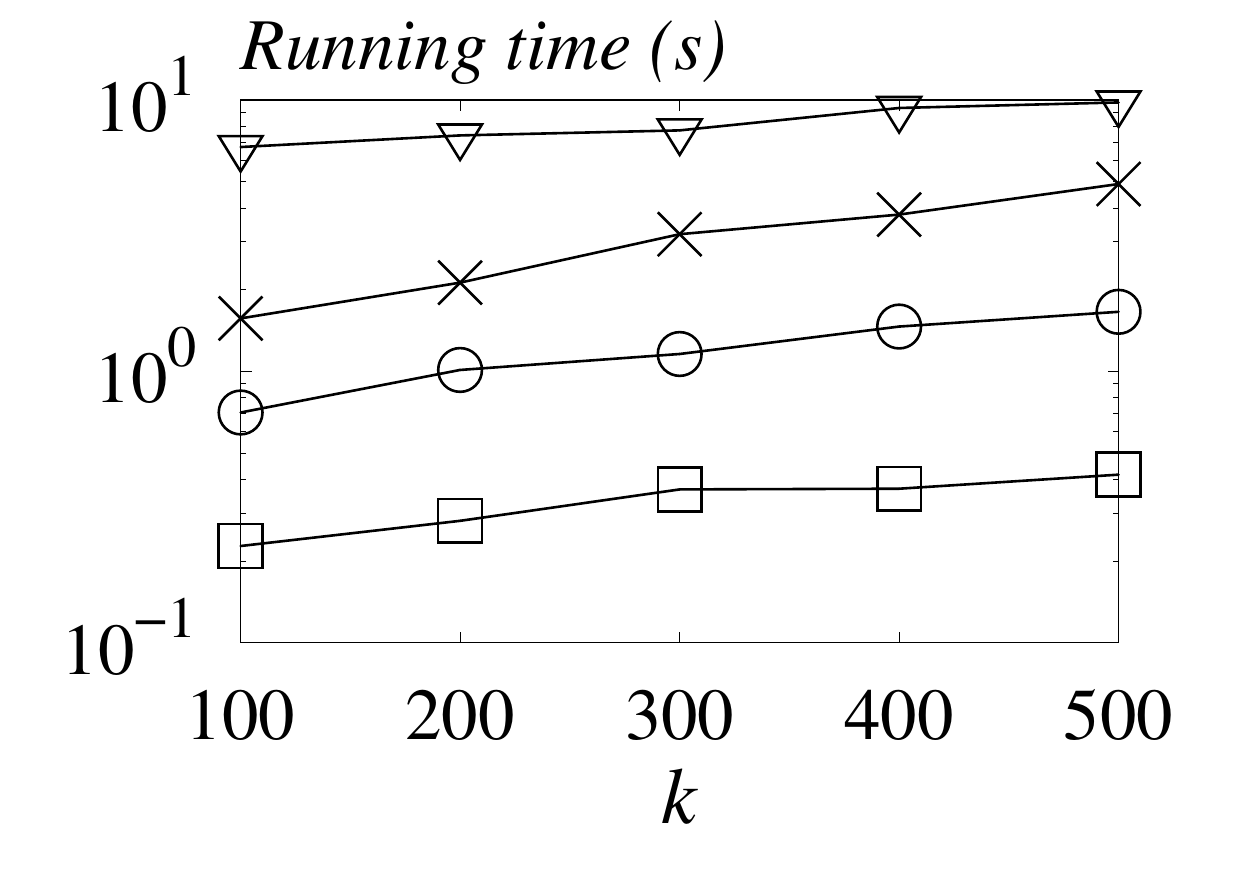} &
        \hspace{-2mm} \includegraphics[height=35mm]{./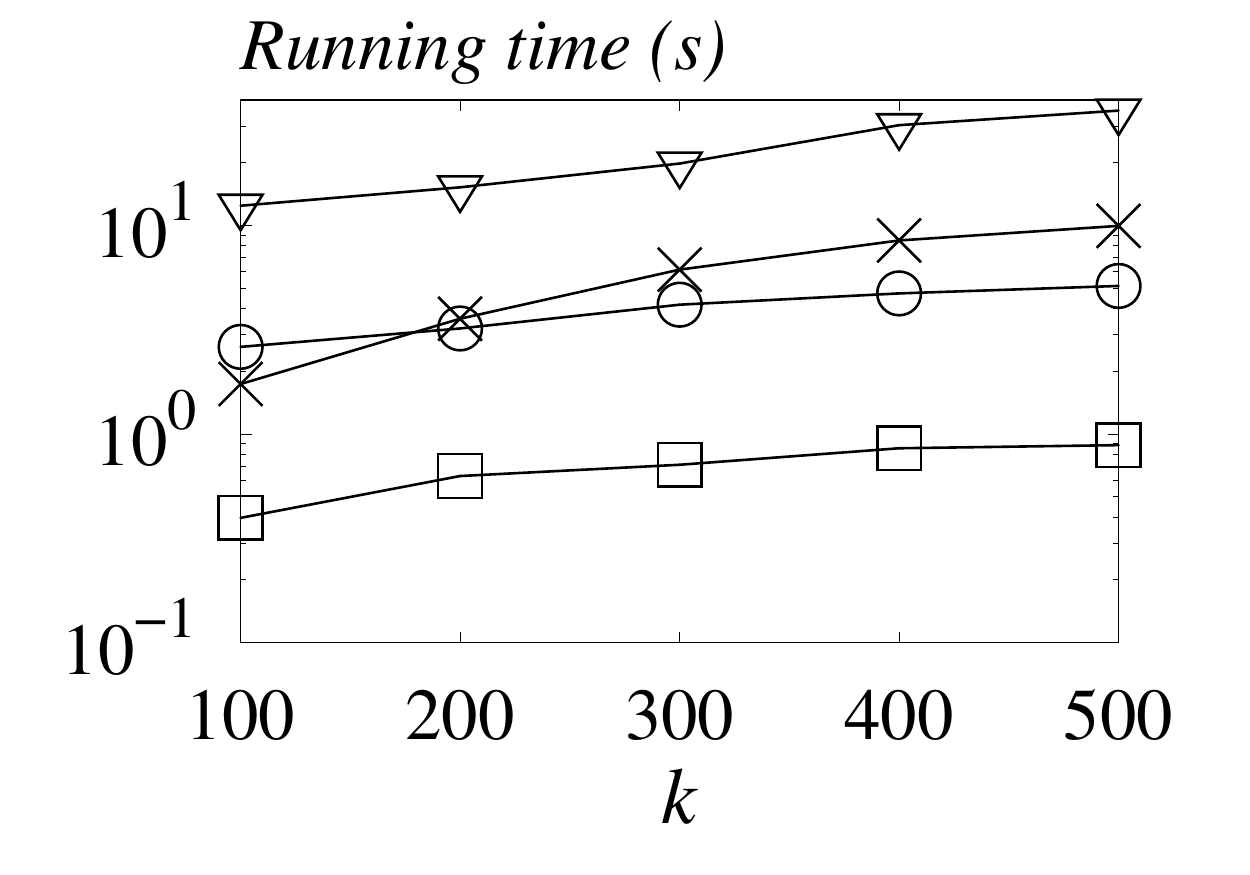} \\
        \hspace{-2mm} (c) Orkut &
        \hspace{-2mm} (d) Twitter \\
        \hspace{-2mm} \includegraphics[height=35mm]{./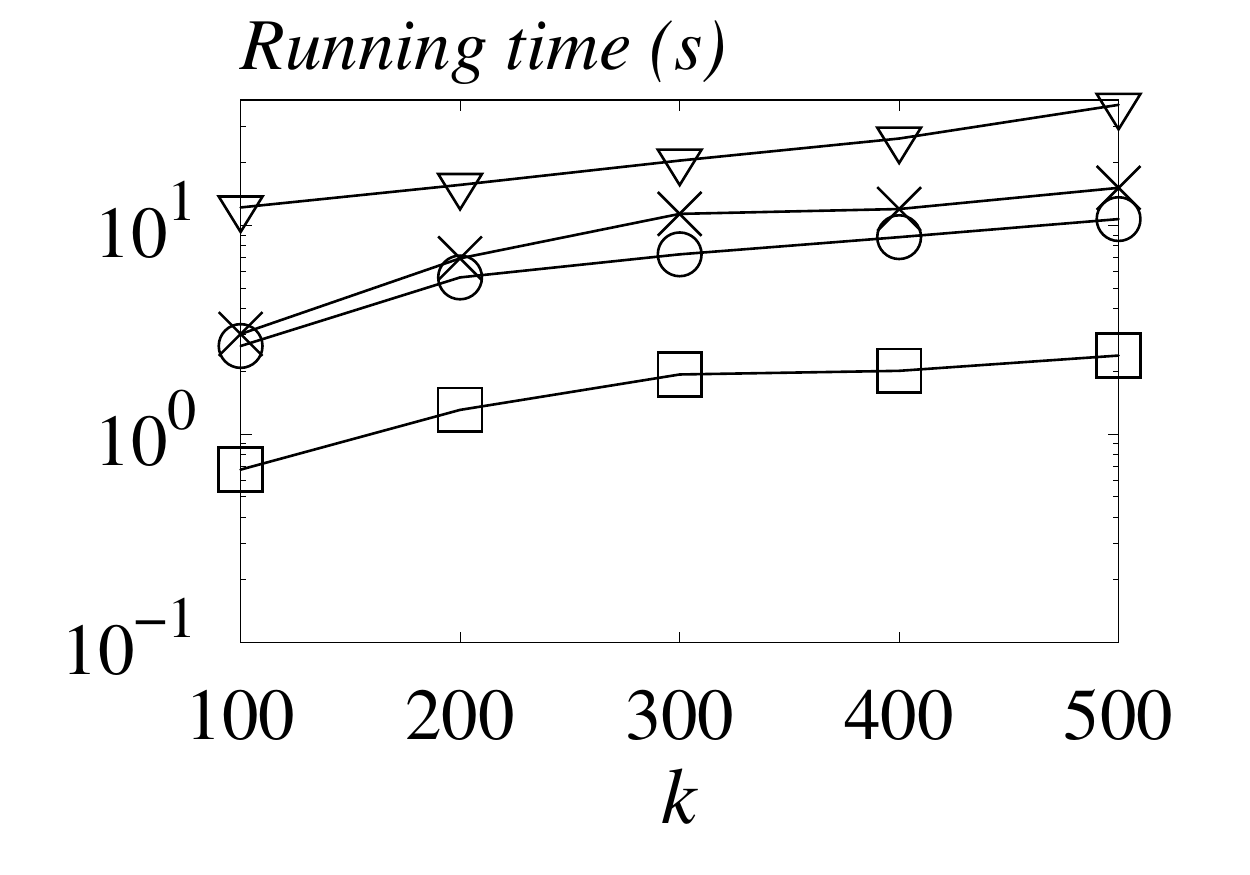} &
        \hspace{-2mm} \includegraphics[height=35mm]{./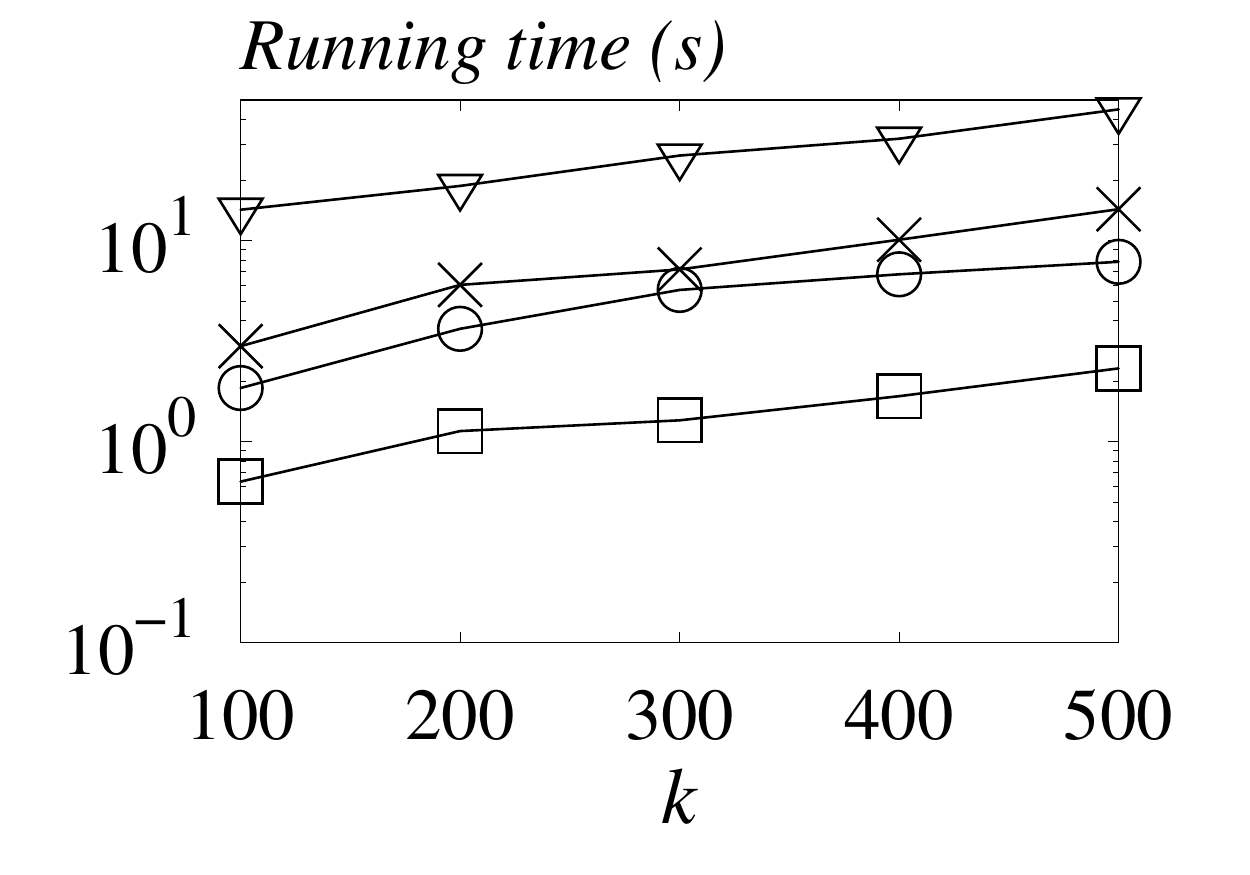} \\
        \hspace{-2mm} (e) X2 &
        \hspace{-2mm} (f) X3 \\

 \end{tabular}
 \caption{Effectiveness of top-$k$ Optimization: query efficiency.} \label{exp:top-opt-time}
\end{small}
\end{figure*}

\begin{figure*}[!t]
 \centering
\begin{small}
    \begin{tabular}{cc}
      \multicolumn{2}{c}{\hspace{-2mm} \includegraphics[height=3mm]{./figure/topk-opt/topk-opt-algo-legend-eps-converted-to.pdf}}  \\[0mm]
        \hspace{-2mm} \includegraphics[height=35mm]{./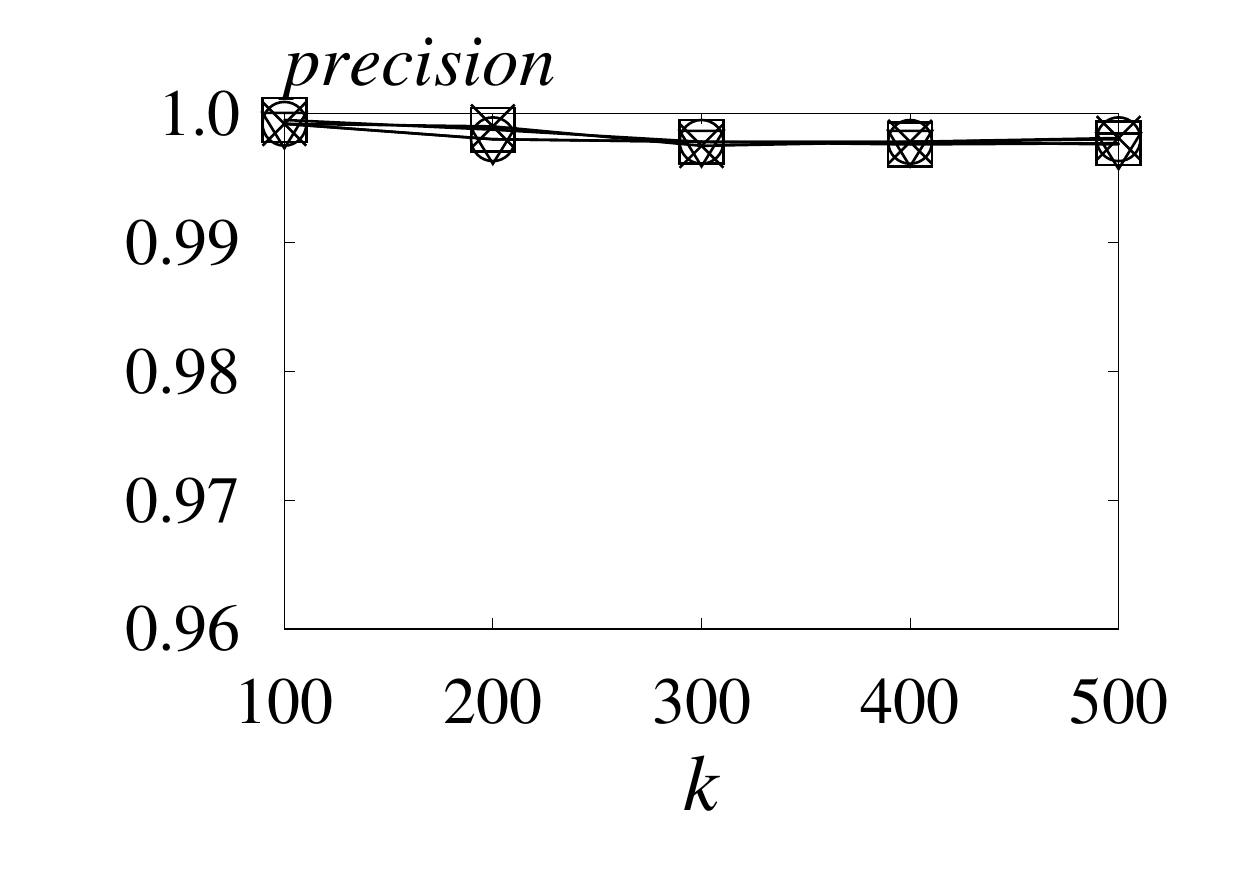} &
        \hspace{-2mm} \includegraphics[height=35mm]{./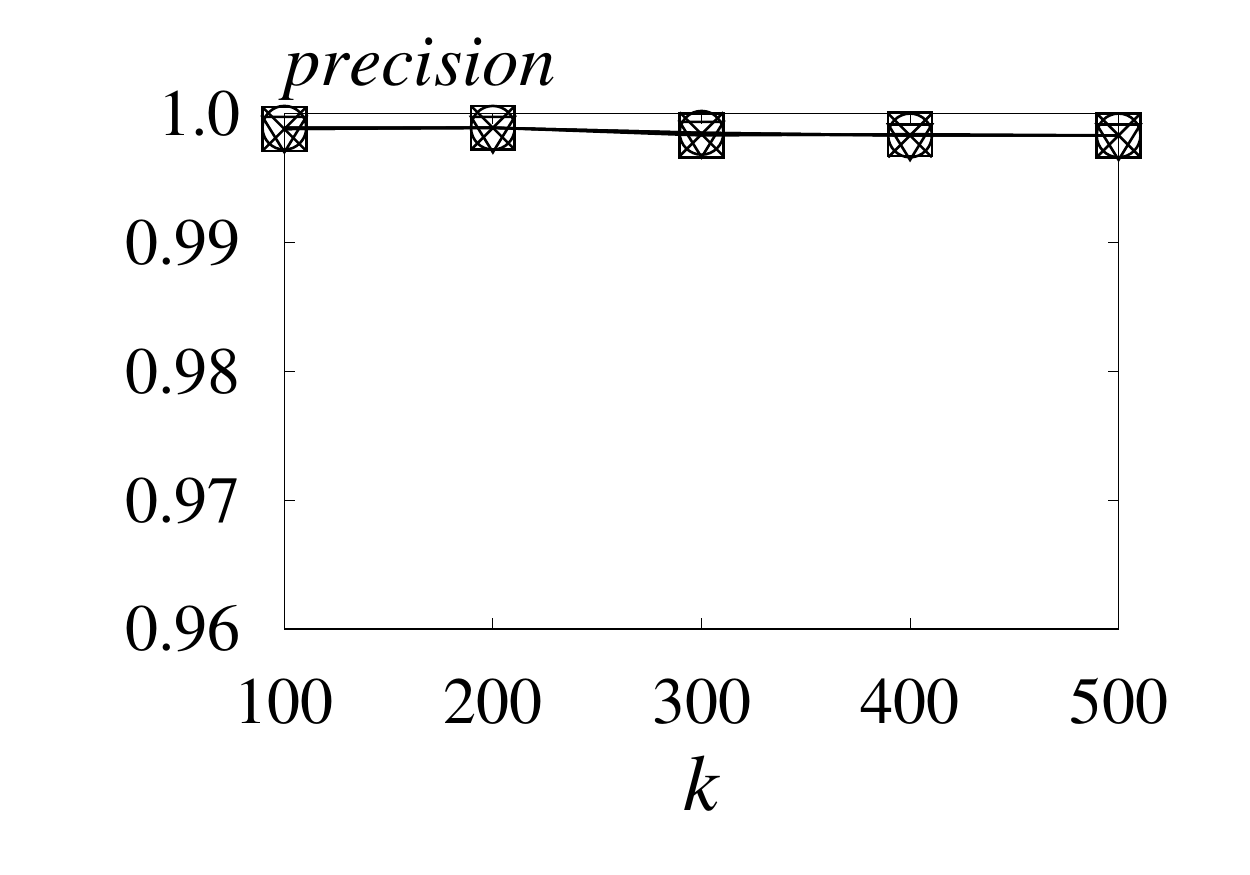}\\
        \hspace{-2mm} (a) DBLP  &
        \hspace{-2mm} (b) Pokec \\
        \hspace{-2mm} \includegraphics[height=35mm]{./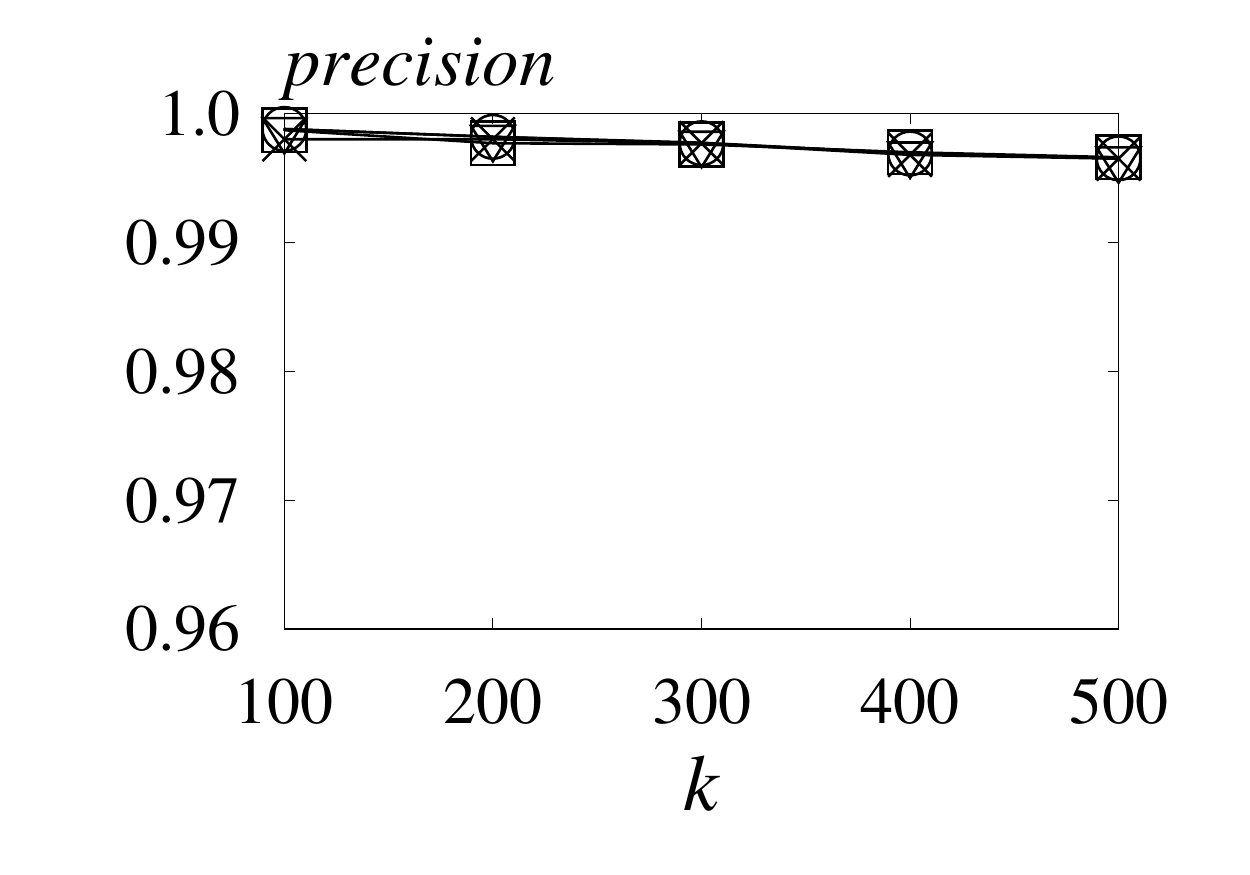} &
        \hspace{-2mm} \includegraphics[height=35mm]{./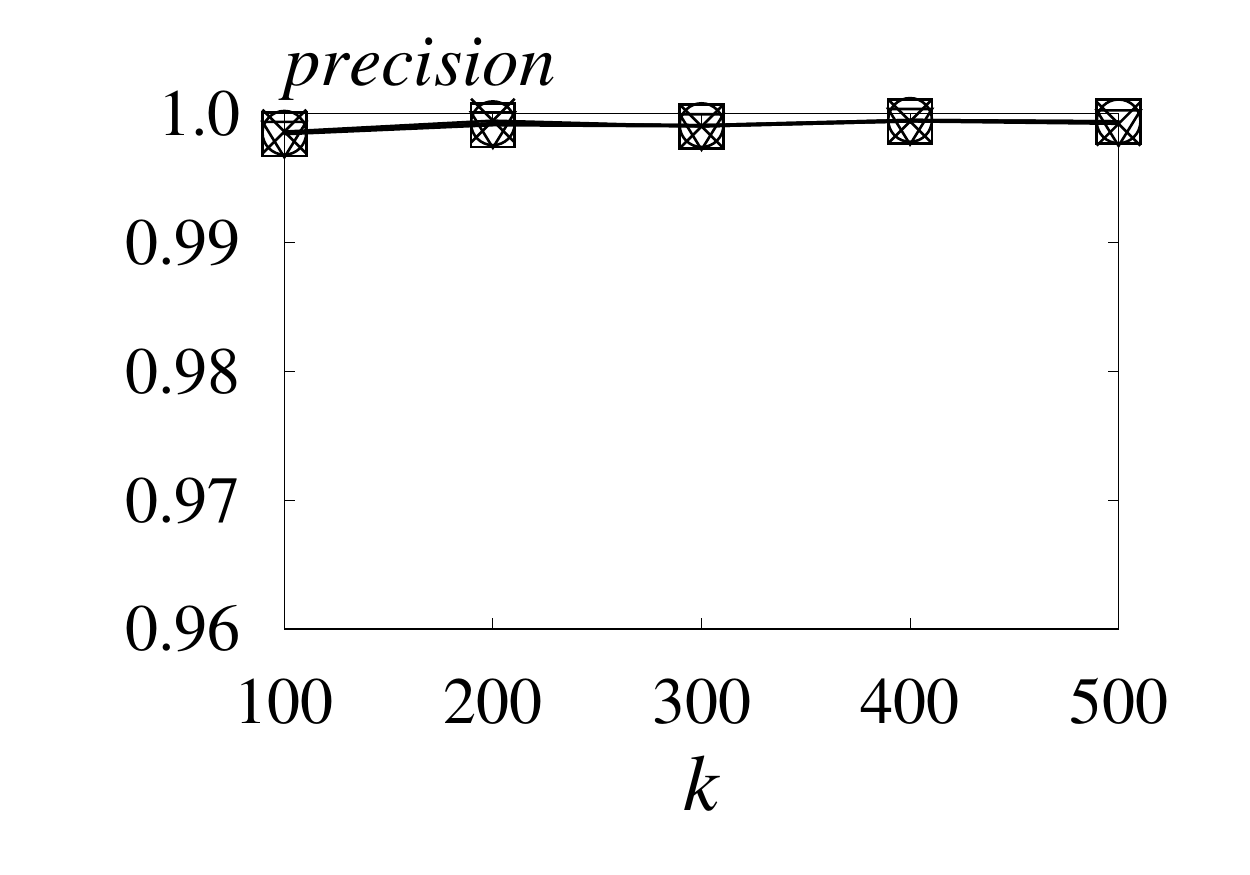}\\
        \hspace{-2mm} (c) Orkut &
        \hspace{-2mm} (d) Twitter \\
        \hspace{-2mm} \includegraphics[height=35mm]{./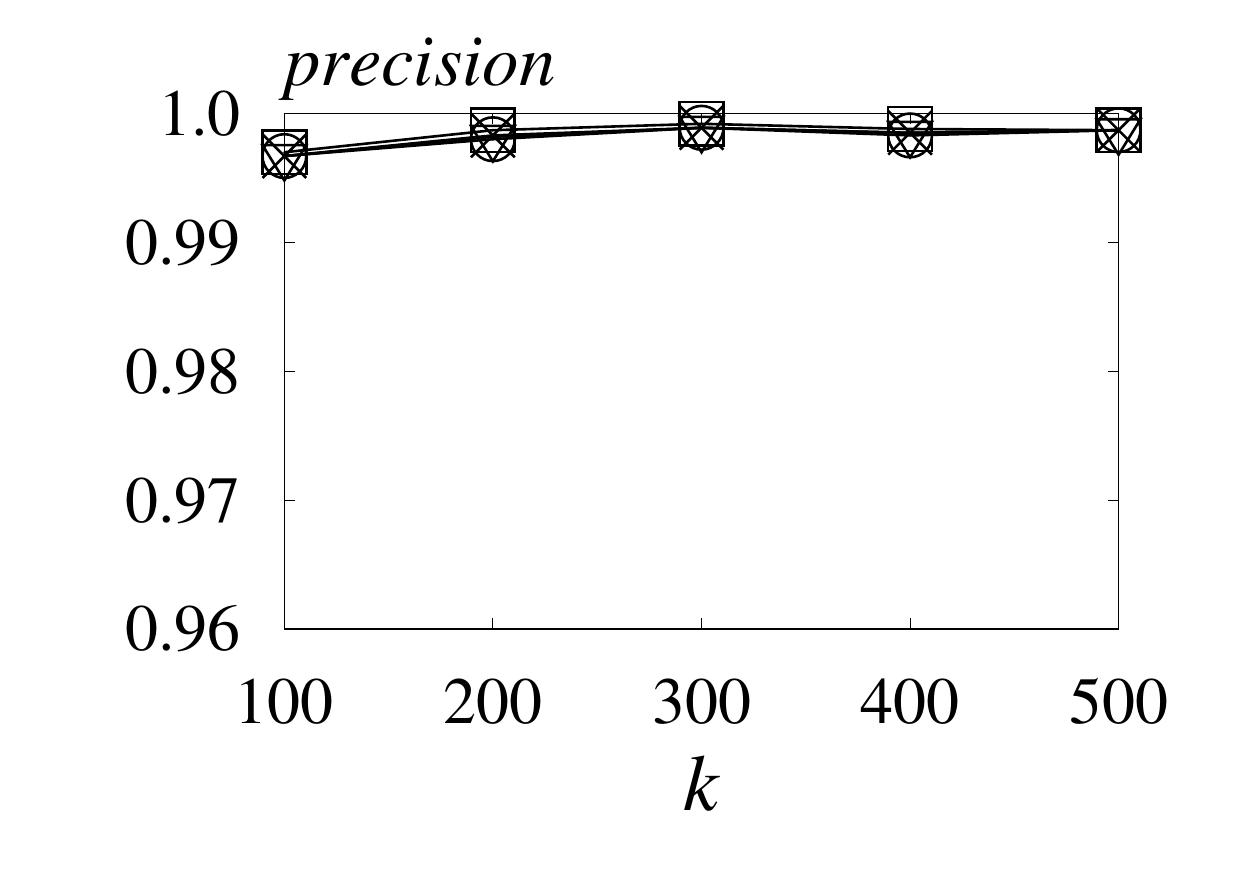} &
        \hspace{-2mm} \includegraphics[height=35mm]{./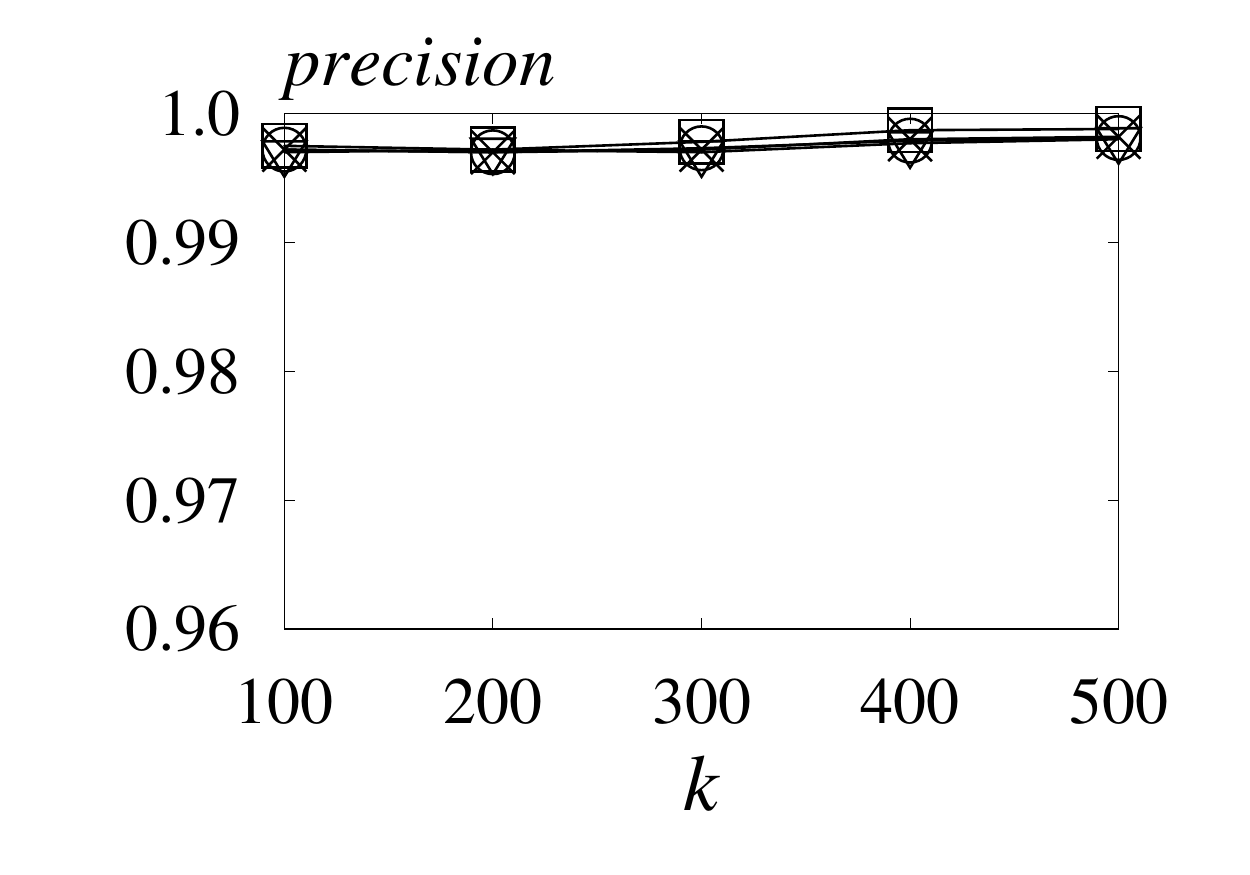}\\
        \hspace{-2mm} (e) X2 &
        \hspace{-2mm} (f) X3 \\
 \end{tabular}
 \caption{Effectiveness of top-$k$ Optimization: query accuracy.} \label{exp:top-opt-acc}
\end{small}
\end{figure*}

\subsection{Preprocessing Costs}\label{sec:exp-preprocessing}

\begin{table}[!t]
\vspace{0mm}
\centering
\caption{Preprocessing time.}
\label{tbl:exp-preprocessing-time}
  \begin{tabular}{ | p{0.6in} | p{0.6in} |p{0.4in}|p{0.8in}|p{1.3in} |}
    \hline
     {\bf Datasets}  & {\em HubPPR}& {\em TPA} & {\em \ssppr+ (for whole graph)} &{\em \ssppr-Basic+} \& {\em \ssppr+ (for top-$k$)} \\
    \hline
 {\em DBLP} & 26.4& 4.4 & 7.6	& 3.6	  \\
    \hline
{\em Web-St}  &  9.3& 1.3 & 3.2 &1.5  \\
    \hline
 {\em Pokec} &90.6& 42.5 & 77.1	&34.6 \\
    \hline
 {\em LJ}& 279.9& 112.5 & 181.3	&81.2   \\
    \hline
{\em Orkut} &530.6& 177.3& 383.5 & 165.9\\
    \hline
{\em Twitter}   &5088.6& 3112.3 & 5130.4	& 2255.4	\\
    \hline\hline

    {\em RMAT-1} &-& -& -& 2695.2 \\
    \hline
{\em RMAT-2}   &-& - & -	& 12076.8	\\
    \hline\hline
{\em X0}   &-&- & 2031.2	&967.14	\\
\hline
{\em X1}   &-&- & 5244.5	& 2428.6	\\
\hline
{\em X2}   &-&- & 5721.7	& 2602.9	\\
\hline
{\em X3}   &-&- & 8836.4	& 4213.5	\\
\hline
  \end{tabular}
\end{table}

Finally, we inspect the preprocessing costs of our methods against alternatives. We first examine the preprocessing time of the index-based methods: {\em \ssppr+}, {\em \ssppr-Origin+}, {\em HubPPR}, and {\em TPA}. Note that the $\rmax$ of {\em FORA+} for whole-graph and top-$k$ queries are different, and therefore the preprocessing time are shown as seperately in Table \ref{tbl:exp-preprocessing-time}. The choice of tuned $\rmax$ of {\em FORA+} is the same as that of {\em FORA-Basic}, and therefore their preprocessing time are the same.
As shown in Table \ref{tbl:exp-preprocessing-time}, the preprocessing time of {\em \ssppr+}, {\em FORA-Basic+}, {\em HubPPR}, and {\em TPA} are all moderate. On the largest dataset {\em X3}, our {\em FORA+} for whole graph SSPPR queries (resp. for top-$k$ SSPPR queries) can still finish index construction in less than 3 hours (resp. less than 1 and half an hour), which is more than compensated by its high query performance as shown in Table \ref{tbl:ppr-query}. Besides, this preprocessing time can be further significantly reduced by parallelizing the index construction process.

\begin{table}[!t]
\vspace{0mm}
\centering
\caption{Space consumption.}
\label{tbl:exp-space-consumption}
  \begin{tabular}{|  p{0.6in}|p{0.65in}| p{0.85in} |p{0.4in}|p{0.4in}|p{1.0in}|p{0.8in} |}
    \hline
      {\bf Datasets}   & {\em MC}/{\em FORA} & {\em BiPPR/TopPPR}&{\em HubPPR} & {\em TPA} &  {\em FORA-Basic+} \& {\em FORA+} (for top-$k$) & {\em FORA+} (for whole-graph) \\
    \hline
 {\em DBLP} & 18.4MB &	36.8MB& 127.6MB	&23.3MB	& 109.2MB & 200.1MB \\
    \hline
{\em Web-St}  &  10.4MB & 20.8MB& 66.3MB  &12.6MB	& 55.9MB& 101.4MB \\
    \hline
 {\em Pokec} &130.8MB &	261.5MB & 673.5MB &143.6MB	& 542.7MB& 954.6MB\\
    \hline
 {\em LJ}& 295.2MB	&590.5MB & 1.7GB &333.6MB	&1.4GB & 2.5GB \\
    \hline
{\em Orkut} &950.1MB	&1.9GB& 3.5GB &974.9MB	& 2.6GB& 4.2GB \\
    \hline
{\em Twitter}   &6.2GB &12.5GB&  25.1GB	&6.5GB& 18.8GB & 31.4GB	\\
    \hline\hline
{\em RMAT-1}    &6.2GB &12.5GB&  -	&-& 20.0GB & -	\\\hline
{\em RMAT-2}    &33.3GB &66.5GB&  -	&-& 96.3GB & -	\\
\hline\hline
{\em X0} &  2.0GB	&-& - &-	& 7.5GB& 13.1GB \\
    \hline
{\em X1} &3.2GB	&-& - &-	& 13.2GB& 23.2GB \\
    \hline
{\em X2} &4.6GB	&-& - &-	& 17.5GB& 30.4GB \\
    \hline
{\em X3} & 6.0GB	&-& - &-	& 22.6GB& 39.2GB \\
    \hline
  \end{tabular}
\end{table}

Next, we examine the space consumption of all methods. Note that each method will need to store at least a copy of the graph. For {\em MC}, {\em FORA}, and {\em TPA}, they can store only a single copy of the input graph (each node stores the out-neighbor list). However, for {\em HubPPR}, {\em BiPPR} and {\em TopPPR}, they need to store two copies of the graph (each node stores an out-neighbor and an in-neighbor list) for efficient random walks and backward propagation.

Table \ref{tbl:exp-space-consumption} reports the space consumption of all methods. As we can observe, the index size of our {\em \ssppr+} for whole-graph queries and top-$k$ queries are no more than 7.5x and 4x of the original graph, respectively. The space consumption is more than compensated by the superb performance of {\em FORA+}, where we can answer a whole-graph query within 30 seconds and  top-$k$ query within 0.9 second on the 1.5 billion edge Twitter graph, which improves over the index-free {\em FORA} by more than an order of magnitude on almost all datasets. This demonstrates the effectiveness and efficiency of our indexing scheme.
{\cblue

Compared to TopPPR, FORA+ achieves 2x, 3x, and 3.4x improvement on Twitter, RMAT-1, and RMAT-2, with only 1.5x, 1.6x, and 1.4x space consumption, respectively.
}
The experimental results demonstrate that our proposed methods achieve a good trade-off between the space consumption and query efficiency when providing query answers with identical accuracy.

\begin{figure*}[!t]
	\centering
	\begin{small}
		\begin{tabular}{cc}
    \hspace{-4mm} \includegraphics[height=2.4mm]{./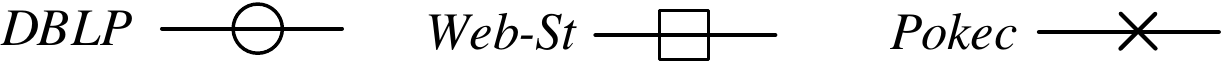} & \hspace{8mm}\includegraphics[height=2.4mm]{./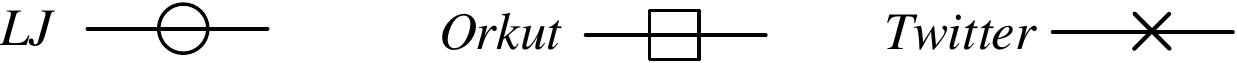} \\[0mm]

			\hspace{-2mm} \includegraphics[height=35mm]{./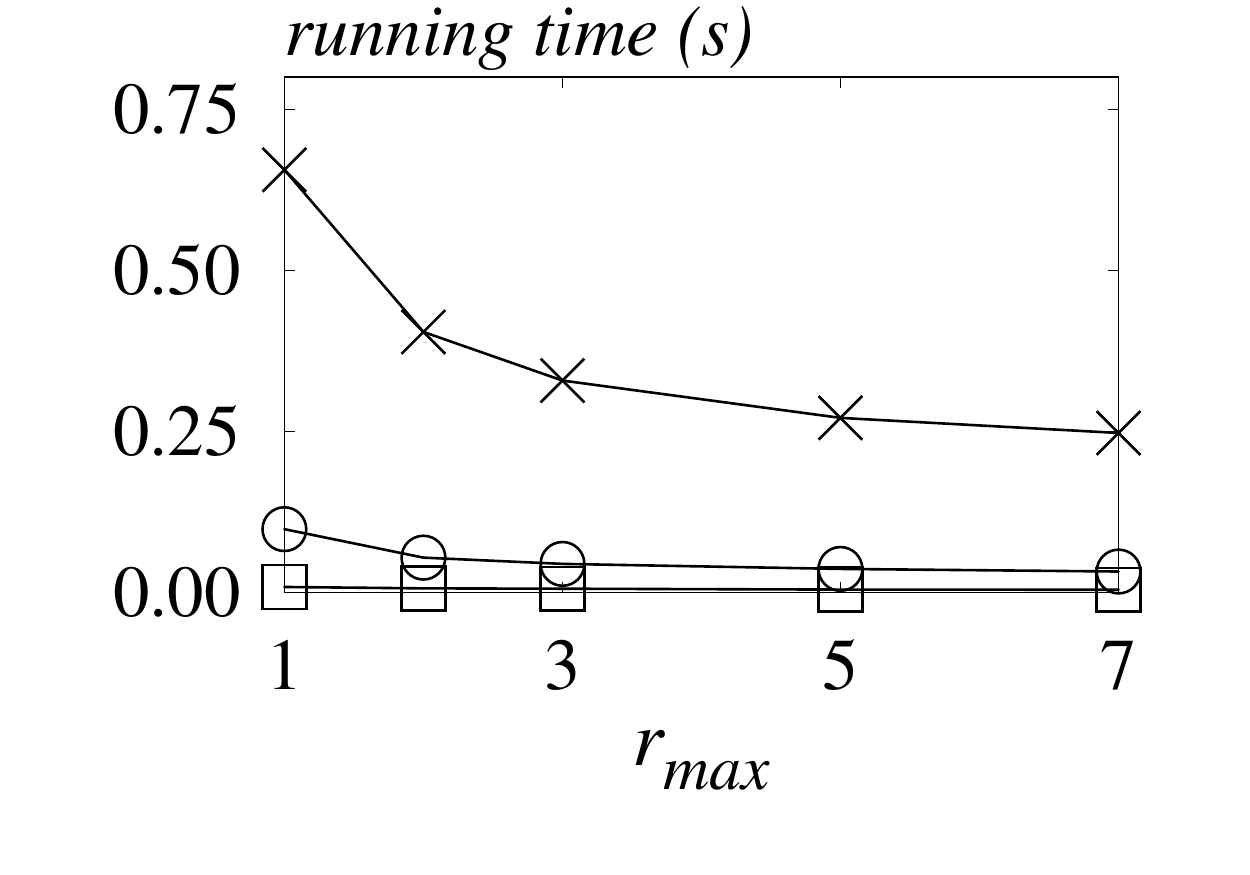} &
			\hspace{-2mm} \includegraphics[height=35mm]{./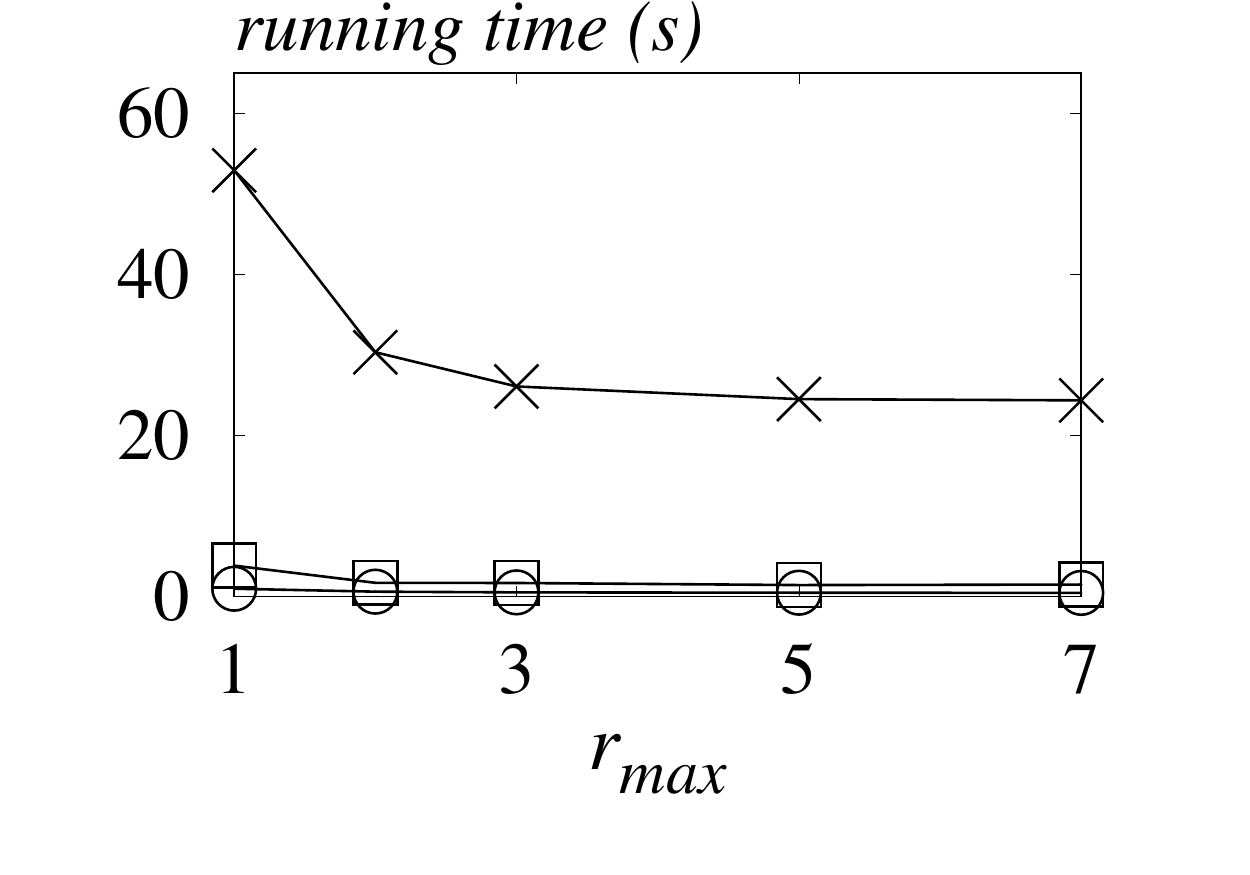}\\
			\hspace{-2mm} (a) &
			\hspace{-2mm} (b) \\
		\end{tabular}
		\caption{Tunning $\rmax$ for whole-graph SSPPR queries ($\times\rmax^*$)} \label{exp:ssppr-tune-rmax}
	\end{small}
\end{figure*}

\begin{figure*}[!t]
 \centering
    \begin{tabular}{cc}
    \hspace{-4mm} \includegraphics[height=2.4mm]{./figure/topk-tune-rmax/tune-rmax-legend1-eps-converted-to.pdf} & \hspace{8mm}\includegraphics[height=2.4mm]{./figure/topk-tune-rmax/tune-rmax-legend2-eps-converted-to.pdf} \\[0mm]
        \hspace{-4mm}  \includegraphics[height=35mm]{./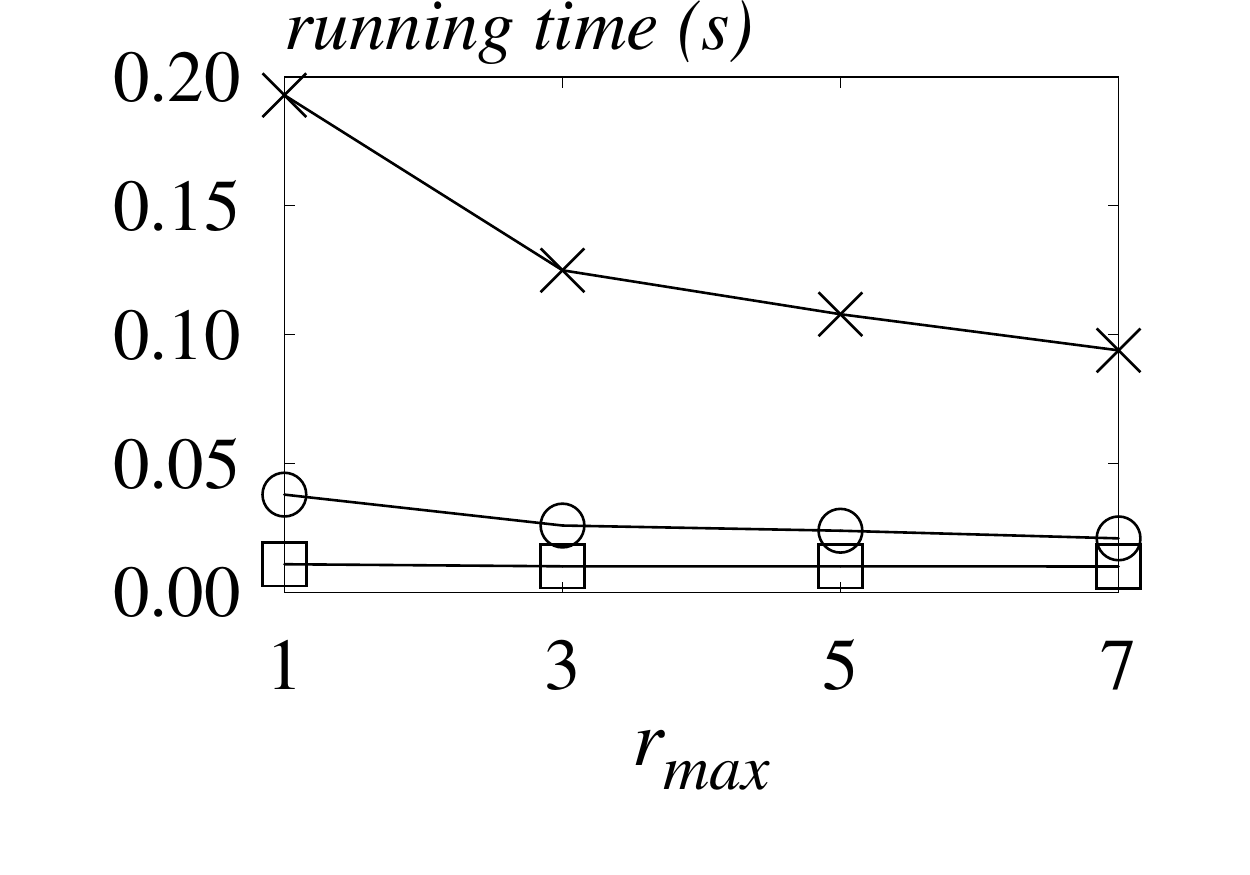} &
         \hspace{8mm} \includegraphics[height=35mm]{./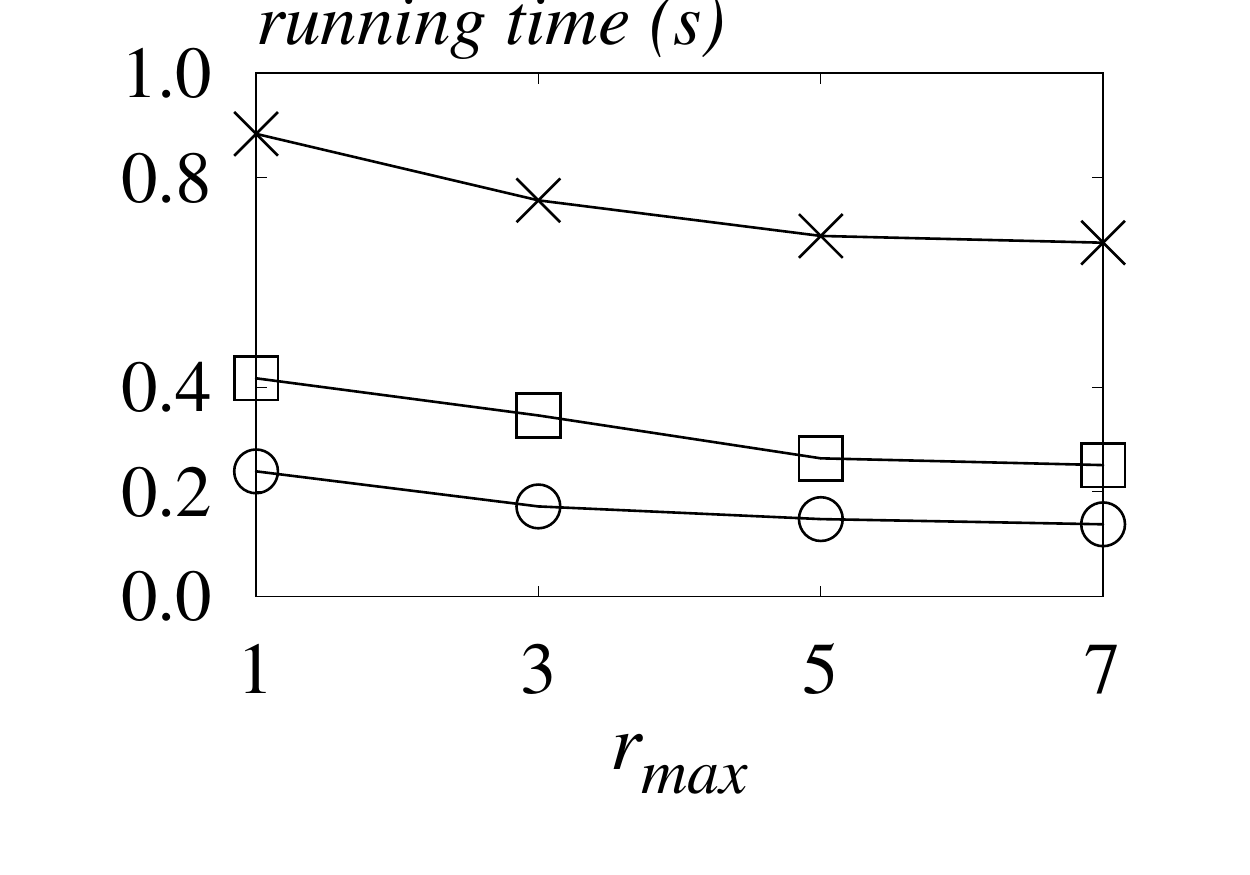}\\
        \hspace{-4mm} (a)   &
        \hspace{8mm}(b)  \\

 \end{tabular}
 \caption{Tunning $\rmax$ for top-$k$ SSPPR queries ($\times\rmax^*$)}. \label{exp:top-tune-rmax}
\end{figure*}

\subsection{Tuning $\rmax$}

Finally, we examine the trade-off between the space consumption and query performance of our {\em FORA+} algorithm by tuning $\rmax$ for whole graph queries and top-$k$ queries.  Let $\rmax^*$ be the $\rmax$ value set according to Section \ref{sec:ana}. We then vary $\rmax$ from $\rmax^*$ to $7\rmax^*$. Notice that, the index size is proportional to $\rmax$ and let $y$ be the index size when $\rmax = \rmax^*$. Then, if $\rmax=3\rmax^*$, then the index size will be $3y$. As we can observe, when we increase the index size for {\em FORA+}, the query performance also improves on almost all datasets for both whole-graph queries and top-$k$ queries. To explain, by increasing the index size, {\em FORA+} avoids the expensive random walks and thus saving the query time. However, from the experiment, we can observe that when $\rmax = 2\rmax^*$ (resp. $rmax= \rmax^*$), FORA+ actually achieves the best trade-off between the query performance and space consumption on whole-graph queries (resp. top-$k$ queries), and hence in our experiment, we set $\rmax=2\rmax^*$ (resp. $\rmax=\rmax^*$) for all the whole-graph queries (resp. top-$k$ queries).


\section{Conclusion} \label{sec:conc}

We present {\ssppr}, a novel algorithm for approximate single-source personalized PageRank computation. The main ideas include (i) combining Monte-Carlo random walks with Forward Push in a non-trivial and optimized way (ii) pre-computing and indexing random walk results and (iii) additional pruning based on top-$k$ selection. Compared to existing solutions, {\ssppr} involves a reduced number of random walks, avoids expensive backward searches, and provides rigorous guarantees on result quality. Extensive experiments demonstrate that {\ssppr} outperforms existing solutions by a large margin, and enables fast responses for top-$k$ SSPPR searches on very large graphs with little computational resource. 



\bibliographystyle{ACM-Reference-Format}
\bibliography{sigproc}
\balance
\appendix

\section*{Appendix} \label{sec:app}

\section{Proof of Theorem}\label{app:proofs}

{\bf Proof of Theorem \ref{thm:bounds}.} Given $\omega_j$, we can derive that:
$$\Pr[|\pi(s,t) - \epi(s,t)|\ge \eps\cdot \pi(s,t)]\le 2\cdot \exp\left(-\frac{\eps^2\cdot \omega_j\cdot \pi(s,t)}{2 + 2a\cdot\eps/3}\right).$$
Since $a\le1$ and $\pi(s,t)\ge \rese(s,t)$, and $\pi(s,t) \ge LB_{j-1}(t)$, we can derive that:
\begin{align*}
\Pr[&|\pi(s,t) - \epi(s,t)|\ge \eps\cdot \pi(s,t)]\le \\
&2 \exp\left(-\frac{\eps^2\cdot \omega_j\cdot \max\{ \rese_j(s,v), LB_{j-1}(s,v) \}}{2 + 2\eps/3}\right).
\end{align*}
Let $\pf'$ equal the RHS of the above inequality.
We have $\eps \ge \sqrt{\frac{3\log{(2/\pf')}}{\omega_j \cdot \max\{ \rese_j(s,v), LB_{j-1}(s,v) \}}}.$

By setting $\eps_j = \sqrt{\frac{3\log{(2/\pf')}}{\omega_j \cdot \max\{ \rese_j(s,v), LB_{j-1}(s,v) \}}}$, we can derive that
$ \epi_j(s,v)/(1+\eps_j)\le \pi(s,v) \le \epi_j(s,v)/(1-\eps_j)$  holds with $1-\pf'$ probability.
Similarly, we have
$$
\Pr[|\pi(s,t) - \epi(s,t)|\ge \eps\cdot \pi(s,t)]\le 2 \exp\left(-\frac{\lambda^2\cdot \omega}{\rsum \cdot (2\pi(s,t) + 2 \lambda/3)}\right)
$$
Let $\pf'$ equal the RHS of the above inequality, and note that $\pi(s,t)\le UB_{j-1}(t)$. This helps us derive the designed bound for $\lambda_j$. \done

\header
{\bf Proof of Theorem \ref{thm:layer-topk}.}
We apply a similar technique in \cite{WangTXYL16}. If $LB_j(v'_i)\cdot(1+\eps)> UB_j(v'_i)$. Then, it can be derived that
\begin{align*}
\epi(v'_i)& \le UB_j(v'_i) \le LB_j(v'_i) \cdot(1+\eps) \le (1+\eps)\cdot \pi(s,v'_i).\\
\epi(v'_i)& \ge LB_j(v'_i)\ge UB_j(v'_i)/(1+\eps) \ge (1-\eps)\cdot \pi(s,v'_i).
\end{align*}

Hence, $v'_1,\cdots, v'_k$ satisfy Equation \ref{eqn:topk-self}. Let $v_1,\cdots, v_k$ be the $k$ nodes that have the top-$k$ exact PPR values. Assume that all bounds are correct, then as $LB(v'_l)\ge \delta$, it indicates that the top-$k$ PPR values are no smaller than $\delta$. In this case, all the top-$k$ nodes should satisfy $\epsilon$-approximation guarantee, i.e., they satisfy that $UB(v_i)<(1+\epsilon)\cdot LB(v_i)/(1-\epsilon)$.

Let $UB'_j(1), UB'_j(2),\cdots UB'_j(k)$ be the top-$k$ largest PPR upper bounds in the $j$-th iteration. Note that, the $i$-th largest PPR satisfy that $UB'_j(i) \ge \pi(s, v_i) \ge LB_j(s,v_i')$.

Now assume that one of the upper bounds say $UB'_j(i)$ is not from $UB( v_1),\cdots, UB( v_k)$. If it satisfies that $LB_j(v'_k)\cdot(1+\eps) \ge UB'_j(i)$, it indicates that $LB_j(v'_i) \cdot(1+\eps) \ge UB'_j(i)$ for all nodes. Hence, we update the node whose upper bound is minimum among $UB_j( v'_1),\cdots, UB_j( v'_1)$, we can still guarantee that $LB_j( v'_i)\cdot(1+\eps)> UB_j( v'_i)$. We repeat this process until $UB_j( v'_1),\cdots, UB_j(v'_k)$ are the top-$k$ upper bounds. On the other hand, let $U$ be the set of nodes such that, the node $u\in U$ satisfies that $UB_j(u) < LB_j(v'_k)$. Then it is still possible that these nodes are from the exact top$-$k answers. However, recall that if a node $u\in U$ is from the top-$k$, it should satisfy that $UB(u)<(1+\epsilon)\cdot LB(u)/(1-\epsilon)$. As a result, if there exists no node $u\in U$ such that $UB(u)<(1+\epsilon)\cdot LB(u)/(1-\epsilon)$, then no node $u\in U$ is from the top-$k$ answers, in which case it will not affect the approximation guarantee.

Afterwards, we proceed a bubble sort on the top-$k$ upper bounds in decreasing order. If we replace two upper bounds $UB_j(v'_x)$ and $UB_j(v'_y)$ with $x < y$, then $UB_j(v'_x)<UB_j(v'_y)$.
Also $LB_j(v'_x) > LB_j(v'_y)$ from the definition.
As
\begin{align*}
UB_j(v'_x)/LB_j(v'_y)<UB_j(v'_y)/LB_j(v'_y) \le (1+\eps) \\
UB_j(v'_y)/LB_j(v'_x)< UB_j(v'_y)/LB_j(v'_y) \le (1+\eps)
\end{align*}
When the sort finishes, the inequations still hold. We then have
$$UB'_j(1)\le (1+\eps)\cdot LB_j( v'_i)\cdots, UB'_j(k)\le (1+\eps)\cdot LB_j( v'_k).$$
Also note that
$$
\epi(v'_i) \ge LB_j(v'_i)\ge \frac{1}{1+\eps}UB'_j(i) \ge (1-\eps)\cdot \pi(s,v_i),
$$
for all $i\in [1,k]$. So, the answer provides approximation guarantee if the bounds from the first to the $j$-th iteration are all correct. By applying the union bound, we can obtain that the approximation is guaranteed with probability at least  $1- n\cdot j \cdot\pf'$.

\end{sloppy}

\end{document}